\setlist{nolistsep}
\newcolumntype{C}[1]{>{\centering\let\newline\\\arraybackslash\hspace{0pt}}m{#1}}
\DeclareMathAlphabet{\mathpzc}{OT1}{pzc}{m}{it}
\newtheorem{theorem}{Theorem}\setcounter{theorem}{-1}
\newtheorem{proposition}{Proposition}
\newtheorem{corollary}{Corollary}
\newtheorem{lemma}{Lemma}
\newtheorem{definition}{Definition}
\newenvironment{proof}[1][]
      {\par\medbreak{\noindent\bfseries Proof#1\quad}}
      {\hfill $\blacksquare$\bigbreak}
\def \proba{\mathbb{P}}
\def \marg {{\rm{marg}}}
\def \beq {\begin{eqnarray*}}
\def \eeq {\end{eqnarray*}}
\begin{document}
	\title{Comparing Experiments in Discounted Problems \thanks{We thank audiences at the 2023 Warwick Economic Theory Workshop, 2023 SAET Conference, the 2024 Mediterranean Game Theory Symposium, Collegio Alberto-CEPR workshop, the University of Toronto, UCL, HEC, QMUL, and Institut Henri Poincar\'e for comments and suggestions. XV is a member of the Gruppo Nazionale per l'Analisi Matematica, la Probabilita \`e le loro Applicazioni. XV is part of the Italian MIUR PRIN PNRR project ``Supply Chain Disruptions, Financial Losses and Their Preventions'' (project P2022XT8C8) and the MIUR PRIN 2022 project ``Learning in Markets and Society'' (project 2022EKNE5K).}}
	\date{Latest version: \today \\ First version: June 9, 2023}
	\author{Ludovic Renou\thanks{QMUL and  CEPR, Miles End, E1 4NS, London, UK, lrenou.econ(at)gmail.com } \;\& Xavier Venel\thanks{LUISS, Viale Romania 32, 00197 Roma, Italy, xvenel(at)luiss.it}}

\maketitle

	\begin{abstract} This paper compares statistical experiments in discounted problems, ranging from the simplest ones, where the state is fixed and the flow of  information exogenous, to more complex ones, where the decision-maker controls the flow of information or the state changes over time.

		\bigskip \noindent \textsc{Keywords}: statistical experiments, repeated problems, discounting, controls, sufficiency, information, value.

		\bigskip \noindent \textsc{JEL Classification}: C73, D82.
	\end{abstract}

\newpage 
\section{Introduction}\label{sec:intro}
Suppose a decision-maker must choose between two sources of information before making repeated decisions in an uncertain environment. When can we definitively state that one source is superior to the other? Relatedly, individuals routinely consume information from multiple, distinctive sources. Informal sources, such as friends, colleagues, and social media, typically provide information faster than formal sources like newspapers, experts, and government agencies, but often at the cost of accuracy. Which combination of sources is best?  These questions echo the central theme of this paper: the comparison of statistical experiments in discounted problems.\medskip 

In repeated problems, comparing information sources introduces a novel trade-off, as not only the overall informativeness of a source matters, but also its flow. For instance, one source may perfectly resolve the uncertainty, but only in the distant future, while another may continuously reveal partial information without ever completely resolving the uncertainty. Intuitively, highly patient decision-makers would prefer the former, while highly impatient decision-makers would prefer the latter. However, it is less clear which source moderately patient or impatient decision-makers would favor. This paper develops  a method to rank these two sources, regardless of the decision-maker's impatience. \medskip    

We start with the simplest discounted problems, where there is a fixed payoff-relevant state and, at each period, the decision-maker receives a signal, which may depend on all past signals (but not on past decisions), and then makes a decision. Throughout, we assume that the decision-maker evaluates sequences of decisions according to the discounting criterion, that is, as the discounted sums of instantaneous payoffs. The simplest model assumes general, but fixed, discount factors. General discount factors are flexible enough to model geometric discounting, hyperbolic discounting, and many other forms of discounting. In later sections, we generalize the model in several directions, ranging from relaxing the assumption of a fixed discount factor to assuming that the decisions made influence the signals received,  through relaxing the assumption of a fixed payoff-relevant state, and many others. We view these generalizations as our main contribution.\footnote{We start with the simplest discounted problems as they provide the most natural bridge between static problems and more advanced discounted problems.} Economic applications motivate these generalizations. For instance, in a dynamic monopolistic pricing problem the prices the seller offers over time partly determine what the seller learns about the unknown demand, say because the quantities demanded at each price offered are observed.  As another instance, in investment problems, it is natural to assume that the values (the payoff-relevant states), say of stocks or bonds, change over time, perhaps due to changes in macroeconomic and political conditions. As yet another instance,  it is not unusual to consider sufficiently patient decision-makers, without fixing a  precise value of the discount rate. The generalizations we propose speak to these economic applications, and a few others. \medskip

Following \cite{blackwell51, blackwell53}, we say that the statistical experiment $f$ is more valuable than the statistical experiment $g$ if $f$ induces a higher ex-ante discounted payoff than $g$, in all decision problems. Note that the comparison is made from the ex-ante viewpoint. We maintain that assumption in most of the paper, but relax it in Section \ref{subsec:seq-compa}, where we compare experiments sequentially over time. In the simplest discounted problems, an experiment $f$ is a collection of stochastic kernels $(f_t)$, where $f_t: \Theta \times X^{t-1}\rightarrow X_t$ maps states and profiles of past signals to probabilities over current signals. Let $f^t: \Theta \rightarrow \Delta(X^t)$ be the $t$-period experiment induced by $f$. The first theorem we prove states 
that the experiment $f$ is more valuable than $g$ in the simplest discounting problems if, and only if, the mixture experiment $\sum_{t}\delta_t f^t$ is Blackwell sufficient for the mixture experiment $\sum_{t}\delta_t g^t$. We call this condition $\delta$-sufficiency. In a nutshell,, the condition says that the experiment $f$ needs to be more informative than $g$ on average, and is weaker than requiring that $f^t$ is more informative than $g^t$ at all periods.\footnote{We call  this stronger condition $\Delta$-sufficiency and prove that it ranks experiments in all separable problems, a larger class of problems than the class of discounted problems.}   The intuition is clear. The simplest discounted problems we consider are equivalent to  static problems. To see this, observe that when the experiment is 
the mixture  $\sum_t \delta_t f^t$, the decision-maker observes the signal $(t,x^t)$ with probability $\delta_t f^t(x^t|\theta)$ when the payoff-relevant state is $\theta$ and chooses $a$ with probability $\sigma(a|t,x^t)$, where $\sigma$ is the decision-maker strategy. Clearly, if the decision-maker follows the strategy $\widehat{\sigma}$ in the discounted problem, where $\widehat{\sigma}_t(a|x^t) := \sigma(a|t,x^t)$, the discounted probability of decision $a$ is equal to the probability of decision $a$ in the corresponding static problem,  when the experiment  is $\sum_t \delta_t f^t$.  \medskip
 
In more general discounting problems, the above equivalence breaks down in many ways. For instance, suppose that the decision-maker controls 
the flow of information through the decisions made. The most natural idea is to let the experiments depend \emph{directly} on the decisions made.  This is, however, problematic as changing the decision problem would then change the experiments we compare -- what we want is to compare two \emph{fixed} experiments. The solution we adopt is to introduce ancillary variables into the definition of an experiment and to have decisions influence these variables. This formalism allows us to change the decision problem the decision-maker faces, without changing the experiments. We call these variables \emph{informational controls}. For instance, when a firm chooses its pricing and marketing strategies (decisions), it partly controls the fraction and types of consumers made aware of its products (controls) and, ultimately, the realized demand (observations).  To compare two \emph{controlled} experiments, we extend them to two larger experiments, where at each period an extended experiment produces a new control and signal, as a function of past controls and signals and the state. We then compare these larger experiments as in the uncontrolled case. We hasten to stress that the extensions are not unique. Intuitively, each decision problem induces a pair of \emph{linked} extensions through its mapping from decisions to controls. As a consequence, we need to do the comparison for \emph{all} possible extensions. We illustrate the comparisons with the help of an example -- Section \ref{sec:example-controlled}. \medskip

 We close the introduction with a brief discussion of two closely related papers \cite{greenshtein96} and \cite{whitmeyer-williams-2024}.   A more extensive discussion can be found in Section \ref{sec:rel-litt}. \cite{greenshtein96} compares statistical experiments in sequential problems, where the decision-maker's payoff function over sequences of decisions is arbitrary. In particular, payoff functions do not have to be separable across periods, let alone to be discounted sums of  instantaneous payoffs. In all other aspects, his model coincides with our simplest model (fixed state, exogenous evolution of the signals, ex-ante  comparison, etc).  Since Greenshtein considers a larger class of problems, his ranking of statistical experiments is coarser than ours. In fact, it is even stronger than requiring that $f^t$ sufficient for $g^t$ at all periods. In a concurrent paper, \cite{whitmeyer-williams-2024} compare statistical experiments in discounted and separable problems in the setting of our simplest model (fixed state, exogenous evolution of the signals, ex-ante  comparison, etc). Their characterizations are naturally equivalent to ours, although we present ours in terms of garblings, while theirs are in terms of posterior beliefs. Unlike us, both    
 \cite{greenshtein96} and \cite{whitmeyer-williams-2024} do not extend their analysis beyond our simplest model.  \medskip
 
  A word of caution is in order. It is well-known that comparing experiments with respect to their performances in \emph{all} decision problems typically results in coarse rankings. For a stark example, suppose than $f$ and $g$ are the uniform distributions on $[\theta-1/3, \theta+1/3]$ and $[\theta -1/2, \theta+1/2]$, respectively ($\theta$ is the unknown parameter).  Intuitively, we expect $f$ to be more informative than $g$ since its dispersion around $\theta$ is smaller. Surprisingly, the two experiments are not comparable -- see \cite{lehmann-1988}. We are no different.\footnote{Since we are considering the class of discounted problems, a subset of the class of all dynamic decision problems, we obtain a finer ranking than the one of \cite{greenshtein96}. See Section \ref{sec:rel-litt} for an in-depth discussion.}

\section{An Introductory Example} \label{sec:example}

This section motivates our study with the help of a simple example. A decision-maker has to make a decision in each of two periods, labelled $t=1,2$. If the decision-maker chooses $a_t \in A$ at period $t$, the instantaneous payoff is $u(a_t,\theta)$,  when the state is $\theta$. The decision-maker evaluates the sequence of decisions $(a_1,a_2)$ as the (normalized) sum of their instantaneous payoffs $(1/2)u(a_1,\theta)+ (1/2)u(a_2,\theta)$.\medskip

Prior to making a decision, the decision-maker has the opportunity to receive some information from one of two possible sources of information, denoted $f$ and $g$, respectively.  The first  source $f$ provides no information in the first period, but fully reveals the state in the second period, while the second source $g$ provides some information in the first period, but no additional information in the second period. The central question this paper addresses is whether it is possible to rank these two sources of information (modeled as statistical experiments), independently of the decision problem $(A,u)$. \medskip 

The decision-maker faces a clear trade-off. If he opts to obtain information from $f$, he is able to calibrate perfectly his second-period choice to the state at the cost of being poorly calibrated in the first period. Alternatively, if he opts to obtain information from $g$, he is able to calibrate his decision to the state at both periods, but imperfectly so. To compare the two experiments, we need to compare their flow informativeness -- comparing the overall informativeness is not enough. Clearly, the source $f$ is  the most informative, but the question is whether this is sufficient to compensate for the delay. \medskip 

Comparing $f$ and $g$ rests on the simple idea that we can view them as two static experiments, which then can be compared in the usual way. Indeed, we can think of $f$ as the static experiment $\widehat{f}$, which produces the uninformative signal $\emptyset$ with probability $1/2$ and a perfectly informative signal with the complementarity probability. Similarly, we view the experiment $g$ as a static experiment in its own right.   For concreteness, suppose that the state takes two possible values, $\theta$ and $\theta'$, and consider the experiments:
\begin{table}[h!]
\parbox{.50\linewidth}{
\centering
\begin{tabular}{|c|c|c|} \hline
\backslashbox{$\Theta$}{$\{x_1\} \times X_2$} & $(x_1,x_2)$ & $(x_1,x_2')$ \\ \hline
$\theta$ & $1$ & $0$ \\ \hline
$\theta'$ & $0$ & $1$ \\ \hline
\end{tabular}\caption{The experiment $f$}}
\parbox{.50\linewidth}{
\centering
\begin{tabular}{|c|c|c|} \hline
\backslashbox{$\Theta$}{$Y_1 \times \{y_2\}$} & $(y_1,y_2)$ & $(y_1',y_2)$ \\ \hline
$\theta$ & $7/12$ & $5/12$ \\ \hline
$\theta'$ & $5/12$ & $7/12$ \\ \hline
\end{tabular} \caption{The experiment $g$}}
\end{table}

\FloatBarrier
\medskip

The experiment $f$ produces the signal $x_1$ with probability 1 in the first period, regardless of the state, and the signal $x_2$ (resp., $x_2'$) with probability 1 in the second period when the state is $\theta$ (resp., $\theta'$), whereas the experiment $g$   
produces the signal $y_1$ (resp., $y_1'$) with probability $7/12$ in the first period when the state is $\theta$ (resp., $\theta'$) and the signal $y_2$ with probability 1 in the second period, regardless of the state and the first-period signal.

 The modified (static) experiments $\widehat{f}$ and $\widehat{g}$ are:
\begin{table}[h!]
\parbox{.50\linewidth}{
\centering
\begin{tabular}{|c|c|c|c|} \hline
\backslashbox{$\Theta$}{$X$} & $x$ & $x'$ & $\emptyset$ \\ \hline
$\theta$ & $1/2$ & $0$ & $1/2$\\ \hline
$\theta'$ & $0$ & $1/2$ & $1/2$\\ \hline
\end{tabular}\caption{The modified experiment $\widehat{f}$} \label{Table:intro-modified-expe}}
\parbox{.50\linewidth}{
\centering
\begin{tabular}{|c|c|c|} \hline
\backslashbox{$\Theta$}{$Y$} & $y$ & $y'$ \\ \hline
$\theta$ & $7/12$ & $5/12$ \\ \hline
$\theta'$ & $5/12$ & $7/12$ \\ \hline
\end{tabular} \caption{The modified experiment $\widehat{g}$}}
\end{table}

\medskip

We can verify that $\widehat{f}$ is more informative than $\widehat{g}$ \citep{blackwell51, blackwell53} and, therefore, the decision-maker prefers $f$ to $g$, regardless of the decision problem to be repeated. Intuitively, we exploit the simple observation that discounted payoffs are the expectation of static payoffs  with  respect to the discounted probabilities of decisions. See Theorem \ref{th:fixed-discounting-uncontrolled} for a formal statement.\medskip 

We conclude this section with a word of caution. The example is deliberately simple and abstracts from many complications we will deal with. 
For instance, the example assumes that the state does not change from the first period to the second. As another instance, the decision made at the first period does not influence the information received at the second period. As yet another instance, the decision-maker chooses between the two sources of information at the ex-ante stage and does not get the opportunity to revise his choice later on.  We relax all these assumptions, and a few others,  in later sections. The analysis will turn out to be far less straightforward, but will share some similarities with the example presented here. It is because of these similarities that we start with the simplest (almost embarrassingly so) discounted problems before considering generalizations in later sections.

\section{Formulation of the Problem}\label{sec:model}

\subsection{The Problem} We start with an informal description of the problem. A decision-maker has to make repeated decisions in an uncertain (and possibly changing) environment. Prior to making a decision,  the decision-maker receives information, which may depend on the information received (and possibly the decisions made) in the past.   We want to compare two \emph{fixed} information structures (experiments), when the decision-maker values streams of payoffs according to the discounting criterion.\medskip

We start with the simplest possible formulation of the problem, analyze it, and then offer several generalizations in later sections. In the simplest formulation, the decision-maker receives information about an unknown \emph{fixed} state (or parameter) $\theta \in \Theta$ at periods $t=1,\dots, T \leq +\infty$. The information the decision-maker receives is independent of the decisions made. In what follows, with the exception of the number of periods, all sets are finite. For any collection of sets $S_1,\dots, S_{t}$, we write $S^t$ for $S_1 \times \dots \times S_{t}$, $s_t$ (resp., $s^t$) for a generic element of $S_t$ (resp., $S^t$).  We write random vectors in bold.

\medskip

\textbf{Experiments.} An experiment $f:=(f_1, f_2, \dots)$ is a  collection of kernels $f_t: \Theta \times X^{t-1}  \rightarrow \Delta(X_t)$, where $X_t$ is the set of signals at period $t$ and $X^{t-1}$  the set of past signals.  We write $f_t(x_t|x^{t-1},\theta)$ for the probability of the signal $x_t$ given $(x^{t-1},\theta)$ and $f^t(x^t|\theta)$ for $f_1(x_1|\theta) \times \dots \times f_t(x_t|x^{t-1},\theta)$. The experiment $f$ models the information the decision-maker receives over time: At each period $t$, the decision-maker observes the signal $x_t$. The objective is to compare the experiments $f$ and $g$, where the experiment $g$ is the collection of kernels $g_t: \Theta \times Y^{t-1}  \rightarrow \Delta(Y_t)$.

\medskip 

\textbf{Decision problems.} A decision problem is a tuple $(A_t,u_t)_t$, where $A_t$ is the set of decisions at period $t$, and $u_t: A_t \times \Theta \rightarrow \mathbb{R}$ the payoff function at period $t$.\medskip

\textbf{Discount factors.} A discount factor $\delta$ is a probability distribution over $\{1,\dots,T\}$, with $\delta_t$ the probability of $t$. Examples include geometric discounting ($\delta_t=\frac{1-\delta}{(1-\delta)^T}\delta^t$ for some $\delta \in (0,1)$),  quasi-hyperbolic discounting, averaging over finitely many periods ($\delta_t=1/T^*$ for some finite $T^*$), discounting of a single period ($\delta_t=1$ for some $t$), among others.\medskip

\textbf{Strategies.} Fix the decision problem $(A_t,u_t)_t$. We write $\sigma$ (resp., $\tau$) for the strategy adapted to the experiment $f$ (resp., $g$), that is, $\sigma$ is
the collection of maps $(\sigma_t:   A^{t-1} \times X^{t}  \rightarrow \Delta(A_t))_t$ (resp., $(\tau_t:  A^{t-1} \times Y^{t} \rightarrow \Delta(A_t))_t$). We denote $\proba_{\theta,f, \sigma}$ the probability over histories of signals and actions, induced by the strategy $\sigma$ and the experiment $f$, when the state is $\theta$, and write  $\mathbb{E}_{\theta,f, \sigma}$ for the expectation with respect to that probability. We define $\proba_{\theta,g, \tau}$ and  $\mathbb{E}_{\theta, g, \tau}$ similarly. \medskip

\textbf{Discounted problems.} A discounted problem  is such that $(A_t,u_t)=(A,u)$ for all $t$, and the decision-maker values the sequences of actions $(a_1,\dots,a_T)$ as the discounted sum $\sum_{t}\delta_t u(a_t,\theta)$. \medskip  

Before moving further, let us highlight the most important assumptions the above formulation imposes.  To start with, the state $\theta$ is fixed and the information the decision-maker receives over time does not depend on past and current decisions. As we shall see later, it is relatively straightforward to relax the former assumption, while it is harder to relax the latter.  The difficulty stems from the fact that we want to rank the two experiments, which are fixed, by  comparing the payoffs the decision-maker can obtain in \emph{all} decision problems.  In the simplest formulation,  changing the decision problem does not change the experiments to be compared. This is not so when past and current decisions influence the information the decision-maker receives -- this is the main challenge we will need to tackle. Also, the discount factor is fixed. In applications, it is sometimes convenient to consider a range of discount factors, e.g., all sufficiently high discount factors.  Lastly, the set of feasible decisions is the same at each period, hence independent of past decisions. Again, in applications, past decisions may constrain future decisions, e.g., past investments may constrain future investments. We present these generalizations and a few others  in later sections. While the above formulation is perhaps too simple, we decided to present it first as it helps to understand the more advanced results.  We now offer a brief review of the seminal results on the comparison of experiments in static problems.

\subsection{Comparing Experiments in Static Problems: A Brief Review} This section  reviews some of the main results on the comparison of experiments in static problems ($T=1$). We refer the reader to  \citet[Chapter 12]{blackwell-girshick}, \cite{torgersen91}, \cite{lecam96} and    \citet[Chapter 9]{strasser2011} for extensive reviews. \medskip 

The comparison of the experiments $f$ and $g$ rests on Definition \ref{def:valuable-static}, which states that
 experiment $f$ is more valuable than experiment $g$ if, regardless of the decision problem and the state, the decision-maker is always better-off with experiment $f$ than with experiment $g$. This definition, albeit in a slightly different form, first appeared in \cite{blackwell51}.\footnote{Blackwell's original definition compares  sets of attainable payoff profiles  $\bigcup_{\sigma} \{u \in \mathbb{R}^{|\Theta|}: u_{\theta} = \mathbb{E}_{\theta, f,\sigma}u(\mathbf{a},\theta), \forall \theta\}$ and $\bigcup_{\tau} \{u \in \mathbb{R}^{|\Theta|}: u_{\theta} = \mathbb{E}_{\theta, g,\tau}u(\mathbf{a},\theta), \forall \theta\}$, and requires the inclusion of the latter in the former for all $(A,u)$. \cite{blackwell51} attributed the definition to Bohnenblust, Shapley and Sherman (unpublished).} 

\begin{definition}\label{def:valuable-static}
The experiment $f$ is more valuable than the experiment $g$ in static problems if for all decision problems $(A,u)$, for all strategies $\tau$, there exists a strategy $\sigma$ such that $\mathbb{E}_{\theta, g,\tau}u(\mathbf{a},\theta) \leq \mathbb{E}_{\theta, f,\sigma}u(\mathbf{a},\theta)$ for all $\theta$. 
\end{definition}

The following theorem summarizes the main results on the comparison of experiments.

\begin{theorem}[\cite{blackwell51, blackwell53}] \label{th:blackwell}The following statements are equivalent: 
\begin{enumerate}[label=(\arabic*)]
\item  The experiment $f$ is more valuable than the experiment $g$ in static problems.
\item For all decision problems $(A,u)$, for all strategy $\tau$, there exists strategy $\sigma$ such that $\marg_{A}\proba_{\theta,g,\tau}=  \marg_{A} \proba_{\theta,f, \sigma}$, for all $\theta$.
\item Sufficiency: There exists $\gamma: X \rightarrow \Delta(Y)$ such that 
\begin{align*}
g(y|\theta)= \sum_{x}\gamma(y|x)f(x|\theta), \text{\;for all\;} (y,\theta). 
\end{align*}
\end{enumerate}
\end{theorem}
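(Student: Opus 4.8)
This is the classical Blackwell–Sherman–Stein theorem, so the plan is to reproduce a standard proof, adapted to the finite-sets setting of the paper. The implication $(2)\Rightarrow(1)$ is immediate: if $\marg_A\proba_{\theta,g,\tau}=\marg_A\proba_{\theta,f,\sigma}$ for all $\theta$, then $\mathbb{E}_{\theta,g,\tau}u(\mathbf a,\theta)=\mathbb{E}_{\theta,f,\sigma}u(\mathbf a,\theta)$ for every payoff function $u$, so in particular the inequality in Definition \ref{def:valuable-static} holds with equality. The implication $(3)\Rightarrow(2)$ is also direct: given the garbling $\gamma:X\to\Delta(Y)$ and any strategy $\tau:Y\to\Delta(A)$ adapted to $g$, define $\sigma:X\to\Delta(A)$ by $\sigma(a|x):=\sum_y\tau(a|y)\gamma(y|x)$, i.e. compose the garbling with the decision rule; a one-line computation using $g(y|\theta)=\sum_x\gamma(y|x)f(x|\theta)$ shows $\marg_A\proba_{\theta,g,\tau}=\marg_A\proba_{\theta,f,\sigma}$ for all $\theta$.

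The substantive implication is $(1)\Rightarrow(3)$, and here the plan is a separating-hyperplane argument. First I would observe that the set of ``achievable conditional action distributions'' under $f$, namely $K_f:=\{(\sum_x\sigma(a|x)f(x|\theta))_{a,\theta}:\sigma\in\Delta(A)^X\}\subseteq\mathbb{R}^{A\times\Theta}$, is a convex, compact polytope (it is the image of the product of simplices $\Delta(A)^X$ under a linear map), and similarly for $K_g$ under $g$. The key reduction is that statement $(1)$ — phrased in Blackwell's set-inclusion form in the footnote, or via Definition \ref{def:valuable-static} — is equivalent to $K_g\subseteq K_f$: one direction is bookkeeping; for the other, suppose some point $p^*\in K_g\setminus K_f$; since $K_f$ is closed and convex, there is a hyperplane, i.e. a vector $(u(a,\theta))_{a,\theta}$, strictly separating $p^*$ from $K_f$, and this vector read as a payoff function witnesses that $g$ achieves a payoff profile $f$ cannot dominate, contradicting $(1)$. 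Then, assuming $K_g\subseteq K_f$, I would take $g$ itself, realized in $K_g$ by the ``identity'' decision rule on $Y$ (choose $a=y$, after relabelling so that $A\supseteq Y$, or more cleanly work with the experiments' own output distributions $f^*:=(f(\cdot|\theta))_\theta\in\Delta(X)^\Theta$ and $g^*$), and conclude that the point corresponding to $g$'s raw output lies in $K_f$; unpacking what membership in $K_f$ means yields exactly a kernel $\gamma:X\to\Delta(Y)$ with $g(y|\theta)=\sum_x\gamma(y|x)f(x|\theta)$.

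The main obstacle is making the separating-hyperplane step clean: one must be careful that separating a point of $K_g$ from the polytope $K_f$ in $\mathbb{R}^{A\times\Theta}$ genuinely corresponds to a decision problem $(A,u)$ in which $g$ strictly outperforms $f$ for \emph{some} state — Definition \ref{def:valuable-static} requires $f$ to weakly dominate $g$ \emph{state by state} (for the \emph{same} $\sigma$ across all $\theta$, after $\tau$ is fixed), so the geometry is a set-inclusion $K_g\subseteq K_f$ rather than a scalar inequality, and the separating functional must be interpreted correctly. A standard device handles this: enrich the action set so that the decision-maker can, in effect, report a belief/state, making the attainable payoff sets the relevant convex bodies; alternatively one invokes the equivalence with the set-inclusion formulation quoted in the footnote, for which the hyperplane argument is textbook. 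I would also note in passing that the minimax/LeCam route (via the minimax theorem applied to a suitable zero-sum game between the decision-maker and nature) gives an alternative proof of $(1)\Rightarrow(3)$, but the separating-hyperplane argument on $K_g\subseteq K_f$ is the most transparent in the finite setting, and since the result is classical I would simply cite \cite{blackwell51,blackwell53} and sketch these steps rather than belabour the details.
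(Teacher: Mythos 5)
The paper does not actually prove Theorem \ref{th:blackwell}: it is stated as a classical result and attributed to \cite{blackwell51, blackwell53} (with a pointer to \cite{oliveira18} for a modern proof), so there is no in-paper argument to compare yours against line by line. That said, your sketch is correct and is the standard finite-dimensional proof. The easy directions are exactly right: $(2)\Rightarrow(1)$ is bookkeeping, and your $(3)\Rightarrow(2)$ composition $\sigma(a|x)=\sum_y\tau(a|y)\gamma(y|x)$ is precisely the construction the authors reuse in their own proofs of the dynamic generalizations (the $(3)\Rightarrow(2)$ steps of Theorems \ref{th:fixed-discounting-uncontrolled} and \ref{th:evolving-states}), just as your separating-hyperplane step for $(1)\Rightarrow(3)$ is the argument they invoke by reference in their $(1)\Rightarrow(2)$ step (``As in \cite{blackwell51}, we can then apply a separation argument\dots''). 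One small remark: the obstacle you flag about reconciling the state-by-state quantifier in Definition \ref{def:valuable-static} with the scalar separating functional dissolves more easily than your proposed fixes suggest. If $(1)$ holds, then for the payoff $u$ given by the separating normal and the strategy $\tau$ generating $p^*$, the dominating $\sigma$ satisfies $\mathbb{E}_{\theta,g,\tau}u\le\mathbb{E}_{\theta,f,\sigma}u$ for every $\theta$; summing over $\theta$ gives $\langle u,p^*\rangle\le\langle u,q\rangle$ for some $q\in K_f$, directly contradicting strict separation. No enrichment of the action set or appeal to the set-inclusion formulation is needed. The final step, taking $A=Y$ and reading membership of $(g(\cdot|\theta))_\theta$ in $K_f$ as the garbling identity, is also correct.
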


The equivalence between (1) and (2) first appeared in \citet{blackwell51} as Theorem 2(i), and the equivalence between (1)  and (3) as Theorems 5 and 6 in \cite{blackwell53} -- Blackwell attributed Theorem 5 to \cite{sherman51} and \cite{stein51}. See \cite{oliveira18}  for a recent proof. \medskip

A few remarks are worth making.  First, we compare the experiments state-by-state; an alternative is to assume a prior distribution and to compare the resulting expected payoffs. For all priors fully supported on $\Theta$, the two approaches are equivalent -- see \citet{blackwell51}.  Second, note that sufficiency is equivalent to the existence of a joint distribution of $(\mathbf{x},\mathbf{y})$ for which $\mathbf{x}$ is a sufficient statistics for $\theta$ (based on $(\mathbf{x},\mathbf{y})$).\footnote{For all $(x,y)$, let $h (x,y|\theta)$ be the  probability of $(x,y)$ when the state is $\theta$ and assume that $g(y|\theta)= \sum_x h(x,y|\theta)$ and $f(x|\theta)=\sum_y h(x,y|\theta)$. If $\mathbf{x}$ is sufficient for $\theta$ based on $(\mathbf{x},\mathbf{y})$, the factorization theorem states the existence of positive functions $h_1$ and $h_2$ such that $h(x,y|\theta)=h_1(x,y)h_2(x|\theta)$. It follows that $g(y|\theta) = \sum_{x: \sum_{\tilde{y}} h_1(x,\tilde{y})>0}\frac{h_1(x,y)}{\sum_{\tilde{y}}h_1(x,\tilde{y})}f(x|\theta)$, i.e., $g$ is a garbling of $f$ with $\gamma(y|x):=h_1(x,y)/\sum_{\tilde{y}}h_1(x,\tilde{y})$ whenever $\sum_{\tilde{y}}h_1(x,\tilde{y})>0$ the garbling. (We use the fact that if $\sum_{\tilde{y}} h_1(x,\tilde{y})=0$, then $h_1(x,\tilde{y})=0$ for all $\tilde{y}$ since $h_1$ is positive.)} A classical result in statistics then states for any estimator based on $(\mathbf{x},\mathbf{y})$, there exists an estimator based on the sufficient statistics $\mathbf{x}$ with the same risk. See  \citet[Theorem 6.1, p. 33]{lehmann-98}. In our context, the estimator  based on $(\mathbf{x},\mathbf{y})$ is $\tau$. The decision-maker can emulate $\marg_{A}\proba_{\theta,g,\tau}$ by first sampling a \emph{fictitious} $y$ with probability $\gamma(y|x)$ and then choosing $a$ with probability $\tau(a|y)$.  As we will see, the simulation of fictitious profiles of signals will continue to play a central role in our analysis. Third, another statement is in terms of the distribution of posterior beliefs the two experiments induce. The experiment $f$ is more valuable than the experiment $g$ if, and only if, the distribution of posteriors $f$ induces dominates the distribution of posteriors $g$ induces in the convex order. Some of our results can also be expressed in terms of posterior beliefs.

\section{Comparison of Experiments in Discounted Problems} \label{sec:main}

\subsection{Discounted Problems: The Basics}\label{sec:basics}

Comparing experiments in discounted problems necessitates to take a stance on several issues such as the point(s) in time at which comparisons are made or whether the comparison must hold for a single discount factor or a range of them. We start with the simplest definition, where the comparison is made at the ex-ante stage and holds for a fixed discount factor -- we consider other definitions later on.

\begin{definition}\label{def:valuable-discounted} 
The experiment $f$ is more valuable than the experiment $g$ in discounted problems with discount factor $\delta$ if for all decision problems $(A,u)$, for all  
strategy $\tau$, there exists strategy $\sigma$ such that 
\begin{align*}
\mathbb{E}_{\theta, g, \tau}\left[\sum_{t}\delta_t u(\mathbf{a}_t,\theta)\right] \leq \mathbb{E}_{\theta,f,\sigma}\left[\sum_{t}\delta_t u(\mathbf{a}_t,\theta)\right], \,\text{for all\;} \theta.
\end{align*} 
\end{definition}
\medskip

We now show how to derive a characterization similar to the one of \cite{blackwell51, blackwell53}, but for discounted problems. The main idea consists in reformulating discounted problems as static problems and then to apply Theorem \ref{th:blackwell}.  More specifically, observe that we can rewrite the decision-maker's expected payoff as follows:
\begin{align*}
\mathbb{E}_{\theta, g, \tau}\left[\sum_{t}\delta_t u(\mathbf{a}_t,\theta)\right]  & = \sum_{a} u(a,\theta)\underbrace{\left[\sum_t \delta_t \marg_{A_t}\proba_{\theta,g,\tau}(a)\right]}_{\text{discounted probability of $a$}}.
\end{align*}
In words, the decision-maker's expected payoff is the expectation of the static payoff with respect to the discounted probabilities of making decisions. 
As we shall see,  not all discounted problems admit such a reformulation, but our simplest formulation does. It immediately follows from Theorem \ref{th:blackwell} that $f$ is more valuable than $g$ if for all strategy $\tau$, there exists a strategy $\sigma$, which replicates the discounted probabilities of choices, that is, 
\begin{align*}
\sum_{t} \delta_t \marg_{A_t}\proba_{\theta,g,\tau} = \sum_{t} \delta_t \marg_{A_t}\proba_{\theta,f,\sigma}, \text{\;for all\;} \theta.
\end{align*}

Next, observe that we can rewrite the discounted probability of decision $a$ as follows:
\begin{align*}
\sum_t \delta_t \proba_{\theta,g,\tau}(\boldsymbol{a}_t=a) & = \sum_{t,y^t} \delta_t \proba_{\theta,g,\tau}(y^t)\proba_{\theta,g,\tau}(\boldsymbol{a}_t=a|y^t), \\
& = \sum_{t,y^t} \underbrace{\delta_t g^t(y^t|\theta)}_{\text{signal}} \underbrace{\proba_{\theta,g,\tau}(\boldsymbol{a}_t=a|y^t)}_{\text{indep. of $\theta \Rightarrow$ strategy}}.
\end{align*}
To understand the above rewriting, recall that the evolution of the signals does not depend on the decisions made and, consequently, the probability $\proba_{\theta,g,\tau}(y^t)$ is independent of the strategy $\tau$, and equals to $g^t(y^t|\theta)$. In addition, since the decision-maker does not observe $\theta$, the conditional probability $\proba_{\theta,g,\tau}(\boldsymbol{a}_t=a|y^t)$ is independent of $\theta$ and, thus, we can view it as a behavioral strategy mapping sequences of signals to probabilities over actions. In other words, the static re-formulation  states that the experiment $g$ generates the signal ``$(t,y^t)$'' with probability $\delta_t g^t(y^t|\theta)$ when the state is $\theta$, the decision-maker observes the realized signals, and makes a decision. (Recall that $y^t$ is an element of $Y^t$, the set of sequences of signals of length $t$.  Thus, summing over $(t,y^t)$ is equivalent to summing over $\cup_{t=1}^T Y^t$.)

Since the same re-formulation applies to the experiment $f$, Theorem \ref{th:blackwell} suggests that for $f$ to be more valuable than $g$,  we need the existence of garblings $\gamma_{t'}: X^{t'} \rightarrow \Delta(\cup_{t=1}^{T} Y^t)$ such that 
\begin{align*}
\delta_t g^t(y^t|\theta) = \sum_{t',x^{t'}} \delta_{t'} f^{t'}(x^{t'}|\theta) \gamma_{t'}(y^t|x^{t'}), \text{for all\;} (t,y^t,\theta).  
\end{align*}
The following theorem states that this is indeed the case. 

\begin{theorem}\label{th:fixed-discounting-uncontrolled} The following statements are equivalent: 
\begin{enumerate}[label=(\arabic*)]
\item The experiment $f$ is more valuable than the experiment $g$ in discounted problems with discount factor $\delta$.
\item For all decision problem $(A,u)$, for all  
strategy $\tau$, there exists a strategy $\sigma$ such that 
\begin{align*}
\sum_{t}\delta_t\marg_{A_t}\mathbb{P}_ {\theta, g, \tau} = \sum_{t} \delta_t\marg_{A_t} \mathbb{P}_{\theta, f,\sigma}, \text{for all\;} \theta.
\end{align*}
\item $\delta$-sufficiency: For all $t'=1,\dots,T$, there exists a garbling $\gamma_{t'}: X^{t'} \rightarrow \Delta(\cup_{t=1}^{T} Y^t)$ such that 
\begin{align*}
\delta_t g^t(y^t|\theta) = \sum_{t'=1}^{T}\sum_{x^{t'} \in X^{t'}} \delta_{t'} f^{t'}(x^{t'}|\theta) \gamma_{t'}(y^t|x^{t'}), \text{for all\;} (t,y^t,\theta).
\end{align*}
\end{enumerate}
\end{theorem}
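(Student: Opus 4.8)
The plan is to prove the chain of equivalences $(1) \Leftrightarrow (2) \Leftrightarrow (3)$ by reducing the discounted problem to a static one and invoking Theorem \ref{th:blackwell}. The key observation, already sketched in the text, is that the discounted payoff is a linear functional of the \emph{discounted marginals} $\sum_t \delta_t \marg_{A_t}\proba_{\theta,\cdot,\cdot}$, so that Definition \ref{def:valuable-discounted} is equivalent to: for every $\tau$ there is a $\sigma$ matching these discounted marginals state by state. This is precisely the content of $(1)\Leftrightarrow(2)$; one direction is immediate (equal discounted marginals give equal discounted payoffs), and the other follows exactly as in Blackwell's argument --- if the discounted marginals could not be matched for some $(A,u)$, a separating-hyperplane argument produces a payoff function $u$ witnessing a strict violation of Definition \ref{def:valuable-discounted}. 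I would either reproduce this separation argument or, more economically, cite the equivalence $(1)\Leftrightarrow(2)$ in Theorem \ref{th:blackwell} applied to the static experiment constructed below.

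The heart of the matter is to identify the right static experiments. Define the \emph{mixture experiment} $\widehat{f}$ with signal space $\widehat{X} := \bigsqcup_{t=1}^{T} X^t$ (disjoint union, so a signal is a pair $(t,x^t)$) by $\widehat{f}((t,x^t)\mid\theta) := \delta_t f^t(x^t\mid\theta)$; this is a bona fide kernel since $\sum_{t,x^t}\delta_t f^t(x^t\mid\theta) = \sum_t \delta_t = 1$. Define $\widehat{g}$ analogously on $\widehat{Y} := \bigsqcup_{t=1}^T Y^t$. The next step is the bijection between strategies in the discounted problem and strategies (in the Blackwell sense, i.e.\ maps from signals to $\Delta(A)$) in the static problem with experiment $\widehat{g}$: given a discounted-problem strategy $\tau = (\tau_t)$, since $\proba_{\theta,g,\tau}(y^t)$ does not depend on $\tau$ (evolution of signals is exogenous) and $\proba_{\theta,g,\tau}(\mathbf{a}_t = a \mid y^t)$ does not depend on $\theta$ (the DM does not observe $\theta$), the map $\widehat{\tau}((t,y^t)) := \proba_{\theta,g,\tau}(\mathbf{a}_t = \cdot \mid y^t) \in \Delta(A)$ is well-defined, and one checks
\[
\marg_{A}\proba^{\text{static}}_{\theta,\widehat{g},\widehat{\tau}} \;=\; \sum_{t,y^t}\delta_t g^t(y^t\mid\theta)\,\widehat{\tau}((t,y^t)) \;=\; \sum_t \delta_t \marg_{A_t}\proba_{\theta,g,\tau}.
\]
Conversely every static strategy $\widehat{\tau}$ arises this way (set $\tau_t(a\mid a^{t-1},y^t) := \widehat{\tau}(a\mid(t,y^t))$, ignoring the action history). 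The same correspondence holds for $f$. With this dictionary in place, $(2)$ for the discounted problem is literally statement $(2)$ of Theorem \ref{th:blackwell} for the static pair $(\widehat{f},\widehat{g})$, and $(3)$ of Theorem \ref{th:blackwell} for that pair --- existence of $\gamma: \widehat{X}\to\Delta(\widehat{Y})$ with $\widehat{g}(\cdot\mid\theta) = \sum \gamma\,\widehat{f}(\cdot\mid\theta)$ --- unpacks, writing $\gamma_{t'} := \gamma(\cdot\mid(t',\cdot)): X^{t'}\to\Delta(\widehat{Y})$, into exactly the displayed $\delta$-sufficiency condition in $(3)$. Thus all three statements are restatements of the three equivalent statements of Theorem \ref{th:blackwell} applied to $(\widehat{f},\widehat{g})$, and the theorem follows.

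The main obstacle --- and the only place requiring care --- is the strategy correspondence, specifically verifying that it is genuinely a bijection at the level of induced discounted action-marginals and that nothing is lost by passing from behavioral strategies $(\tau_t)_t$ that may condition on past actions to static strategies $\widehat{\tau}$ that condition only on the current signal block $(t,y^t)$. The point is that for computing the \emph{marginal} distribution of $\mathbf{a}_t$ alone, conditioning on past actions is irrelevant: $\marg_{A_t}\proba_{\theta,g,\tau}(a) = \sum_{y^t} g^t(y^t\mid\theta)\,\proba_{\theta,g,\tau}(\mathbf{a}_t = a\mid y^t)$, and the inner conditional, averaged appropriately, is itself a function of $y^t$ alone that can serve as $\widehat{\tau}((t,y^t))$; conversely a signal-only static strategy is trivially a (degenerate) discounted-problem strategy. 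I would also remark that one should double-check the edge case $T=+\infty$: the mixture experiment then has a countably infinite signal space, so strictly speaking Theorem \ref{th:blackwell} as stated (finite sets) must be extended, or one argues by a limiting/approximation argument over finite truncations, noting $\sum_{t>N}\delta_t \to 0$. A brief remark handling this, or restricting the formal statement to the finite-signal-space content of Theorem \ref{th:blackwell}, closes the gap.
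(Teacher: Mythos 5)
Your proposal is correct and follows essentially the same route as the paper: the paper's own proof (given for the more general time-varying-states version, Theorem \ref{th:evolving-states}, from which Theorem \ref{th:fixed-discounting-uncontrolled} follows by restricting to fully persistent states) likewise reduces the discounted problem to the static mixture experiment, uses the canonical decision problem $A=\bigcup_t Y^t$ with the truth-telling strategy for $(2)\Rightarrow(3)$, and the garbling-composed strategy $\sigma_{t'}(a|x^{t'})=\sum_{t,y^t}\proba(\boldsymbol{a}_t=a|y^t)\gamma_{t'}(y^t|x^{t'})$ for $(3)\Rightarrow(2)$, exactly as your reduction to Theorem \ref{th:blackwell} unpacks. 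The one point you only flag rather than execute, the case $T=+\infty$, is handled in the paper by applying the finite result to truncations and invoking Cantor's intersection theorem on the resulting nested, nonempty, closed sets of garblings in the product topology, which is the limiting argument you anticipate.
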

Intuitively, $\delta$-sufficiency states that the information needed to emulate the experiment $g$ is contained in the experiment $f$, but $f$ does not  need to be more informative than $g$ at all periods, i.e., we do not need $g^t$ to be a garbling of $f^t$  at all $t$.\footnote{We can equivalently define a unique garbling $\gamma: \cup_{t}X^t \rightarrow \Delta(\cup_t Y^t)$.} For instance, $f$ can be less informative than $g$ at earlier periods if it is sufficiently more informative at later periods.  To put it simply, all we need is that there is more information in $f$ than $g$ on average, that is, we need the mixture experiment $\sum_t \delta_t f^t$ to be (Blackwell) sufficient for the mixture experiment $\sum_t \delta_t g^t$. To illustrate Theorem \ref{th:fixed-discounting-uncontrolled}, we revisit the introductory example. \medskip

\textbf{Introductory Example Revisited.} Recall that the introductory example compares two experiments $f$ and $g$, where $g$ provides some information at the first period and none at the second (i.e., $Y_2$ is the singleton $\{y_2\}$), while $f$ provides some information at the second period, but none at the first (i.e., $X_1$ is the singleton $\{x_1\}$).  In other words, the decision-maker trades off the benefit of receiving some information today with the benefit of receiving more information tomorrow. From Theorem \ref{th:fixed-discounting-uncontrolled}, $f$ is more valuable than $g$ if, only if, $f$ is $\delta$-sufficient for $g$, that is, if there exist garblings $(\gamma_1,\gamma_2)$ such that:
\begin{align*}
\delta_1 g^1(y_1|\theta) & = \delta_1 \gamma_1(y_1|x_1)f^1(x_1|\theta) + \delta_2 \sum_{x_2} \gamma_2(y_1|x_1,x_2)f^2(x_1,x_2|\theta), \\ 
\delta_2 g^2(y_1,y_2|\theta) & = \delta_1 \gamma_1(y_1,y_2|x_1)f^1(x_1|\theta) + \delta_2 \sum_{x_2} \gamma_2(y_1,y_2|x_1,x_2)f^2(x_1,x_2|\theta),
\end{align*}
for all $(y_1,y_2, \theta)$. (Recall that $\gamma_1: X_1 \rightarrow \Delta(Y_1 \cup (Y_1 \times Y_2))$ and $\gamma_2: X_1 \times X_2 \rightarrow \Delta(Y_1 \cup (Y_1 \times Y_2))$.)  Since $X_1$ and $Y_2$ are singletons, this is equivalent to finding two  garblings $\gamma_1' \in \Delta(Y_1)$ and $\gamma_2': X_2 \rightarrow \Delta(Y_1)$ such that 
\begin{align*}
g_1(y_1|\theta) & = \delta_1 \gamma'_1(y_1) + \delta_2 \sum_{x_2} \gamma'_2(y_1|x_2)f_2(x_2|\theta), \\
& = \sum_{x_2}\underbrace{\left[\delta_1 \gamma_1'(y_1) + \delta_2\gamma'_2(y_1|x_2)\right]}_{=:\widehat{\gamma}(y_1|x_2)}f_2(x_2|\theta),
\end{align*}
for all $(y_1,\theta)$.\footnote{If we have such kernels, simply define $\gamma_t(y_1|x^t)  = \delta_1 \gamma'_t(y_1|x^t)$ and $\gamma_t(y_1,y_2|x^t)=  \delta_2 \gamma'_t(y_1|x^t)$, assuming that $\delta_t>0$.} These two equations have a very natural interpretation. The second equation simply states that $f_2$ must been more informative than $g_1$ to compensate for the fact that the decision-maker receives the information at the latest period. The first equation simply states that the modified experiment $\widehat{f}$ must be more informative than the modified $\widehat{g}$ , where $\widehat{f}$ provides no information with probability $\delta_1$ and the information of $f$ with probability $\delta_2$, and $\widehat{g}$ is the same as $g$. Heuristically, we can represent a modified experiment as
\begin{table}[h]
\centering
\begin{tabular}{|c|c|c|c|c|} \hline
\backslashbox{$\Theta$}{$X$} & $\emptyset$ & $x_2$ & $x_2'$ & $\dots$ \\ \hline
$\theta$ & $\delta_1$ & $\delta_2f_2(x_2|\theta)$ & $\delta_2 f_2(x_2'|\theta)$ & $\dots$\\ \hline
$\theta'$ & $\delta_1$ &  $\delta_2f_2(x_2|\theta')$ & $\delta_2 f_2(x_2'|\theta')$ & $\dots$\\ \hline
$\vdots$ & $\vdots$ & $\vdots$ & $\vdots$ & $\vdots$\\ \hline
\end{tabular}\caption{The modified experiment $\widehat{f}$.}
\end{table}

It is then immediate to verify that the numerical example we presented earlier induces the modified experiment $\widehat{f}$ in Table \ref{Table:intro-modified-expe}. To check that $\widehat{f}$ is more valuable than $\widehat{g}$, we compare the two distributions over posteriors (assuming a uniform prior) these experiments induce. The distributions are in Table \ref{tab:posteriors}:
\begin{table}[h]
\centering
\begin{tabular}{|c|c|c|c|c|c|}\hline
posteriors of $\theta_0$ & $0$ & $5/12$ & $1/2$ & $7/12$ & $1$ \\ \hline
$\widehat{g}$ & $0$ & $1/2$ & $0$ & $1/2$ & $0$ \\ \hline
$\widehat{f}$ & $1/4$ & $0$ &$1/2$ & $0$ & $1/4$ \\ \hline
\end{tabular}
\caption{Distribution of posteriors}\label{tab:posteriors}
\end{table}
\medskip 

Clearly, the experiment $\widehat{f}$ induces a distribution, which is a mean-preserving spread of the distribution induced by $\widehat{g}$, hence $f$ is more valuable than $g$.\medskip 

To conclude that section, we remark that it is possible to make yet another equivalent statement in terms of posteriors, that is, the distribution over the posteriors induced by $\sum_{t} \delta_t f^t$ dominates, in the convex order, the distribution over the posteriors induced by $\sum_{t} \delta_t g^t$. See \cite{whitmeyer-williams-2024}  for a formal statement.

\subsection{A First Application: Comparing Bernouilli Experiments} 
We present a simple application, comparing Bernouilli experiments. Readers more interested in generalizations can skip this subsection.\medskip  

There are two periods. Let $\Theta=X_1=X_2=Y_1=Y_2=\{0,1\}$. The  experiment $f$ is defined by $f_1(0|0)=f_1(1|1) =p $ and $f_2(0|x_1,0)=f_2(1|x_1,1)=q$ for all $x_1 \in \{0,1\}$, with $p \geq q \geq 1/2$. Similarly, the experiment $g$ is defined by  $g_1(0|0)=g_1(1|1) =p' $ and $g_2(0|y_1,0)=g_2(1|y_1,1)=q'$ for all $y_1 \in \{0,1\}$, with $p' \geq q' \geq 1/2$. In words, the two experiments consist of two independent, but not necessarily identical, draws from two Bernouilli distributions, with the second-period distribution less informative than the first-period's one. \medskip 

Theorem \ref{th:fixed-discounting-uncontrolled} states that $f$ is $\delta$-sufficient for $g$ if, and only if, the mixture experiment $\sum_t \delta_t f^t$ is (Blackwell) sufficient for $\sum_t \delta_t g^t$. To the best of our knowledge, very little is known about the comparison of mixtures of experiments. \cite{torgersen1970} offers a brief discussion, but does not go much further than stating that if $f^t$ is sufficient for $g^t$ for all $t$, so is the mixture $\sum_t \delta_t f^t$ for $\sum_t \delta_t g^t$. We believe the result below is new.\medskip  

We now characterize when the mixture experiment $\delta_1 f^1 + \delta_2 f^2$ is sufficient for the mixture experiment $\delta_1 g^1 + \delta_2 g^2$, where $f^2$ is the  experiment given by $f^2(1,1|1)=f^2(0,0|0)=pq$, $f^2(1,0|0)= f^2(0,1|1)=(1-p)q$,    
$f^2(0,1|0)= f^2(1,0|1)=p(1-q)$. Similarly, for $g^2$. Since the mixture experiments are dichotomies, we can apply Theorem 12.4.1 of \cite{blackwell-girshick} to compare the two. The comparison requires to prove that the distribution over posteriors  induced by $\delta_1 f^1 + \delta_2 f^2$ is a \emph{mean-preserving} spread of the one induced by $\delta_1 g^1 + \delta_2 g^2$.\medskip

Table \ref{tab:posterior} summarizes the probabilities over signal realizations and posteriors induced by the experiment $\delta_1 f^1 + \delta_2 f^2$.\footnote{We assume a uniform prior. Recall, however, that the comparison is independent of the choice of the prior as long as it has full support.}

\begin{table}[h!]
\resizebox{\columnwidth}{!}{%
\begin{tabular}{|c|c|c|c|c|c|c|}\hline 
signal: & $0$ & $1$ & $0,0$ & $0,1$ & $1,0$ & $1,1$ \\ \hline
$[\theta=0]$ & $\delta_1 p$ & $\delta_1(1-p)$ & $\delta_2p q$ & $\delta_2p(1-q)$ & $\delta_2(1-p)q$ & $\delta_2(1-p)(1-q)$  \\ \hline
$[\theta=1]$ & $\delta_1(1-p)$ & $\delta_1p$ &  $\delta_2(1-p)(1-q)$ & $\delta_2(1-p)q$ & $\delta_2 p(1-q)$ & $\delta_2 p q$ \\ \hline 
$2 \times $ proba. signal: & $\delta_1$ & $\delta_1$ & $\delta_2(p q + (1-p)(1-q))$ & $\delta_2(p(1-q) + (1-p)q)$ & $\delta_2((1-p)q+p(1-q))$ & $\delta_2 ((1-p)(1-q) + pq)$ \\ \hline
posterior of $[\theta=0]$ & $p =: \pi_{0} $ & $1-p =: \pi_{1}$ & $\frac{pq}{p q + (1-p)(1-q)}=: \pi_{00} $ & $\frac{p(1-q)}{p(1-q) + (1-p)q}=: \pi_{01}$ & $\frac{(1-p)q}{p(1-q) + (1-p)q}=:\pi_{10}$   & $\frac{(1-p)(1-q)}{p q + (1-p)(1-q)}=:\pi_{11}$ \\ \hline
\end{tabular}}
\caption{Probability of signals and posteriors for the experiment $\delta_1 f^1 + \delta_2 f^2$}\label{tab:posterior}
\end{table} 
\medskip 

It is immediate to verify that $\pi_{11} \leq \pi_{1} \leq \pi_{10} \leq  1/2 \leq \pi_{01} \leq \pi_{0} \leq \pi_{00}$. The ranking between $\pi_{10}$ and $\pi_{01}$ follows from the observation that the first-period experiment is (weakly) more informative than the second-period experiment, so that observing first the signal ``0'' and then the signal ``1''  is more indicative of the state being $[\theta=0]$ than the reverse ordering.  From the symmetry of the experiments, we also have that $\pi_{00}=1-\pi_{11}$, $\pi_{0} =1-\pi_{1}$, and $\pi_{01} = 1-\pi_{10}$.  We write $\lambda:= pq + (1-p)(1-q)$ for the probability of two identical signals. By symmetry, $\lambda/2$ is then the probability of either the signal ``$(0,0)$'' or ``$(1,1)$.''  Note that $\lambda \pi_{11} + (1-\lambda)\pi_{10}=\pi_1$. The probabilities over signal realizations and posteriors induced by the experiment $g$ are obtained similarly. We indicate them with a prime throughout.

\begin{theorem}\label{th:comparison-bernouilli}
The experiment $f$ is $\delta$-sufficient for $g$ if, and only if, the following two conditions hold: (a) $\pi_1 \leq \pi'_1$ and (b): if $\delta_2>0$, one of the following inequalities must be satisfied: 
\begin{enumerate}
\item[(i)] The experiment $f^1$ is sufficient for the experiment $g^2$: $\pi_1\leq  \pi'_{11}$. 
 \item[(ii)]  The experiment $f^2$ is sufficient for the experiment $g^2$, but $f^1$ is not sufficient for $g^2$: 
 \begin{align*}
 \pi_{11} \leq \pi'_{11} < \pi_1 \text{\;and\;} 
(\pi_{10}-\pi_{11})\lambda \geq (\pi_{10}-\pi_{11}')\lambda'.
 \end{align*} 
 \item[(iii)]  Neither the experiment $f^1$ nor the experiment $f^2$ are sufficient for $g^2$: 
 \begin{align*}
 \pi_{11}& \leq \pi'_{11} < \pi_1, \\ 
 (\pi_{10}-\pi_{11})\lambda & < (\pi_{10}-\pi_{11}')\lambda', \\ 
 (\pi_1-\pi_{11}) \lambda  &\geq (\pi_{1}-\pi_{11}') \lambda',\\
(\pi_{10}- \pi_{11})\delta_2 \lambda & \geq 
(\pi_{10}-\pi_{11}')\delta_2 \lambda' + (\pi_1-\pi_1')\delta_1.
 \end{align*}
\end{enumerate}
\end{theorem}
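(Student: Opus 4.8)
The plan is to reduce the $\delta$-sufficiency of $f$ for $g$ to a concrete comparison of two dichotomies, exactly as the text preceding the theorem sets up: by Theorem \ref{th:fixed-discounting-uncontrolled}, $f$ is $\delta$-sufficient for $g$ if and only if the mixture experiment $\delta_1 f^1 + \delta_2 f^2$ is Blackwell sufficient for $\delta_1 g^1 + \delta_2 g^2$. Since both sides are binary-state experiments (dichotomies), I would invoke Theorem 12.4.1 of \cite{blackwell-girshick}: sufficiency is equivalent to the distribution of posteriors induced by the $f$-mixture being a mean-preserving spread of the distribution of posteriors induced by the $g$-mixture (with respect to the fixed uniform prior, the choice of full-support prior being irrelevant). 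So the entire proof becomes: write down both posterior distributions explicitly (Table \ref{tab:posterior} and its primed analogue already do most of this), and characterize when one is a mean-preserving spread of the other. Because of the symmetry $\pi_{00}=1-\pi_{11}$, $\pi_0 = 1-\pi_1$, $\pi_{01}=1-\pi_{10}$ and likewise for the primed quantities, both distributions are symmetric around $1/2$, so the mean-preserving-spread relation can be checked by comparing the two ``integrated upper-tail'' (Hardy--Littlewood--Pólya) functions only on $[1/2,1]$; the constraint at the mean is automatic, and the endpoint constraint reduces to $\pi_{11}\le\pi'_{11}$, i.e. $f^2$ pushes mass at least as far out as $g^2$ at the extreme posterior.

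Concretely, I would define, for $\pi\in[1/2,1]$, the function $\Psi_f(\pi):=\sum_{s}(\text{prob. of signal }s)\,\max\{\mathrm{post}_s-\pi,0\}$ where $\mathrm{post}_s$ ranges over the posteriors in the $f$-mixture that exceed $1/2$ (namely $\pi_0$ with weight $\delta_1/2$, $\pi_{00}$ with weight $\delta_2\lambda/2$, and $\pi_{01}$ with weight $\delta_2(1-\lambda)/2$, after folding in the symmetric mass from below $1/2$), and similarly $\Psi_g$. The mean-preserving-spread condition is then $\Psi_f(\pi)\ge\Psi_g(\pi)$ for all $\pi\in[1/2,1]$, together with equality of means (automatic by symmetry). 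Since $\Psi_f$ and $\Psi_g$ are piecewise linear, convex, decreasing, and eventually zero, the inequality $\Psi_f\ge\Psi_g$ need only be checked at the finitely many kink points of both functions. The ordering $\pi_{11}\le\pi_1\le\pi_{10}\le 1/2\le\pi_{01}\le\pi_0\le\pi_{00}$ (and its primed version) tells me which kinks lie where; condition (a), $\pi_1\le\pi'_1$, is exactly the requirement $\Psi_f(\pi)\ge\Psi_g(\pi)$ in the regime near $\pi=1/2$ where both functions are linear with the full slope, and the three cases (i)--(iii) correspond to the possible relative positions of the $g$-kinks $\pi'_{11}$ and $\pi'_1$ among the $f$-kinks $\pi_{11}<\pi_1<\pi_{10}$.

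In more detail: in case (i), $\pi_1\le\pi'_{11}$ means the $g$-mixture's outer posteriors are so extreme that $g^2$ is already a garbling of $f^1$ alone, and once $\pi_1\le\pi'_{11}$ holds one checks directly that $\Psi_f\ge\Psi_g$ everywhere is implied (the $f$ distribution dominates ``coordinatewise'' in the relevant sense), so condition (a) plus (i) suffices. In case (ii), $\pi'_{11}<\pi_1$ so the comparison is genuinely between mixtures, and the binding kink-point inequalities are $\pi_{11}\le\pi'_{11}$ (endpoint/extreme-posterior constraint, i.e. $f^2$ sufficient for $g^2$) and the inequality at $\pi=\pi_{10}$, which after substituting the weights becomes $(\pi_{10}-\pi_{11})\lambda\ge(\pi_{10}-\pi'_{11})\lambda'$; the assumption that $f^1$ is not sufficient for $g^2$ ($\pi'_{11}<\pi_1$) is what makes this the relevant regime. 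In case (iii), when additionally $f^2$ is not sufficient for $g^2$ in the stronger sense (the second displayed inequality fails), two further kink-point inequalities (at $\pi=\pi_1$ and a combined one) must be added; these come from evaluating $\Psi_f-\Psi_g$ at the $g$-kink $\pi'_{11}$ and at $\pi_1$ and collecting terms, which yields exactly the third and fourth displayed inequalities. Throughout I would keep track that in case (iii) the garbling must mix information from both $f^1$ and $f^2$, which is why both an ``average'' constraint (third inequality) and a ``$\delta$-weighted'' constraint (fourth inequality) appear.

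The main obstacle I anticipate is the bookkeeping in case (iii): correctly enumerating the kink points of $\Psi_f$ and $\Psi_g$, determining which of them are binding (many of the obvious inequalities are implied by others once the orderings $\pi_{11}<\pi_1<\pi_{10}$, $\pi'_{11}<\pi'_1<\pi'_{10}$ and the case hypotheses are in force), and showing that the short list (a), plus the four inequalities of (iii), is not merely necessary but also sufficient — i.e. that no other kink-point inequality can fail once these hold. I would handle this by arguing monotonicity: $\Psi_f-\Psi_g$ is piecewise linear with slopes that change in a controlled way across the ordered kinks, so verifying nonnegativity at the ``last'' $g$-kink before each $f$-kink, together with the known behavior at $\pi=1/2$ (condition (a)) and at the extreme posterior (condition $\pi_{11}\le\pi'_{11}$), pins down the sign everywhere. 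A secondary, more routine obstacle is simply checking that the numbers in Table \ref{tab:posterior} are consistent (the posteriors sum correctly against the prior) and that the reduction to $[1/2,1]$ via symmetry is legitimate, i.e. that a symmetric distribution is a mean-preserving spread of another symmetric distribution iff the upper-tail integral inequality holds on $[1/2,1]$ — this is standard but worth stating cleanly.
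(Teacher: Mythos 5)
Your plan is essentially the paper's proof: reduce via the dichotomy criterion (Theorem 12.4.1 of \cite{blackwell-girshick}) to a mean-preserving-spread comparison of the two posterior distributions, encode this as an inequality $H\ge H'$ between piecewise-linear convex integrated-CDF functions, check it only at the finitely many kink points using linearity of one function and convexity of the other, exploit the symmetry about $1/2$ to halve the checks, and then eliminate redundant inequalities by the case analysis on where $\pi'_{11}$ and $\pi'_1$ fall relative to $\pi_{11}<\pi_1<\pi_{10}$. Your upper-tail function $\Psi$ is just the folded version of the paper's $\int_0^t F$, so the approach and the resulting bookkeeping coincide.
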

Theorem \ref{th:comparison-bernouilli} provides a complete characterization for the $\delta$-sufficiency of $f$. First, it states that $f^1$ must be sufficient for $g^1$, i.e., $\pi_1 \leq \pi_1'$ (equivalently, $p \geq p'$). This is obviously necessary and sufficient when $\delta_1=1$. Perhaps surprisingly, this is also necessary when $\delta_2>0$. The reason is that the experiment $f^2$ is sufficient for $g^2$ only if $p \geq p'$.\footnote{More generally, when comparing two experiments consisting of two independent Bernouilli distributions each, a necessary condition to rank them is $\max(p,q) \geq \max(p',q')$.} Thus, if $p < p'$, then $g^1$ is sufficient for $f^1$ and, simultaneously, $f^2$ cannot be sufficient for $g^2$, so that either the two mixtures are not comparable or $g$ is $\delta$-sufficient.

Second, the conditions (i)-(iii) are mutually exclusive. Condition (i) states that $f^1$ is sufficient for $g^2$, that is, even receiving a single signal from $f$ is more informative than two signals from $g$. If it is the case, then $f^2$ is also sufficient for $g^2$ (since $f^2$ is necessarily more informative than $f^1$) and $f^1$ sufficient for $g^1$ (since $g^1$ is necessarily less informative than $g^2$).

Third, condition (ii) states that $f^2$ is more informative than $g^2$, but $f^1$ is not. Since we already have that $f^1$ is more informative than $g^1$, it follows that period-by-period, $f$ is more informative than $g$, which suffices for $\delta$-sufficiency. It is worth noting that if $p \geq p'$ and $q \geq q'$, then either condition (i) or (ii)  holds. None of the aforementioned conditions depend on the discount factors. 

We now examine more closely the condition $(\pi_{10}-\pi_{11})\lambda \geq (\pi_{10}-\pi_{11}')\lambda'$. We first note that the condition is equivalent to $(\pi_{10}-\pi_1) \geq (\pi_{10}-\pi_{10}')\lambda' + (\pi'_{10} - \pi_1')$ and, therefore, is satisfied when $\pi_{10}' \leq \pi_{10}$.\footnote{Recall that 
$\lambda \pi_{11} + (1-\lambda)\pi_{10}=\pi_1$, $\lambda' \pi'_{11} + (1-\lambda')\pi'_{10}=\pi'_1$, $ \pi_{11} \leq \pi'_{11}$, and $\pi_1 \leq \pi_1'$.}  
In this case, the two posteriors $\pi_{11}$ and $\pi_{10}$ are more dispersed than the  posteriors $\pi'_{11}$ and $\pi'_{10}$, which suffices by symmetry of our problem.  The real bite of the condition is when $\pi_{11} \leq \pi_{11} ' \leq \pi_{10} < \pi_{10}'$.  In this instance, the condition essentially says that conditional on the signals $(1,1)$ and $(1,0)$, the average of the posteriors $\pi_{11}$ and $\pi_{10}$ is less than the average of  $\pi_{11}'$ and $\pi_{10}'$. By symmetry, the average of the posteriors $\pi_{00}$ and $\pi_{01}$ is more than the average of  $\pi_{00}'$ and $\pi_{01}'$. That is, $f^2$ induces more dispersed posteriors on average.\footnote{The characterization is reminiscent  of \cite{wu2023geometric} geometric characterization of the Blackwell's order. The main difference is that the posteriors are not affinely independent in our case.}

Finally, condition (iii) is the only condition, which depends on the discount factor. In particular, it implies that  $f$ is $\delta$-sufficient for $g$ only if $\delta_2$ is not too high. Indeed, since we exclude the case where $f^2$ is sufficient for $g^2$, we cannot have $\delta_2$ too high. (If $\delta_2$ were too high, either the two experiments wouldn't be comparable or the reverse ordering would occur.)

\subsection{Controlling Information}\label{sec:control-info} Up to now, the analysis rests on the assumption that the information the decision-maker receives over time does not depend on the decisions made. While it may be natural in few applications, it is less so in many. For instance, in  dynamic pricing problems, the prices a seller offers influence not only the realized profits, but also what is learned about the demand. More generally, the assumption precludes the decision-maker from learning about realized payoffs over time -- a strong assumption.\footnote{As signals, the realized payoffs depend on the decisions made.}  \medskip

The most natural idea is to incorporate the decisions into the statistical experiments, i.e., to write kernels as  $f_t: \Theta  \times X^{t-1} \times A^{t-1} \rightarrow \Delta(X_t)$. There is, however, a major problem with this modeling: the statistical experiments change as we change the decision problems. To extend the approach of \cite{blackwell51, blackwell53}, we must be able to vary the decision problems without changing the experiments to be compared.\medskip

The solution we adopt parallels the classical distinction between decisions/actions and consequences. For instance, in mechanism design theory, there is a set of messages (bids, votes, etc) players can choose from and a  map from messages to consequences (allocations, elected candidates, etc) -- messages are not consequences. A similar distinction is made in decision theory. We  follow a similar approach in that we distinguish between decisions and their \emph{informational} consequences. More precisely, we introduce sets of (informational) controls $K_t$ and maps $\kappa_t$ from actions to controls. The sets of controls are fixed and enter into the definition of statistical experiments, while the maps from actions to controls are part of the decision problem.  We view the controls as the subset of consequences, which directly influence the information the decision-maker receives over time. For instance, in a simple dynamic pricing problem, a seller chooses prices (the decisions), which determine the realized demands, points on the demand curve (the controls) the seller is learning about. In a more realistic pricing problem, a seller chooses not only prices, but also marketing strategies.  Different strategies, e.g., a mass campaign in national newspapers or targeted campaigns through social media, reach consumers differently, thus informing some consumers about price changes, while leaving others unaware. In such a problem, the seller only learns about the demand of consumers made aware of the prices offered. Prices and marketing strategies (the decisions) determine, possibly probabilistically, the sample of informed consumers and its demand (the controls), a noisy signal about the true demand (the state). In what follows, we give a formal description of the solution we adopt and derive a characterization for the comparison of statistical experiments. \medskip

\textbf{Experiments.} An experiment $f:=(f_1, f_2, \dots)$ is a  collection of kernels $f_t: \Theta \times K^{t-1} \times X^{t} \times \Theta \rightarrow \Delta(X_t)$, where $X_t$ is the set of signals at period $t$ and $K_t$ the set of controls. We write $f^t(x^t|k^{t-1},\theta)$ for $f_1(x_1|\theta) \times \dots \times f_t(x_t|x^{t-1},k^{t-1},\theta)$. Throughout, we compare the experiments $f$ and $g$, where the experiment $g$ is the collection of kernels $g_t: \Theta \times K^{t-1} \times Y^{t}  \rightarrow \Delta(Y_t)$.

\medskip 

\textbf{Decision problems.} A decision problem is a tuple $(A_t,u_t,\kappa_t)_t$, where $A_t$ is the set of decisions at period $t$, $u_t: A_t \times \Theta \rightarrow \mathbb{R}$ the payoff function at period $t$, and $\kappa_t:   A_t \times K^{t-1}\rightarrow \Delta(K_t)$ the control map. The control map summarizes how the decisions influence the information the decision-maker receives over time. Throughout, we write $\kappa$ for $(\kappa_t)_t$.\medskip  

\textbf{Strategies.} Fix the decision problem $(A_t,u_t, \kappa_t)_t$. We write $\sigma$ (resp., $\tau$) for the strategy adapted to the experiment $f$ (resp., $g$), that is, $\sigma$ is
the collection of maps $(\sigma_t:   A^{t-1} \times X^{t} \times K^{t-1}  \rightarrow \Delta(A_t))_t$ (resp., $(\tau_t:  A^{t-1} \times Y^{t} \times K^{t-1} \rightarrow \Delta(A_t))_t$). We denote $\proba_{\theta,f, \kappa, \sigma}$ the probability over histories of signals, controls and actions, induced by the strategy $\sigma$ and the experiment $f$, when the state is $\theta$, and write  $\mathbb{E}_{\theta,f, \kappa, \sigma}$ for the expectation with respect to that probability. We define $\proba_{\theta,g, \kappa, \tau}$ and  $\mathbb{E}_{\theta, g, \kappa, \tau}$ similarly. \medskip

We make three important assumptions. First, we assume that the decision-maker observes the realized controls. This assumption allows us to focus on beliefs about the state, which is our main interest. Without this assumption, beliefs about realized controls would also be instrumental.   Second, we assume that the control maps do not depend on the state. This assumption is needed to maintain a sharp dichotomy between statistical experiments and decision problems. If the control maps were to depend on the state, the realized controls would also be informative signals about the state -- an additional statistical experiment. Third, we assume that only the current choice affects the current control; past choices affect the current control only through the realization of past controls $K^{t-1}$. 

\medskip 

We now turn to the comparison between the statistical experiments $f$ and $g$. We continue to say that $f$ is more valuable than $g$ if for all decision problems $(A_t,u_t,\kappa_t)_t$, for all strategy $\tau$, there exists a strategy $\sigma$, which gives a (weakly) higher payoff in all states.  (Recall that $(A_t,u_t)=(A,u)$ for all $t$.) As in the uncontrolled case, it is immediate to see that $f$ is more valuable than $g$ if, and only if, for all discounted problems $(A,u,\kappa)$, for all strategy $\tau$, there exists a strategy $\sigma$ such that 
\begin{align*}
\sum_{t}\delta_t\marg_{A_t}\mathbb{P}_ {\theta, g,\kappa, \tau} = \sum_{t} \delta_t\marg_{A_t} \mathbb{P}_{\theta, f,\kappa,\sigma}.  
\end{align*}
Moreover, we can still write the discounted probability of $a$ as:
\begin{align*}
\sum_t \delta_t \proba_{\theta,g,\kappa, \tau}(\boldsymbol{a}_t=a) & = \sum_{t,y^t,k^{t-1}} \delta_t \proba_{\theta,g,\kappa, \tau}(y^t,k^{t-1})\proba_{\theta,g, \kappa,\tau}(\boldsymbol{a}_t=a|y^t,k^{t-1}).
\end{align*}
Unlike the uncontrolled case, however, it is no longer true that the probability $\proba_{\theta,g,\kappa, \tau}(y^t,k^{t-1})$ is independent of the strategy $\tau$, since the strategy influences the realization of the controls and, consequently, of the signals. Yet, we now argue that a similar idea continues to work.\medskip 

 We start with an informal exposition. Observe that the probability measure $\proba_{\theta,g,\kappa, \tau}$ induces kernels $\xi^{\boldsymbol{y}}_t: Y^{t} \times K^{t-1} \rightarrow \Delta(K_t)$, defined by $\xi^{\boldsymbol{y}}_t(k_t|y^t,k^{t-1}):= \proba_{\theta,g,\kappa, \tau}(k_t|y^t,k^{t-1})$  whenever $\proba_{\theta,g,\kappa, \tau}(y^t,k^{t-1}) >0$. In turn, this defines an \emph{extended} experiment $\proba_{\theta,g,\xi^{\boldsymbol{y}}}$ over profiles of signals and controls, where the probability of $(y^t,k^{t-1})$  is
 \[g_1(y_1|\theta)\times \xi_1^{\boldsymbol{y}}(k_1|y^1,k^0) \times \dots \times \xi_t^{\boldsymbol{y}}(k_t|y^{t},k^{t-1}) \times g_t(y_{t+1}|y^{t},k^{t},\theta),\]
 when the state is $\theta$. (We let $K_0$ be the singleton $\{k_0\}$.)  Analogously, the probability measure $\proba_{\theta,f,\kappa, \sigma}$ defines the kernels $\xi^{\boldsymbol{x}}_t: X^{t} \times K^{t-1} \rightarrow \Delta(K_t)$ and the extended experiment $\proba_{\theta,g,\xi^{\boldsymbol{x}}}$. The broad idea is to treat the extended experiments $\proba_{\theta,g,\xi^{\boldsymbol{y}}}$ and $\proba_{\theta,g,\xi^{\boldsymbol{x}}}$ as two \emph{uncontrolled} statistical experiments and to apply the result from the previous section.\medskip

 There are two important caveats, though. First, as the decision problem varies, so does $\kappa$ and, consequently, the kernels $ \xi^{\boldsymbol{y}}$ and  $\xi^{\boldsymbol{x}}$. To see this, simply note that $ \xi_1^{\boldsymbol{y}}(k_1|y^1,k^0) = \sum_{a_1}\tau_1(a_1|y_1)\kappa_1(k_1|a_1,k^0)$. The solution is to consider a collection of such kernels. Second, and most importantly, the kernels $ \xi^{\boldsymbol{y}}$ and  $\xi^{\boldsymbol{x}}$ are not independent. As already mentioned, they both depend on the same $\kappa$. There is another, more subtle, interdependence.  Recall that the aim is to emulate the discounted distribution over actions of any strategy $\tau$ (adapted to $g$)  by a strategy $\sigma$  (adapted to $f$), i.e., for all $\tau$, there must exist $\sigma$ such that
 \begin{align*}
\sum_{t}\delta_t\marg_{A_t}\mathbb{P}_ {\theta, g,\kappa, \tau} = \sum_{t} \delta_t\marg_{A_t} \mathbb{P}_{\theta, f,\kappa,\sigma}.  
\end{align*}
 So, suppose, as we do in static problems, that we emulate the distribution over the first-period actions by first sampling a fictitious $y_1$ with probability $\gamma_1(y_1|x_1)$ when the first-period signal is $x_1$ and then choosing $a_1$ with probability $\tau_1(a_1|y_1)$, that is, $\sigma_1$ is given by $\sigma_1(a_1|x_1):= \sum_{y_1}\tau_1(a_1|y_1)\gamma_1(y_1|x_1)$. It follows that  
 \begin{align*}
 \xi_1^{\boldsymbol{x}}(k_1|x^1,k^0)= \sum_{y_1} \underbrace{\left(\sum_{a_1}\kappa_1(k_1|a_1,k^0)\tau_1(a_1|y^1)\right)}_{=\xi_1^{\boldsymbol{y}}(k_1|y^1,k^0)} \gamma_1(y_1|x_1),
 \end{align*}
 that is, $ \xi_1^{\boldsymbol{x}}$ is a ``garbled'' version of $ \xi_1^{\boldsymbol{y}}$. Emulating $\tau$ with the strategy $\sigma$ forces the two kernels $\xi^{\boldsymbol{y}}$ and  $\xi^{\boldsymbol{x}}$ to be interdependent. For another perspective, suppose that the strategy $\tau$ always picks $a$, while $a$ is associated with relatively ``uninformative'' controls under $f$.\footnote{That is, for these controls $k^{t-1}$, the experiments $f_t(\cdot|\cdot,k^{t-1},\theta)$ are not too informative.} Then, playing $a$ under $\sigma$  limits the informativeness of future $X$-signals. In other words, using information from $f$ at early periods to simulate the information from $g$ constrains the information available at later dates. The characterization needs to capture this dynamic effect.\medskip

We turn to a formal exposition. Let $\xi:=(\xi_t)_t$ be a \emph{test kernel} with $\xi_t: \bigcup_{t'=1}^{T}(Y^{t'} \times K^{t'-1}) \times K^{t-1} \rightarrow \Delta(K_t)$ and $\Xi$ the set of all test kernels. We interpret the kernel $\xi_t$ as a map from profiles of \emph{simulated} signals and controls $(y^{t'},l^{t'-1})$ and \emph{realized} controls $k^{t-1}$ to probabilities over controls at period $t$. Given $\xi$, the  probability $\mathbb{P}_{\theta, g, \;\xi}$ over profiles of signals and controls (in $\bigcup_{t=1}^{T}( Y^t \times K^{t-1})$) is defined by: 
\begin{align*}
\mathbb{P}_{\theta, g, \;\xi}(y^t,l^{t-1}) = & g_1(y_1|\theta) \times \xi_1(l_1|(y^1,l^0),l^0) \times \dots \times \\
& g_{t-1}(y_{t-1}|y^{t-2},l^{t-2},\theta) \times \xi_{t-1}(l_{t-1}|(y^{t-1},l^{t-2}),l^{t-2}) \times \\
& g_{t}(y_t|y^{t-1},l^{t-1},\theta),
\end{align*}
for all $(y^t,l^{t-1}) \in Y^t \times K^{t-1}$.  The definition of a test kernel may at first be surprising -- a reader would perhaps expect it to be defined from $Y^t \times K^{t-1}$ to $\Delta(K_t)$, as we did with $\xi^{\boldsymbol{y}}$. The need to adopt a more complex definition follows from the dual role garblings  have, that is, garblings emulate strategies and, simultaneously, influence the flow of information through the controls. This parallels the dual role decisions have, as determining payoffs and, simultaneously, influencing the flow of information. Formally, as in the uncontrolled case, consider the profile of garblings $\gamma:=(\gamma_t)_t $, where $\gamma_t: X^t \times K^{t-1} \rightarrow \Delta\left( \bigcup_{t'=1}^{T}(Y^{t'} \times K^{t'-1})\right)$ for each $t=1,\dots,T$. Given $(\xi,\gamma)$, we define the  probability $\mathbb{P}_{\theta, f, \xi \circ \gamma}$ over signals and controls (in $\bigcup_{t=1}^{T} (X^{t} \times K^{t-1})$) by:
\begin{align*}
\mathbb{P}_{\theta, f, \xi \circ \gamma}(x^t,k^{t-1}) = & f_1(x_1|\theta) \times \left(\sum_{t',y^{t'},l^{t'-1}}\xi_1(k_1|(y^{t'},l^{t'-1}),k^0)\gamma_1(y^{t'},l^{t'-1}|x^1,k^0)\right) \times \dots  \times \\
 & f_{t-1}(x_{t-1}|x^{t-2},k^{t-2},\theta) \times \left(\sum_{t',y^{t'},l^{t'-1}}\xi_{t-1}(k_{t-1}|(y^{t'},l^{t'-1}),k^{t-2})\gamma_{t-1}(y^{t'},l^{t'-1}|x^{t-1},k^{t-2})\right) \times \\
 & f_t(x_t|x^{t-1},k^{t-1},\theta),
\end{align*}
for all $(x^t,k^{t-1})$. In words, given $(x^{t-1},k^{t-2})$, the decision-maker \emph{simulates} a fictitious history $(y^{t'},l^{t'-1})$ of signals and controls with probability $\gamma_{t-1}(y^{t'},l^{t'-1}|x^{t-1},k^{t-2})$, then draws the control $k_{t-1}$ with probability $\xi_{t-1}(k_{t-1}| (y^{t'},l^{t'-1}), k^{t-2})$, that is, according to the test function $\xi_{t-1}$, given the simulated history $(y^{t'},l^{t'-1})$ and the actual realization of past controls $k^{t-1}$, and then receives the signal $x_t$ with probability $f_t(x_t|x^{t-1},k^{t-1}, \theta)$, when the state is $\theta$.\footnote{We stress that $t'$ may differ from $t$: At period $t$, the decision-maker simulates histories of signals and controls $(y^{t'}, l^{t'-1})$ of any length $t'$.} The probability of transitioning from $(x^{t-1},k^{t-2})$ to $(x^{t-1},k^{t-1})$ is therefore
\begin{align*}
(\xi_{t-1} \circ \gamma_{t-1}) (k_{t-1}|x^{t-1},k^{t-2}):=\sum_{t',y^{t'},l^{t'-1}}\gamma_{t-1}(y^{t'},l^{t'-1}|x^{t-1},k^{t-2})\xi_{t-1}(k_{t-1}|(y^{t'},l^{t'-1}),k^{t-2}).
\end{align*}

For any collection of test kernels $\xi$ and garblings $\gamma$,  we thus generate two linked processes $\mathbb{P}_{\theta, g, \;\xi}$ and $\mathbb{P}_{\theta, f, \xi \circ \gamma}$ over signals and controls, which we then treat as \emph{uncontrolled} experiments. We then apply the condition of $\delta$-sufficiency on these two experiments, with the additional consistency requirement that the garblings defining  $\mathbb{P}_{\theta, f, \xi \circ \gamma}$ coincide with the garbling defining $\delta$-sufficiency.

\begin{theorem}\label{fixed-discounting} The following statements are equivalent: 
\begin{enumerate}[label=(\arabic*)]
\item The experiment $f$ is more valuable than the experiment $g$ in all discounted problems, where the decision-maker's discount factor is $\delta$.
\item For all decision problems $(A,u, \kappa)$, for all  
strategy $\tau$, there exists a strategy $\sigma$ such that 
\begin{align*}
\sum_{t}\delta_t\marg_{A_t}\mathbb{P}_ {\theta, g,\kappa, \tau} = \sum_{t} \delta_t\marg_{A_t} \mathbb{P}_{\theta, f,\kappa,\sigma},  \text{for all\;} \theta.
\end{align*}
\item $\delta$-sufficiency: For all $\xi \in \Xi$, for all $t'=1,\dots,T$, there exists a kernel $\gamma_{t'}: X^{t'} \times K^{t'-1} \rightarrow \Delta\left(\bigcup_{t=1}^{T}(Y^t \times K^{t-1})\right)$ such that 
\begin{align*}
\delta_t \mathbb{P}_{\theta,g,\,\xi}(y^t,l^{t-1}) = \sum_{t'=1}^{T} \sum_{x^{t'} \in X^{t'},  k^{t'-1} \in K^{t'-1}}\delta_{t'}\mathbb{P}_{\theta,f,\,\xi\circ \gamma}(x^{t'},k^{t'-1})\gamma_{t'}(y^t,l^{t-1}|x^{t'},k^{t'-1}),
\text{for all\;} (t,y^t,l^{t-1}, \theta).
\end{align*} 
\end{enumerate}
\end{theorem}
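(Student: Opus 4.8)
Our proof follows the blueprint of Theorem~\ref{th:fixed-discounting-uncontrolled}: reduce the discounted problem to a static comparison of (now \emph{extended}) experiments and invoke Blackwell's theorem. The implication $(2)\Rightarrow(1)$ is immediate, since replicating the discounted action distribution replicates the discounted payoff for every $u$. For $(1)\Rightarrow(2)$ the argument is the same as in the uncontrolled case: fix a control map $\kappa$; the set of discounted action distributions $\{\sum_t\delta_t\marg_{A_t}\mathbb{P}_{\theta,f,\kappa,\sigma}\}_{\sigma}$, indexed by $\theta$, is compact and convex and contains every signal-independent distribution, and a discounted payoff is linear in this object, so Blackwell's separating-hyperplane argument (the one behind the equivalence $(1)\Leftrightarrow(2)$ of Theorem~\ref{th:blackwell}) shows that value domination over all payoff functions forces the inclusion of the $g$-achievable set in the $f$-achievable one. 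The real work is $(2)\Leftrightarrow(3)$, and the observation that makes it tractable is that the test-kernel/garbling formalism is designed precisely so that, in a suitable decision problem, a control map $\kappa$ can be taken to be \emph{literally} a test kernel $\xi$; once $\kappa=\xi$, the ``fictitious'' process $\mathbb{P}_{\theta,f,\xi\circ\gamma}$ coincides with the genuine process $\mathbb{P}_{\theta,f,\kappa,\sigma}$ generated by a strategy $\sigma$ built from the garbling $\gamma$, and $\delta$-sufficiency on the extended experiments becomes exactly the ``replicate the discounted action distribution'' requirement of $(2)$.

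For $(3)\Rightarrow(2)$, fix a discounted problem $(A,u,\kappa)$ and a strategy $\tau$. Let $\overline\tau_t(a\mid y^t,l^{t-1}):=\mathbb{P}_{\theta,g,\kappa,\tau}(\mathbf{a}_t=a\mid y^t,l^{t-1})$, which is independent of $\theta$ (actions are chosen from observed signals and controls and controls from actions, so the $g$-factors cancel in the conditional). Build a test kernel $\xi$ by $\xi_s\big((y^{t},l^{t-1}),k^{s-1}\big)(k_s):=\sum_a\overline\tau_{t}(a\mid y^{t},l^{t-1})\,\kappa_s(k_s\mid a,k^{s-1})$; on the diagonal $t=s$ this is the control transition induced by $\tau$ and $\kappa$, so $\mathbb{P}_{\theta,g,\xi}=\mathbb{P}_{\theta,g,\kappa,\tau}$ on histories of signals and controls. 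Apply $(3)$ to this $\xi$ to obtain garblings $\gamma_{t'}$, and define $\sigma_{t'}(a\mid x^{t'},k^{t'-1}):=\sum_{t,y^t,l^{t-1}}\gamma_{t'}(y^t,l^{t-1}\mid x^{t'},k^{t'-1})\,\overline\tau_t(a\mid y^t,l^{t-1})$ --- ``simulate a fictitious $g$-history, then play $\overline\tau$ on it.'' Computing the control transition under $(\sigma,\kappa)$ and using the definition of $\xi$ gives $\xi_s\circ\gamma_s$, hence $\mathbb{P}_{\theta,f,\kappa,\sigma}=\mathbb{P}_{\theta,f,\xi\circ\gamma}$ on signal--control histories. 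Summing the $\delta$-sufficiency identity of $(3)$ against $\overline\tau_t(a\mid\cdot)$ over $(t,y^t,l^{t-1})$ then yields $\sum_{t}\delta_t\marg_{A_t}\mathbb{P}_{\theta,g,\kappa,\tau}=\sum_{t'}\delta_{t'}\marg_{A_{t'}}\mathbb{P}_{\theta,f,\kappa,\sigma}$, which is $(2)$.

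For $(2)\Rightarrow(3)$, fix an arbitrary test kernel $\xi\in\Xi$ and consider the ``canonical'' decision problem in which $A_t=A:=\bigcup_{t'=1}^{T}(Y^{t'}\times K^{t'-1})$ for all $t$ (the payoff function plays no role), and the control map reads the simulated history off the action: $\kappa_t\big(k_t\mid (y^{t'},l^{t'-1}),k^{t-1}\big):=\xi_t\big(k_t\mid (y^{t'},l^{t'-1}),k^{t-1}\big)$ --- a legitimate control map, since it does not depend on $\theta$ or on past actions and the decision-maker observes controls. Let $\tau$ be the ``truthful'' strategy that at period $t$, facing the $g$-history $(y^t,k^{t-1})$, plays the action $(y^t,k^{t-1})\in A$; then $\mathbb{P}_{\theta,g,\kappa,\tau}$ on histories coincides with $\mathbb{P}_{\theta,g,\xi}$, and $\sum_t\delta_t\marg_{A_t}\mathbb{P}_{\theta,g,\kappa,\tau}$ evaluated at the action $(y^t,l^{t-1})$ equals $\delta_t\mathbb{P}_{\theta,g,\xi}(y^t,l^{t-1})$, the left-hand side of $(3)$. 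Apply $(2)$ to obtain $\sigma$ replicating this discounted action distribution and let $\gamma_{t'}$ be $\sigma_{t'}$ averaged over past own actions, a $\theta$-independent kernel $X^{t'}\times K^{t'-1}\to\Delta\!\big(\bigcup_t(Y^t\times K^{t-1})\big)$. Because $\kappa=\xi$, the control transition under $(\sigma,\kappa)$ is $\xi_s\circ\gamma_s$, so $\mathbb{P}_{\theta,f,\kappa,\sigma}=\mathbb{P}_{\theta,f,\xi\circ\gamma}$ on signal--control histories; substituting this into the identity furnished by $(2)$ yields exactly the $\delta$-sufficiency identity of $(3)$ for the chosen $\xi$, with the same $\gamma$ defining $\mathbb{P}_{\theta,f,\xi\circ\gamma}$ and witnessing $\delta$-sufficiency, as the consistency clause demands.

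The main obstacle is $(2)\Rightarrow(3)$: one must realize an abstract, arbitrary test kernel by an honest decision problem, and the only device that works is to take the action set to be the space of simulated signal--control histories and the control map to be $\xi$ itself. Everything then hinges on the bookkeeping identity $\mathbb{P}_{\theta,f,\kappa,\sigma}=\mathbb{P}_{\theta,f,\xi\circ\gamma}$ when $\kappa=\xi$, i.e.\ on verifying that composing a garbling with a test kernel reproduces exactly the control dynamics generated by the corresponding strategy, across simulated histories of \emph{all} lengths $t'$ --- the same dual role that garblings play (emulating strategies while steering the flow of information) is what forces the elaborate domain of test kernels and must be matched carefully. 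The remaining ingredients --- $\theta$-independence of the reduced strategies $\overline\tau$ and of the garblings extracted from $\sigma$, the reduction of behavioral strategies to ones not conditioning on past own actions, and the compactness/convexity used in the hyperplane step of $(1)\Leftrightarrow(2)$ --- are routine and parallel the uncontrolled case.
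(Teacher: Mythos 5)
Your proposal is correct and follows essentially the same route as the paper's own proof: for $(3)\Rightarrow(2)$ you build the test kernel $\xi_s(k_s\mid(y^t,l^{t-1}),k^{s-1})=\sum_a\overline\tau_t(a\mid y^t,l^{t-1})\kappa_s(k_s\mid a,k^{s-1})$ and the strategy ``simulate a fictitious $g$-history, then play $\overline\tau$,'' and for $(2)\Rightarrow(3)$ you use the canonical problem with $A=\bigcup_{t'}(Y^{t'}\times K^{t'-1})$, $\kappa=\xi$, the truthful $\tau$, and extract $\gamma$ from $\sigma$ by averaging over past own actions --- exactly the paper's construction, including the key bookkeeping identity $\mathbb{P}_{\theta,f,\kappa,\sigma}=\mathbb{P}_{\theta,f,\xi\circ\gamma}$. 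The only piece you omit is the case $T=+\infty$, where the canonical action set is infinite and the paper finishes by truncating to $T_n<\infty$ and applying a Cantor-intersection argument to the resulting decreasing sequence of nonempty closed sets of garblings.
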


It is instructive to reformulate the third statement as follows: for all $\xi$, there exist garblings $\gamma$ and $\widehat{\gamma}$ such that
\begin{align}\label{eq:dual-role}
\delta_t \mathbb{P}_{\theta,g,\,\xi}(y^t,l^{t-1}) = \sum_{t', x^{t'}, k^{t'-1}}\delta_{t'}\mathbb{P}_{\theta,f,\,\xi\circ \widehat{\gamma}}(x^{t'},k^{t'-1})\gamma_{t'}(y^t,l^{t-1}|x^{t'},k^{t'-1}),
\text{for all\;} (t,y^t,l^{t-1}, \theta),
\end{align} 
and $\widehat{\gamma} =\gamma$.  Write $\mathbb{P}^t_{\theta,g,\,\xi}$ for the marginal of $\mathbb{P}_{\theta,g,\,\xi}$ over $Y^t \times K^{t-1}$ and, analogously, for $\mathbb{P}^t_{\theta,f,\,\xi\circ \widehat{\gamma}}$. The role of the garbling $\gamma$ is then to simulate the experiment $\sum_t \delta_t \mathbb{P}^t_{\theta,g,\,\xi}$ from the experiment $\sum_t \delta_t \mathbb{P}^t_{\theta,f,\,\xi\circ \widehat{\gamma}}$, as in \cite{blackwell51} and Section \ref{sec:basics}.\footnote{Note that if each $K_t$ is a singleton, then $\mathbb{P}^t_{\theta,g,\,\xi}$ (resp., $\mathbb{P}^t_{\theta,f,\,\xi\circ \widehat{\gamma}}$) is $g^t (\cdot|\theta)$ (resp.,) $f^t(\cdot|\theta)$.} We stress that the garbling $\gamma$ depends on both the signals and controls because not only signals, but also controls, are informative about the states. As already mentioned, however, using information from one controlled experiment $f$ to simulate another controlled experiment $g$ also has an endogenous effect on the informativeness of $f$. The role of the garbling $\widehat{\gamma}$ is to capture that effect. Finally, the consistency requirement $\widehat{\gamma} =\gamma$ stipulates that a unique garbling should play these two roles.\medskip

We conclude with a few additional remarks. First, it is easy to check that if all sets of controls are singletons, we recover the characterization we presented earlier in Theorem \ref{th:fixed-discounting-uncontrolled}. Second, the condition of $\delta$-sufficiency requires to consider \emph{all} test kernels, a consequence of considering all control maps $\kappa$. This is not uncommon in mathematics, e.g., the weak convergence of probability measures requires the converge of their expected values for \emph{all} continuous and bounded functions.  Yet, it would be nice to restrict attention to a smaller class of kernels, e.g., the deterministic kernels. An alternative would be to restrict the set of control maps, e.g., to associate ``higher'' controls with ``higher'' actions. We leave these open questions for future research. Third, an algorithm to check for $\delta$-sufficiency is as follows. For each $\widehat{\gamma}$,  associate all  garblings $\gamma$, if any, such that Equation (\ref{eq:dual-role}) holds, a linear programming problem. This defines a convex-valued correspondence from the set of garblings to its power set. If for all $\xi$, such a correspondence admits a fixed point, then $\delta$-sufficiency holds.  

\subsection{A Second Application: Controlling the Arrival Time of News}\label{sec:example-controlled}
We compare two \emph{controlled} experiments $f$ and $g$, which differ in their arrival time of the same information -- the control influences the arrival time.\footnote{\cite{whitmeyer-williams-2024} study the uncontrolled case.}  The decision-maker observes  $z \in Z$ at period $t \in \{1,2\}$ with probability $\alpha^k_t h(z|\theta)$ and $\beta^k_t h(z|\theta)$, respectively, when the parameter is $\theta \in \Theta$  and the control $k \in K$. 
The decision-maker observes nothing with the complementarity probabilities  $\alpha_{\emptyset}^k:=1-\alpha_1^k - \alpha_2^k$  and $\beta_{\emptyset}^k= 1-\beta_1^k - \beta_2^k$, respectively. We assume that $\alpha_1^k$ and $\beta_1^k$ are independent of $k$.\footnote{Formally, we assume that $K_0$ is a singleton and write $k$ for $k_1 \in K_1$. When $K_0$ is not a singleton, the same arguments apply $k_0$-by-$k_0$.} In words,  the decision-maker  controls the probability of observing the signal $z$ at the second period, conditional on  $z$ not being observed at the first period.\medskip  

With the same notation as in Section \ref{sec:control-info}, the experiment $f$ reads:  
\begin{align*}
& f_1(z|\theta)  = \alpha_1 h(z|\theta) , f_1(\emptyset|\theta)  = 1-\alpha_1, \\
& f_2(\emptyset|(k,z), \theta)  = 1, f_2(z|(k, \emptyset),\theta) = \frac{\alpha^k_2}{1-\alpha_1}h(z|\theta),  f_2(\emptyset|(k,\emptyset), \theta)  = \frac{\alpha_0^k}{1-\alpha_1},
\end{align*}
where we use the symbol $\emptyset$ to denote the absence of signals at a period. A similar description holds for $g$. 
\medskip

We now derive sufficient conditions for $f$ to be more valuable than $g$. Recall that for all $\xi$, we need to find a garbling $\gamma$ such that \[
\delta_{t'} \mathbb{P}_{\theta,g, \xi}(y^{t'}, l^{t'-1}) = \sum_{t,k^{t-1},x^{t}}\delta_t \gamma(y^{t'}, l^{t'-1}| x^t, k^{t-1}) \mathbb{P}_{\theta,f, \xi \circ \gamma}(k^{t-1},x^t),\] 
for all $(t', y^{t'}, l^{t'-1})$.\medskip 

Fix $\xi$. We first derive the expression $\mathbb{P}_{\theta,g, \xi}$ for all possible profiles of controls and observations, that is, $(z)$, $(\emptyset)$, $((z,\emptyset),k)$, $((\emptyset,z), k)$, and $((\emptyset,\emptyset), k)$, for all $(z,k)$. For instance,   $((z,\emptyset), k)$ means that the decision-maker observed $z$ in the first period and the realized control is $k$. As another instance, $((\emptyset,\emptyset), k)$ means that no observation has been made by the end of the second period, while the realized control is $k$. We have the following: 
\begin{align*}
\mathbb{P}_{\theta,g, \xi}(z)  & = \beta_1 h(z|\theta), \\
 \mathbb{P}_{\theta,g, \xi}(\emptyset) & = 1-\beta_1, \\
\mathbb{P}_{\theta,g, \xi}((z,\emptyset), k)  & = \beta_1 h(z|\theta)\xi(k|z), \\
 \mathbb{P}_{\theta,g, \xi}((\emptyset, z), k) &  = \beta^k_2 h(z|\theta)\xi(k|\emptyset),\\
\mathbb{P}_{\theta,g, \xi}((\emptyset, \emptyset), k) &  = \beta_{\emptyset}^k \xi(k|\emptyset).
\end{align*}
Similarly, $\mathbb{P}_{\theta,f, \xi \circ \gamma}$ takes the following values:
\begin{align*}
\mathbb{P}_{\theta,f, \xi \circ \gamma}(z)  & = \alpha_1 h(z|\theta), \\
\mathbb{P}_{\theta,f, \xi \circ \gamma}(\emptyset)  & = 1-\alpha_1, \\
\mathbb{P}_{\theta,f, \xi \circ \gamma}((z,\emptyset), k)  & = \alpha_1 h(z|\theta) \left(\sum_{\tilde{z}}\gamma(\tilde{z}|\hat{z})\xi(k|\tilde{z})\right), \\
 \mathbb{P}_{\theta,f, \xi \circ \gamma}((\emptyset, z), k) &  = \alpha_2^k h(z|\theta) \left(\sum_{\tilde{z}}\gamma(\tilde{z}|\emptyset)\xi(k|\tilde{z})\right),\\
\mathbb{P}_{\theta,f, \xi \circ \gamma}((\emptyset, \emptyset), k) &  = \alpha_{\emptyset}^k \left(\sum_{\tilde{z}}\gamma(\tilde{z}|\emptyset)\xi(k|\tilde{z})\right).
\end{align*}

Thus, for $f$ to be more valuable than $g$, we need to prove the existence of a garbling $\gamma$ such that:
\begin{align*}
\delta_{t} \mathbb{P}_{\theta,g, \xi}(y^{t}, l^{t-1})   = & \delta_1 \sum_{\hat{z}}\gamma(y^{t}, l^{t-1}|\hat{z})  \alpha_1 h(\hat{z}|\theta) + \\
&\delta_1 \gamma(y^{t}, l^{t-1}|\emptyset)  (1-\alpha_1) +\\
&\delta_2 \sum_{\hat{z},k}\gamma(y^{t}, l^{t-1}|(\hat{z},\emptyset), k) \alpha_1 h(\hat{z}|\theta) \left(\sum_{\tilde{z}}\gamma(\tilde{z}|\hat{z})\xi(k|\tilde{z})\right) + \\
&\delta_2 \sum_{\hat{z},k}\gamma(y^{t}, l^{t-1}| (\emptyset, \hat{z}), k)\alpha_2^k h(\hat{z}|\theta) \left(\sum_{\tilde{z}}\gamma(\tilde{z}|\emptyset)\xi(k|\tilde{z})\right) +\\
&\delta_2 \sum_{\hat{z},k}\gamma(y^{t}, l^{t-1}|(\emptyset, \emptyset), k) \alpha_{\emptyset}^k  \left(\sum_{\tilde{z}}\gamma(\tilde{z}|\emptyset)\xi(k|\tilde{z})\right),
\end{align*}
for all possible profiles of controls and observations $(t, y^{t}, l^{t-1})$. We have the following result.

\begin{proposition} \label{prop:ex-controlled}
The controlled experiment $f$ is $\delta$-sufficient for the controlled experiment $g$ if, and only if, $\alpha_1 +\delta_2 \alpha_2^k \geq \beta_1 + \delta_2 \beta_2^k$ for all $k$. 
\end{proposition}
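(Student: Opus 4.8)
The plan is to invoke Theorem \ref{fixed-discounting}: $f$ is $\delta$-sufficient for $g$ if and only if, for every test kernel $\xi\in\Xi$, there is a garbling $\gamma$ solving the fixed-point equation in statement~(3). The first step is to simplify, for a \emph{fixed} $\xi$, the two uncontrolled experiments $\sum_t\delta_t\mathbb{P}^t_{\theta,g,\xi}$ and $\sum_t\delta_t\mathbb{P}^t_{\theta,f,\xi\circ\gamma}$. The state enters only through the kernel $h(\cdot\mid\theta)$, and the decision-maker observes the realized controls while the control maps are state-independent; hence each of these mixtures is Blackwell-equivalent to a \emph{binary} experiment of the form ``draw $z\sim h(\cdot\mid\theta)$ with probability $p$, reveal nothing with probability $1-p$'', since the timing of a revelation and the realized control carry no further information about $\theta$ once one conditions on whether, and which, $z$ was observed. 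A direct computation, using $\delta_1+\delta_2=1$ and $\sum_k\xi(k\mid z)=1$, gives the revelation probability of the $g$-mixture: $p_g(\xi)=\beta_1+\delta_2\sum_k\beta_2^k\,\xi(k\mid\emptyset)$.

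The second step pins down the revelation probability of the $f$-mixture. Here the ``dual role'' of $\gamma$ matters: $\gamma$ simulates $g$'s signals and, through $\xi\circ\gamma$, also feeds controls into $f$'s period-two kernel $\alpha_2^k$. Matching the state-dependent weights of $\mathbb{P}_{\theta,g,\xi}$ forces $\gamma$ to route the event ``$f$ reveals nothing in period one'' to simulated $g$-histories that also reveal nothing in period one, so that the control law fed into $\xi$ after that event is exactly $\xi(\cdot\mid\emptyset)$; consequently the $f$-mixture has revelation probability $p_f(\xi)=\alpha_1+\delta_2\sum_k\alpha_2^k\,\xi(k\mid\emptyset)$. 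Since (assuming, as is implicit, that $h(\cdot\mid\theta)$ is not independent of $\theta$) a binary reveal-or-not experiment with revelation probability $p$ is Blackwell-sufficient for one with probability $p'$ if and only if $p\ge p'$, the fixed-point equation for this particular $\xi$ is solvable exactly when $p_f(\xi)\ge p_g(\xi)$.

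For \emph{necessity}, specialize to the constant test kernel $\xi$ with all mass on a fixed control $k$; then $\xi\circ\gamma=\xi$ for every $\gamma$, the feedback is trivial, and $p_f(\xi)\ge p_g(\xi)$ reads $\alpha_1+\delta_2\alpha_2^k\ge\beta_1+\delta_2\beta_2^k$; varying $k$ over $K$ gives all the inequalities. For \emph{sufficiency}, assume $\alpha_1+\delta_2\alpha_2^k\ge\beta_1+\delta_2\beta_2^k$ for all $k$; multiplying by $\xi(k\mid\emptyset)$ and summing over $k$ yields $p_f(\xi)\ge p_g(\xi)$ for every $\xi$. It then remains to produce a consistent $\gamma$: take $\gamma$ to (i) preserve the value of $z$ whenever $f$ reveals it, (ii) map ``$f$ reveals nothing in period one'' to $g$-histories that reveal nothing in period one, and (iii) use its remaining freedom to spread the revealed-$z$ mass across $g$'s timing/control configurations in the proportions dictated by $\mathbb{P}_{\theta,g,\xi}$ and to absorb the surplus revelation mass $p_f(\xi)-p_g(\xi)\ge0$ into $g$'s no-revelation histories. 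Since this $\gamma$ induces control law $\xi(\cdot\mid\emptyset)$ after ``no revelation in period one,'' the experiment $\sum_t\delta_t\mathbb{P}^t_{\theta,f,\xi\circ\gamma}$ is precisely the binary experiment with probability $p_f(\xi)$, so the consistency requirement $\widehat{\gamma}=\gamma$ of Theorem \ref{fixed-discounting} holds automatically.

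I expect the main obstacle to be step~(iii): checking that the routing can be carried out simultaneously compatibly with $z$-value preservation, with all the control weights $\xi(k\mid\emptyset)$ and $\xi(k\mid z)$ that appear in $\mathbb{P}_{\theta,g,\xi}$, and with the length-dependent discount weights $\delta_{t'}$ — in other words, that the associated transportation-type linear system is feasible — which is exactly the point at which the inequality $p_f(\xi)\ge p_g(\xi)$ is used.
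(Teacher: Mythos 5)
Your overall strategy is the same as the paper's: reduce each mixture, for a fixed test kernel $\xi$, to a binary ``reveal $z\sim h(\cdot\mid\theta)$ with probability $p$ / reveal nothing'' experiment, obtain necessity from degenerate test kernels concentrated on a single $k$ (for which $\xi\circ\gamma=\xi$ and the feedback disappears), and obtain sufficiency by summing the inequalities against $\xi(\cdot\mid\emptyset)$ and exhibiting a garbling that preserves revealed values of $z$ and routes ``no revelation'' to simulated no-revelation histories. The necessity half and the computation of $p_g(\xi)$ are correct and match the paper's main-text argument.

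The gap is in the sufficiency half, and you have flagged it yourself: step~(iii) --- the feasibility of the transportation system defining $\gamma$ --- is the entire content of the paper's proof, not a routine afterthought. The paper writes the garbling out explicitly (the formulas for $\gamma(z\mid\vec z)$, $\gamma(k,(z,\emptyset)\mid\vec z)$, $\gamma(k,(\emptyset,z)\mid\vec z)$, $\gamma(\emptyset\mid\vec z)$, $\gamma(k,(\emptyset,\emptyset)\mid\vec z)$, and the analogous formulas conditional on $\emptyset$ and on $(\hat k,(\emptyset,\emptyset))$), checks nonnegativity via exactly the summed inequality $p_f(\xi)\ge p_g(\xi)$, and then verifies the $\delta$-sufficiency identities. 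Two points in your sketch are imprecise and are precisely where the explicit construction is needed. First, the control law fed into the second-period kernel after ``$f$ reveals nothing in period one'' is \emph{not} exactly $\xi(\cdot\mid\emptyset)$: it is the $\gamma(\cdot\mid\emptyset)$-weighted mixture of $\xi(\cdot\mid\emptyset)$ and the kernels $\xi(\cdot\mid\hat k,(\emptyset,\emptyset))$, because $\gamma(\cdot\mid\emptyset)$ must simultaneously match the target masses $\delta_1(1-\beta_1)$ and $\delta_2\beta_\emptyset^{\hat k}\xi(\hat k\mid\emptyset)$ on the two kinds of no-revelation $g$-histories; this is the fixed-point interdependence ($\widehat\gamma=\gamma$) that the paper's normalizing denominator $\alpha_1+\delta_2\sum_{k,\hat z}\alpha_2^k\xi(k\mid\hat z)\gamma(\hat z\mid\emptyset)$ is there to resolve, so the identity $p_f(\xi)=\alpha_1+\delta_2\sum_k\alpha_2^k\xi(k\mid\emptyset)$ is a conclusion of the construction, not an assumption you may start from. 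Second, the claim that matching ``forces'' no-revelation to be routed to no-revelation needs the non-degeneracy of $h(\cdot\mid\theta)$ made explicit (you note this parenthetically, but it is doing real work in the necessity direction as well). To complete the proof you would need to write down the garbling and check, as the paper does, that it is a well-defined probability kernel under $\alpha_1+\delta_2\alpha_2^k\ge\beta_1+\delta_2\beta_2^k$ for all $k$ and that it satisfies the $\delta$-sufficiency equations line by line.
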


To get some intuition, assume that $\xi$ puts probability one on the same fixed $k$ at all profiles of simulated controls and signals, and realized controls. (Recall that $\xi_t$ maps profiles of simulated controls and signals and realized controls $\bigcup_{t'=1}^{T}(Y^{t'} \times K^{t'-1}) \times K^{t-1}$ into $\Delta(K_t)$.)  So, in effect, we are comparing two \emph{uncontrolled} experiments, where the probabilities of arrival are $\alpha_t^{k}$, $t \in \{1,2,\emptyset\}$. Now, whether the observation $z$ arrives at the first or the second period, the posterior belief is the same. The probability of that posterior belief in the mixture $\sum_t \delta_t f^t$ (resp., $\sum_t \delta_t g^t$) is $(\alpha_1 + \delta_2 \alpha_2^k)h(z|\theta)$ (resp., $(\beta_1+ \delta_2\beta^k_2)h(z|\theta)$). With the complementary probability, no signal is observed, and the posterior belief is the prior. Thus, if $\alpha_1 +\delta_2 \alpha_2^k \geq \beta_1 + \delta_2 \beta_2^k$, the experiment $f$ is more informative than the experiment $g$, when the control is fixed at $k$. Since the choice of $k$ is arbitrary, this must hold for all $k$. Conversely, if $\alpha_1 +\delta_2 \alpha_2^k < \beta_1 + \delta_2 \beta_2^k$ for some $k$, then $f$ cannot be $\delta$-sufficient for $g$.\medskip 

Next, suppose that $\xi$ is not degenerated. Clearly, whenever the decision-maker observes $z$ at the first period, the realized control is irrelevant: the decision-maker already has all the information he can get. Therefore, the realized control only matters when there is no observation at the first period, i.e., after $(\emptyset)$.  If $\alpha_1 +\delta_2 \alpha_2^k \geq \beta_1 + \delta_2 \beta_2^k$ for all $k$, then $\alpha_1 +\delta_2 \sum_{k}\alpha_2^k\xi(k|\emptyset) \geq \beta_1 + \delta_2 \sum_{k}\beta_2^k \xi(k|\emptyset)$ for all $\xi$.  This last inequality allows us to extend the arguments for degenerated $\xi$ to arbitrary $\xi$. The proof is in Appendix \ref{app:ex-controlled}. \medskip 

To conclude, we derive ``simpler'' sufficient conditions to guarantee that $f$ is $\delta$-sufficient for $g$. To do so, note that we can rewrite the necessary and sufficient condition as
\[(\delta_1-\delta_2)[\beta_1] + \delta_2[\beta_1 + (\beta_1+\beta_2^k)] +  \leq  (\delta_1-\delta_2)[\alpha_1] + \delta_2[\alpha_1 + (\alpha_1+\alpha_2^k)].\]
Thus, if the discount factor is decreasing, it suffices that the cumulative distribution $(\beta_1, \beta_1+\beta_2^k, 1)$ of arrival times $(1,2,+\infty)$  \emph{second-order stochastically dominates} the cumulative distribution $(\alpha_1, \alpha_1+\alpha_2^k, 1)$, where an arrival time of $+\infty$ means that the information never arrives.\footnote{The sufficient condition in terms of second-order dominance and decreasing discount factors is the one developed in \cite{whitmeyer-williams-2024}.} In addition, if the first distribution \emph{first-order stochastically dominates} the second, we do not need decreasing discount factors.

\subsection{Discount Factors} \label{sec:discount factors} This section examines the role of the discount factor, which we have assumed fixed so far. We limit the discussion to the setting of subsection \ref{sec:basics}. We first argue that the condition of $\delta$-sufficiency satisfies convexity, that is, if experiment $f$ is $\delta$- and $\widehat{\delta}$-sufficient   for $g$, then it is also $\alpha \delta + (1-\alpha) \widehat{\delta}$-sufficient for $g$, for all $\alpha \in [0,1]$.

\begin{proposition} \label{prop:convexity} The set $\{\delta: f \text{\;is\;} \delta-\text{sufficient for\;} g\}$ is convex. 
\end{proposition}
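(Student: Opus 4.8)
The plan is to prove convexity directly from the garbling characterization, statement (3) of Theorem \ref{th:fixed-discounting-uncontrolled}, by exhibiting, for a convex combination of two admissible discount factors, a convex combination of the associated garblings with appropriately rescaled weights. Concretely, suppose $f$ is $\delta$-sufficient for $g$ via garblings $(\gamma_{t'})_{t'}$ and $\widehat{\delta}$-sufficient for $g$ via garblings $(\widehat{\gamma}_{t'})_{t'}$, and fix $\alpha\in[0,1]$. The endpoints $\alpha\in\{0,1\}$ are trivial, so I would assume $\alpha\in(0,1)$ and set $\mu:=\alpha\delta+(1-\alpha)\widehat{\delta}$; this is again a probability distribution on $\{1,\dots,T\}$, hence a legitimate discount factor, with $\mu_{t'}=\alpha\delta_{t'}+(1-\alpha)\widehat{\delta}_{t'}$ for every $t'$.

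\textbf{Key step.} The heart of the argument is the construction of the garblings $\eta_{t'}$ witnessing $\mu$-sufficiency. For $t'$ with $\mu_{t'}>0$ I would set
\[
\eta_{t'}(y^t|x^{t'}):=\frac{\alpha\delta_{t'}}{\mu_{t'}}\,\gamma_{t'}(y^t|x^{t'})+\frac{(1-\alpha)\widehat{\delta}_{t'}}{\mu_{t'}}\,\widehat{\gamma}_{t'}(y^t|x^{t'}),
\]
and for $t'$ with $\mu_{t'}=0$ I would let $\eta_{t'}$ be any kernel $X^{t'}\rightarrow\Delta(\cup_{t}Y^t)$. Since the two weights are nonnegative and sum to $\mu_{t'}/\mu_{t'}=1$, each $\eta_{t'}(\cdot|x^{t'})$ is a convex combination of the probability measures $\gamma_{t'}(\cdot|x^{t'})$ and $\widehat{\gamma}_{t'}(\cdot|x^{t'})$, hence a genuine garbling. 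Multiplying the $\delta$- and $\widehat{\delta}$-sufficiency identities by $\alpha$ and $1-\alpha$ respectively and adding, one obtains, for all $(t,y^t,\theta)$,
\[
\mu_t\,g^t(y^t|\theta)=\sum_{t'}\sum_{x^{t'}}f^{t'}(x^{t'}|\theta)\Big[\alpha\delta_{t'}\gamma_{t'}(y^t|x^{t'})+(1-\alpha)\widehat{\delta}_{t'}\widehat{\gamma}_{t'}(y^t|x^{t'})\Big],
\]
and the bracket equals $\mu_{t'}\eta_{t'}(y^t|x^{t'})$ on the indices $t'$ with $\mu_{t'}>0$, while on the indices with $\mu_{t'}=0$ both sides vanish (indeed $\mu_{t'}=0$ forces $\delta_{t'}=\widehat{\delta}_{t'}=0$, both being nonnegative). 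Hence $\mu_t g^t(y^t|\theta)=\sum_{t'}\sum_{x^{t'}}\mu_{t'}f^{t'}(x^{t'}|\theta)\eta_{t'}(y^t|x^{t'})$, which is exactly statement (3) for the discount factor $\mu$; therefore $f$ is $\mu$-sufficient for $g$, establishing convexity of the set.

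\textbf{Obstacle, or lack thereof.} I do not expect a genuine obstacle: the result is, in essence, the remark that $\delta$-sufficiency is Blackwell sufficiency of the single mixture experiment $\sum_t\delta_t f^t$ for $\sum_t\delta_t g^t$, that $\delta\mapsto\sum_t\delta_t f^t$ is affine, and that ``being a garbling of'' is stable under mixing the garbling kernels. The only points needing a line of care are the two bookkeeping items flagged above — that the rescaled weights lie in $[0,1]$ and sum to one, and that indices with $\mu_{t'}=0$ carry no content on either side — together with checking that $\mu$ remains a valid discount factor. An alternative, essentially equivalent, route would work at the level of statement (1): given, for a problem $(A,u)$ and a strategy $\tau$, strategies $\sigma^\delta$ and $\sigma^{\widehat{\delta}}$ matching the $\delta$- and $\widehat{\delta}$-discounted action distributions of $\tau$, one mixes them period-by-period with the rescaled weights above; but the garbling-level argument is cleaner and avoids re-deriving the static reduction.
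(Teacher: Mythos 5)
Your proposal is correct and follows essentially the same route as the paper: the paper's proof also forms the garbling $\frac{\alpha\delta_{t}}{\alpha\delta_{t}+(1-\alpha)\widehat{\delta}_{t}}\gamma_{t}+\frac{(1-\alpha)\widehat{\delta}_{t}}{\alpha\delta_{t}+(1-\alpha)\widehat{\delta}_{t}}\widehat{\gamma}_{t}$ and leaves the verification as routine. Your explicit treatment of the indices with $\mu_{t'}=0$ is a small point of extra care that the paper's formula silently skips, but it changes nothing substantive.
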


\begin{proof}If the set is either empty or a singleton, there is nothing to prove. So, assume that the set contains at least two elements, say  $\delta$ and $\widehat{\delta}$. Since $f$ is $\delta$-sufficient for $g$, there exist garblings $(\gamma_t)$ such that  \begin{align*}
\delta_t g^t(y^t|\theta) = \sum_{t'} \delta_{t'} f^{t'}(x^{t'}|\theta) \gamma_{t'}(y^t|x^{t'}), \text{for all\;} (t,y^t,\theta).
\end{align*}
Similarly, since $f$ is $\widehat{\delta}$-sufficient for $g$, there exist garblings $(\widehat{\gamma}_t)$ such that 
\begin{align*}
\widehat{\delta}_t g^t(y^t|\theta) = \sum_{t'} \widehat{\delta}_{t'} f^{t'}(x^{t'}|\theta) \widehat{\gamma}_{t'}(y^t|x^{t'}), \text{for all\;} (t,y^t,\theta).
\end{align*}
Let $\alpha \in [0,1]$ and consider the discount factor $\alpha \delta + (1-\alpha) \widehat{\delta}$. It is routine to verify that the garblings 
\begin{align*}
\left(\frac{\alpha \delta_t}{\alpha \delta_t + (1-\alpha) \widehat{\delta}_t}\gamma_t + \frac{(1-\alpha) \widehat{\delta}_t}{\alpha \delta_t + (1-\alpha) \widehat{\delta}_t} \widehat{\gamma_t}\right)_t
\end{align*}
satisfy the required equalities for $\alpha \delta + (1-\alpha) \widehat{\delta}$-sufficiency. 
\end{proof}

We now present two applications of Proposition \ref{prop:convexity}. As a first application, consider two discount factors  $\underline{\delta}$ and $\overline{\delta}$ such that  $\overline{\delta}$ first-order stochastically dominates $\underline{\delta}$. We interpret first-order stochastic shifts in discount factors as the decision-maker becoming more patient. For instance, if we restrict attention to geometric discounting, first-order stochastic shifts correspond to higher discount rates. If $f$ is $\underline{\delta}$ and $\overline{\delta}$-sufficient for $g$, it is natural to query whether $f$ is $\delta$-sufficient for $g$ for all $\delta$, which first-order stochastically dominate $\underline{\delta}$ and are dominated by $\overline{\delta}$. An immediate consequence of Proposition \ref{prop:convexity} is that the statement is true when $T=2$ since $\underline{\delta}$ and $\overline{\delta}$ are the extreme points of the first-order stochastic dominance intervals.\footnote{The interval is $[\underline{\delta}_1,\overline{\delta}_1]$.} With more periods, it is no longer true that $\underline{\delta}$ and $\overline{\delta}$ are the extreme points. For instance, if $\underline{\delta}= (1,0,0)$ and $\overline{\delta}=(0,0,1)$, then the discount factor $\delta=(0,1,0)$ first-order stochastically dominates $\underline{\delta}$, is dominated by $\overline{\delta}$, and cannot be written as  convex combination of the two of them.  (See Theorem 1 in \cite{Yang:2023aa} for a characterization of the extreme points of first-order stochastic dominance intervals.) Yet, this does not immediately imply that $f$ is not $\delta$-sufficient for $g$. The following example, however, shows this is the case.
\medskip

\textbf{Example 2.}  There are two states $\theta_0$ and $\theta_1$, three periods, and two experiments $f$ and $g$.  Let $p \in (0,1)$  be the prior belief of $[\boldsymbol{\theta}=\theta_0]$.  We describe the evolution of the beliefs induced by $f$ and $g$ in Figure \ref{fig:counter-ex}. (It is straightforward to construct the associated signals $X_t$ (resp., $Y_t$) and kernels $f_t$ (resp., $g_t$).) 

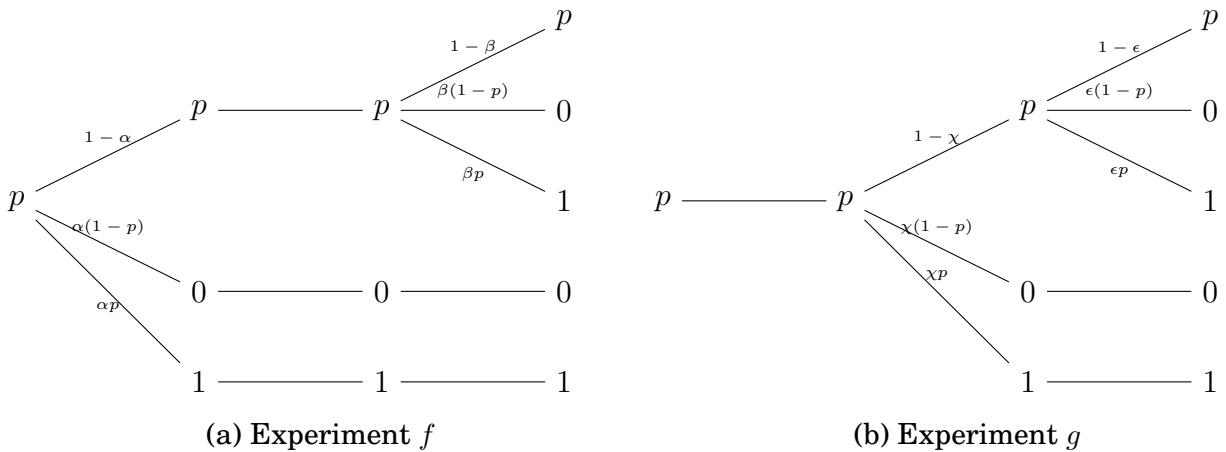
\begin{figure}[h]
\centering
\begin{subfigure}{.5\textwidth}
\begin{tikzpicture}[scale=1.2]
  \node (1) at (0,3) {$p$};
  \node(2) at (2,4) {$p$};
  \node(3) at (2,2) {$0$};
  \node(4) at (2,1) {$1$};
  \node(5) at (4,4) {$p$};
  \node(6) at (4,2) {$0$};
  \node(7) at (4,1) {$1$};
  \node(8) at (6,5) {$p$};
  \node(9) at (6,4) {$0$};
  \node(10) at (6,3) {$1$};
  \node(11) at (6,2) {$0$};
  \node(12) at (6,1) {$1$};
  \draw (1) -- (2) node[midway, above] { \tiny{$1-\alpha$}}; 
  \draw (2)  -- (5) ;
  \draw (5) -- (8) node[midway, above] { \tiny{$1-\beta$}};
  \draw (1) -- (3) node[midway, above ] { \tiny{$\alpha (1-p)$}};
  \draw (3) -- (6) ;
  \draw (6) -- (11);
  \draw (1) -- (4) node[midway, below ] { \tiny{$\alpha p$}}; 
  \draw (4) -- (7) ; 
  \draw (7) -- (12);
  \draw (5) -- (9) node[midway, above ] { \tiny{$\beta (1-p)$}};
  \draw (5) -- (10) node[midway, below ] { \tiny{$\beta p$}} ;
\end{tikzpicture}
\caption{Experiment $f$}
\end{subfigure}%
\begin{subfigure}{.5\textwidth}
\begin{tikzpicture}[scale=1.2]
  \node (1) at (0,3) {$p$};
  \node(2) at (2,3) {$p$};
  \node(3) at (4,4) {$p$};
  \node(4) at (4,2) {$0$};
  \node(5) at (4,1) {$1$};
  \node(6) at (6,5) {$p$};
  \node(7) at (6,4) {$0$};
  \node(8) at (6,3) {$1$};
  \node(9) at (6,2) {$0$};
  \node(10) at (6,1) {$1$};
  \draw (1) -- (2) ; 
  \draw (2)  -- (3) node[midway, above] { \tiny{$1-\chi$}} ;
  \draw (2) -- (4) node[midway, above] { \tiny{$\chi (1-p)$}};
  \draw (2) -- (5) node[midway, above] { \tiny{$\chi p $}} ;
  \draw (4) -- (9) ;
  \draw (5) -- (10) ;
  \draw (3) -- (6) node[midway, above] { \tiny{$1-\epsilon$}};
  \draw (3) -- (7)  node[midway, above ] { \tiny{$\epsilon (1-p)$}} ;
  \draw (3) -- (8)  node[midway, below ] { \tiny{$\epsilon p$}};
  \end{tikzpicture}
\caption{Experiment $g$}
\end{subfigure}
\caption{Discounting, first-order stochastic dominance, and sufficiency}\label{fig:counter-ex}
\end{figure}

The experiment $f$ is informative at the first and third periods, while $g$ is informative at the second and third periods. Let $v: [0,1] \rightarrow \mathbb{R}$ be a convex function and denote $V(p):=(1-p)v(0) + pv(1) \geq v(p)$. As already explained, $\delta$-sufficiency is equivalent to checking whether the discounted expected value $V_{f}(\delta,v)$ is greater than $V_{g}(\delta,v)$ for all convex functions $v$, where   
\begin{align*}
V_{f}(\delta,v)& :=v(p) \left[ (\delta_1+\delta_2)(1-\alpha) + \delta_3 (1-\alpha)(1-\beta) \right] + V(p)\left[(\delta_1 +\delta_2)\alpha+ \delta_3 (\alpha + (1-\alpha)\beta) \right],\\
V_{g}(\delta,v) & :=v(p) \left[ \delta_1+\delta_2(1-\chi) + \delta_3 (1-\chi)(1-\varepsilon) \right] + V(p)\left[(\delta_2 +\delta_3)\chi + \delta_3 (1-\chi)\varepsilon \right].
\end{align*}
Clearly, if $v(p)=V(p)$, $V_{f}(\delta,v) =V_{g}(\delta,v)$ for all $\delta$. Assume that $V(p) >v(p)$. Comparing $V_{f}(\delta,v)$ and $V_{g}(\delta,v)$
is thus equivalent to comparing the coefficients in front of $V(p)$ in the above expression. Consider the following (geometric) discount factors:  $\underline{\delta} = (1,0,0)$, $\overline{\delta} = (1/3,1/3,1/3)$ and $\delta^*= (4/7,2/7,1/7)$ and observe that $V_{f}(\delta,v) > V_{g} (\delta,v)$ is equivalent to 
\begin{align*}
(\delta_1 +\delta_2)\alpha+ \delta_3 (\alpha + (1-\alpha)\beta) > (\delta_2 +\delta_3)\chi + \delta_3 (1-\chi)\varepsilon. 
\end{align*}
We can verify that with the parametrization $\varepsilon =0$, $\beta=1$, $\alpha = 0 $, $\chi \in (1/3,1/2)$, $V_{f}(\underline{\delta},v)= V_{g}(\underline{\delta},v)$, $V_{f}(\delta^*,v) <  V_{g}(\delta^*,v)$, and  $V_{f}(\overline{\delta},v) > V_{g}(\overline{\delta},v)$ for all convex functions $v$ such that $V(p) >v(p)$. With an arbitrary small, but positive, $\alpha$, we have that $V_{f}(\underline{\delta},v) > V_{g}(\underline{\delta},v)$, without affecting the other comparisons. So, the experiment $f$ is more informative than $g$ for a very patient and impatient decision-maker, and the converse holds for a moderately patient decision-maker. In other words, we have ambiguous comparative statics with respect to impatience.

\medskip

As a second application, we consider sets of discount factors. In repeated games and, more generally, in economics, it is frequent to state results for sufficiently patient decision-makers, without fixing the discount factor. We would like the experiment $f$ to be more valuable than the experiment $g$ in all discounted problems with discount factors in $\Delta$, a non-empty set of discount factors. For instance, $\Delta$ may be the set of geometric discount factors. Theorem \ref{th:fixed-discounting-uncontrolled} generalizes immediately.   All we need is to add ``for all $\delta \in \Delta$'' in the second and third statements.  In particular, if we consider the set of all discount factors, we obtain a  simple characterization, that is, we need $f^t$ to be sufficient for $g^t$ for all $t$. The intuition is straightforward. If $\delta$  is degenerated in $t$, then $f$  being  $\delta$-sufficient for $g$ is clearly equivalent to $f^t$ being sufficient for $g^t$. Since any discount factor is a convex combination of the degenerated probabilities on integers $t$,  the argument follows immediately from Proposition \ref{prop:convexity}.

\subsection{Time-varying States} This section relaxes the assumption of a fixed state. We show that our analysis generalizes almost immediately to the state evolving over time. We again limit the discussion to the setting of subsection \ref{sec:basics}. Let  $p \in \Delta(\Theta_1 \times \dots \times \Theta_T)$ be the distribution over sequences of states. Without loss of generality, we assume that $\Theta_t=\Theta$ for all $t$. We assume furthermore that the decision-maker's payoff at period $t$ only depends on the state at period $t$. (If the payoff was a function of the entire profile of states, then the previous analysis applies verbatim.) We write $\mathbb{P}_{p,f,\sigma}$ for the distribution over states, signals and actions, induced by the distribution of states $p$, the experiment $f$ and the strategy $\sigma$. Similarly, for   $\mathbb{P}_{p,g,\tau}$.

As in Section \ref{sec:main}, we can rewrite the expected payoff as: 
\begin{align*}
\mathbb{E}_{p, g, \tau}\left[\sum_{t}\delta_t u(\boldsymbol{a}_t,\boldsymbol{\theta}_t)\right]  & = \sum_{a,\theta} u(a,\theta)\underbrace{\left[\sum_t \delta_t \marg_{A_t \times \Theta_t}\proba_{p,g,\tau}(a,\theta)\right]}_{\text{discounted probability of $(a,\theta)$}}.
\end{align*}
It once again follows immediately that an equivalent statement to $f$ being more valuable than $g$ is that for all decision problems $(A,u)$, for all strategy $\tau$, there exists strategy $\sigma$ such that
\begin{align*}
\sum_{t}\delta_t \mathbb{P}_{p,g,\tau}(\boldsymbol{a}_t=a,\boldsymbol{\theta}_t=\theta) = \sum_{t}\delta_t \mathbb{P}_{p,f,\sigma}(\boldsymbol{a}_t=a,\boldsymbol{\theta}_t=\theta),
\end{align*}
for all $(a,\theta)$. In words, all we need to replicate are the discounted probabilities of states and actions. \medskip 

Since neither the realizations of the states nor the realizations of the signals depend on the decisions made, the distribution over states and signals induced by $p$, $g$ and $\tau$ process is independent of $\tau$. We denote it $\mathbb{P}_{p,g}$. In particular, if the state is fully persistent, i.e., $p(\theta_1,\dots,\theta_t,\dots,\theta_T)>0$ if, and only if, $\theta_t=\theta_{t'}$ for all $(t,t')$, then 
\begin{align*}
\mathbb{P}_{p,g}(\boldsymbol{y}^t=y^t,\boldsymbol{\theta}_t=\theta) = p(\theta,\dots,\theta,\dots,\theta)g^t(y^t|\theta).
\end{align*}
We define  $\mathbb{P}_{p,f}$ similarly. The following theorem is the generalization of Theorem \ref{th:fixed-discounting-uncontrolled} to an environment, with evolving states.

\begin{theorem}\label{th:evolving-states} Let $p$ be the distribution over the sequences of states. The following statements are equivalent. 
\begin{enumerate}
\item The experiment $f$ is more valuable than the experiment $g$ in discounted problems with discount factor $\delta$.
\item For all decision problem $(A,u)$, for all  
strategy $\tau$, there exists a strategy $\sigma$ such that 
\[\sum_{t}\delta_t\marg_{A_t \times \Theta_t}\mathbb{P}_ {p, g, \tau} = \sum_{t} \delta_t\marg_{A_t \times \Theta_t} \mathbb{P}_{p, f,\sigma}.\] 
\item $\delta$-sufficiency: There exist garblings $\gamma_{t'}: X^{t'} \rightarrow \Delta(\cup_{t=1}^{T}Y^t)$ such that 
\begin{align*}
\delta_t \mathbb{P}_{p,g}(\boldsymbol{y}^t=y^t,\boldsymbol{\theta}_t=\theta)= 
\sum_{t',x^{t'}} \delta_{t'}   \mathbb{P}_{p,f}(\boldsymbol{x}^{t'}=x^{t'},\boldsymbol{\theta}_{t'}=\theta)\gamma_{t'}(y^t|x^{t'}), \text{for all\;} t,y^t,\theta.
\end{align*}
\end{enumerate}
\end{theorem}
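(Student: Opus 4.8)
The plan is to reduce Theorem~\ref{th:evolving-states} to the fixed-state case of Theorem~\ref{th:fixed-discounting-uncontrolled} by recognizing that, since payoffs at period $t$ depend only on $(\boldsymbol{a}_t,\boldsymbol{\theta}_t)$ and the evolution of states and signals is exogenous (independent of the decisions), the whole problem is again a static comparison-of-experiments problem — now with the ``extended state'' being the entire sequence $\theta^T \in \Theta^T$ and the ``signal'' being the pair $(t,x^t)$ delivered by the mixture experiment $\sum_t \delta_t \mathbb{P}_{p,f}(\cdot\,|\,\theta^T)$.

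First I would establish the equivalence $(1)\Leftrightarrow(2)$. The display rewriting $\mathbb{E}_{p,g,\tau}[\sum_t \delta_t u(\boldsymbol{a}_t,\boldsymbol{\theta}_t)] = \sum_{a,\theta} u(a,\theta)\big[\sum_t \delta_t \marg_{A_t\times\Theta_t}\proba_{p,g,\tau}(a,\theta)\big]$, already given in the excerpt, shows that the expected discounted payoff depends on $\tau$ only through the vector of discounted joint probabilities $\big(\sum_t \delta_t \proba_{p,g,\tau}(\boldsymbol{a}_t=a,\boldsymbol{\theta}_t=\theta)\big)_{a,\theta}$. Since $u$ ranges over all functions $A\times\Theta\to\mathbb{R}$ as $(A,u)$ varies, $f$ is more valuable than $g$ iff for every $\tau$ there is a $\sigma$ matching these vectors for all $\theta$ — which is exactly statement $(2)$. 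This is the routine direction and mirrors the derivation preceding Theorem~\ref{th:fixed-discounting-uncontrolled}.

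Next, $(2)\Leftrightarrow(3)$. Here I would make the static reformulation explicit: define, for each ``compound state'' $\theta^T\in\Theta^T$, the static experiment $\widehat f$ that outputs the signal ``$(t,x^t,\theta)$'' — equivalently, the decision-maker observes $(t,x^t)$ and the realized current state $\theta=\theta_t$ — with probability $\delta_t \mathbb{P}_{p,f}(\boldsymbol{x}^t=x^t,\boldsymbol{\theta}_t=\theta\mid\boldsymbol\theta^T=\theta^T)\cdot p(\theta^T)$ appropriately normalized; because states and signals are exogenous, the probability $\proba_{p,f,\sigma}(\boldsymbol x^t=x^t,\boldsymbol\theta_t=\theta)$ does not depend on $\sigma$ and equals $\mathbb{P}_{p,f}(\boldsymbol x^t=x^t,\boldsymbol\theta_t=\theta)$, and the conditional action probability $\proba_{p,f,\sigma}(\boldsymbol a_t=a\mid x^t,\theta_t)$ is independent of the unobserved compound state, so it is a bona fide behavioral strategy in the static experiment. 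This is precisely the manipulation carried out in Section~\ref{sec:basics} for the fixed-state case; the only change is that ``$\theta$'' there is replaced by the pair ``(compound state $\theta^T$, observed current state $\theta_t$)'' and the signal space is augmented to carry $\theta_t$ as part of the observation. Applying Theorem~\ref{th:blackwell} (equivalence of (1)/(2)/(3) there) to this static pair yields the existence of garblings $\gamma_{t'}\colon X^{t'}\to\Delta(\cup_t Y^t)$ — note the garbling need not touch the $\theta$-coordinate since both experiments reveal $\boldsymbol\theta_t$ truthfully and with the same marginal law $\mathbb{P}_p$ over states, so the sufficiency relation decomposes cleanly — satisfying exactly the equation in statement $(3)$. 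I would close by noting the fully-persistent special case reduces the displayed equation to $\delta_t\, p(\theta,\dots,\theta)\, g^t(y^t|\theta) = \sum_{t',x^{t'}}\delta_{t'}\, p(\theta,\dots,\theta)\, f^{t'}(x^{t'}|\theta)\,\gamma_{t'}(y^t|x^{t'})$, i.e.\ Theorem~\ref{th:fixed-discounting-uncontrolled} up to the harmless constant $p(\theta,\dots,\theta)$.

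The main obstacle — really the only delicate point — is checking that the garbling in statement $(3)$ is allowed to ignore the state coordinate, i.e.\ that $\delta$-sufficiency of the $\theta$-augmented static experiments is equivalent to the stated form where $\gamma_{t'}$ depends on $x^{t'}$ alone. This works because in both extended experiments the coordinate $\boldsymbol\theta_t$ has the \emph{same} distribution conditional on the compound state $\boldsymbol\theta^T$ (namely the $t$-th marginal of $p$), so any garbling of the full signal $(x^{t'},\theta_{t'})$ into $(y^t,\theta_t)$ that is consistent across states can, without loss, be taken to pass $\theta$ through unchanged and to act on the $X$-coordinate only; formally one averages the garbling over the $\theta$-coordinate and uses that the $\theta$-marginals agree. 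I would spell this reduction out carefully, as it is the one place where ``state-by-state'' comparison and the augmented-signal trick interact; everything else is a transcription of the arguments already given for Theorems~\ref{th:blackwell} and~\ref{th:fixed-discounting-uncontrolled}.
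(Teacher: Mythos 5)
Your overall strategy (reduce to a static Blackwell comparison) is the right instinct and is morally what the paper does, but the specific reduction you choose for $(2)\Leftrightarrow(3)$ has a genuine gap: you take the hidden state of the static problem to be the compound state $\theta^T$ and move the current state $\theta_t$ into the \emph{signal}. This breaks the argument in two ways. First, in the dynamic problem the decision-maker never observes $\theta_t$ (strategies are maps from $(a^{t-1},x^t)$ only); once $\theta_t$ is part of the observed signal, the payoff $u(a,\theta_t)$ is a function of the observation, the decision-maker can best-respond to it under \emph{any} experiment, and the augmented static comparison trivializes --- so ``more valuable'' for the augmented pair is not equivalent to statement (1), and Theorem \ref{th:blackwell}'s quantification over static strategies (which may condition on $\theta_t$) does not match the quantification over dynamic strategies in statement (2). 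Second, Blackwell sufficiency of the augmented experiments is a condition that must hold state-by-state in $\theta^T$, namely $\delta_t g^t(y^t|\theta^T)\indicator{[\theta=\theta_t]}=\sum_{t',x^{t'}}\delta_{t'}f^{t'}(x^{t'}|\theta^T)\indicator{[\theta=\theta_{t'}]}\gamma_{t'}(y^t|x^{t'})$ for \emph{every} sequence $\theta^T$ (here $\indicator{[\cdot]}$ denotes an indicator), which is strictly stronger than condition (3); condition (3) only constrains the $p$-averaged joint law of the pair (signal, current state). Your ``average the garbling over the $\theta$-coordinate'' remark goes only one way and does not restore the equivalence. The correct static reformulation takes the \emph{current-period} state $\theta$ as the hidden state: set $F(t',x^{t'}|\theta)\propto\delta_{t'}\mathbb{P}_{p,f}(x^{t'},\boldsymbol{\theta}_{t'}=\theta)$ (the normalizer $\sum_{t'}\delta_{t'}\mathbb{P}_p(\boldsymbol{\theta}_{t'}=\theta)$ is common to $f$ and $g$), keep the signal as $(t,x^t)$ alone, and observe that the payoff $u(a,\theta)$ then depends only on the action and the hidden state; condition (3) is exactly Blackwell sufficiency of these experiments.

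Two smaller points. The paper does not prove $(1)\Rightarrow(2)$ ``for free'' from the payoff rewriting: it invokes Kuhn's theorem to show that the set of attainable discounted marginals over $A\times\Theta$ is convex and then runs a separating-hyperplane argument; quantifying over all $u$ with a $\sigma$ that may depend on $u$ does not by itself yield a single $\sigma$ matching the whole vector. And the paper proves $(2)\Leftrightarrow(3)$ directly rather than by citation: for $(2)\Rightarrow(3)$ it uses the canonical problem $A=\bigcup_t Y^t$ with the truthful $\tau$ and reads $\gamma_{t'}$ off the replicating $\sigma$, adding a Cantor-intersection argument for $T=\infty$ that your sketch omits; for $(3)\Rightarrow(2)$ it constructs $\sigma_{t'}(a|x^{t'})=\sum_{t,y^t}\mathbb{P}_{p,g,\tau}(\boldsymbol{a}_t=a|y^t)\gamma_{t'}(y^t|x^{t'})$ and verifies that the joint discounted marginals over $A_t\times\Theta_t$ coincide. (In the displayed indicator above, read $\indicator{[\cdot]}$ as $\mathbf{1}[\cdot]$.)
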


\subsection{Sequential Comparison}\label{subsec:seq-compa} The comparisons we have made so far are from the ex-ante viewpoint, that is, the decision-maker chooses 
whether to observe the  signals from either $f$ or $g$ at the first stage and cannot revise his choice at a later date. We want to strengthen our comparison to give the decision-maker the opportunity  to \emph{permanently} switch from observing the $X$-signals to observing the $Y$-signals after any history $x^t$ of $X$-signals. In other words, we want the decision-maker to prefer observing signals from $f$ not only at the ex-ante stage, but also at all (on-path) histories. The following example illustrates the issues we have to deal with. \medskip

\textbf{Example.} There are two periods, two states $L$(ow) and $H$(high), and two signals $\ell$(ow) and $h$(igh). The experiments $f$ and $g$ are given by:
\begin{align*}
f_1(\ell|L)=f_1(h|H) =3/4, f_2(h|h,\theta)=f_2(\ell|\ell,\theta)=1, \\
g_1(\ell|L)=g_1(h|H) =3/4, g_2(h|h,\theta)=g_2(\ell|\ell,\theta)=1,
\end{align*}
with $\theta=L,H$. The two experiments have the same probability law and simply repeat the first-period signal at the second period.  As random processes, they may differ, however. If they are independent, after observing the first-period signal from either of them, the decision-maker always has an incentive to switch. By doing so, the decision-maker obtains two independent signals drawn from the same distribution instead of a single one. If the two processes are perfectly correlated, the decision-maker never has an incentive to switch. The example illustrates the need to model the interdependence between the experiments if we want to formalize the incentives of the decision-maker to switch from one experiment to another. \medskip

The approach we follow is to consider a joint process (a coupling) over the $X$- and $Y$-signals such that marginal over the $X$- (resp., $Y$-) signals coincides with the experiment $f$ (resp., $g$). The joint process allows us to define the ``continuation experiments,'' i.e., the experiments the decision-maker faces when he considers switching from one experiment to another after having observed the history of signals $x^t$. More precisely, the joint process is given by the collection of kernels $(h_t)_t$, where
\begin{align*}
h_t: \Theta \times X^{t-1} \times Y^{t-1} \rightarrow \Delta(X_t \times Y_t).
\end{align*}
For all $x^t \in X^t$, we then define the experiments 
\begin{align*}
f[x^t] &: \Theta \rightarrow \Delta(X_{t+1} \times \dots \times X_T), \\
g[x^t] & : \Theta \rightarrow \Delta(Y_{t+1} \times \dots \times Y_T),
\end{align*} 
where 
\begin{align*}
f[x^{t}](x_{t+1},\dots,x_T|\theta)  :=
\sum_{y^{t}}h(x_{t+1}, \dots, x_T|x^{t},y^{t},\theta)h^{t}(y^{t}|x^{t},\theta),
\end{align*}
whenever $h^t(x^t|\theta)>0$ (and arbitrary, otherwise). Similarly, for $g[x^t]$. Thus, after observing the sequence of signals $x^t$, the decision-maker can either continue observing the signals from $f[x^t]$ or switch to observing the signals  from $g[x^t]$. Note that $f$ and $g$ correspond to  $f[\emptyset]$ and $g[\emptyset]$, respectively. We want the decision-maker to find $f[x^t]$ more valuable than $g[x^t]$ at all $x^t$. In what follows, we denote $\tau[x^t]$ and $\sigma[x^t]$ the strategies of the decision problem starting at period $t+1$, adapted to $g[x^t]$ and $f[x^t]$, respectively.

\begin{definition}\label{def:seq-comp}
Let $\delta$ be the discount factor. The  experiment $f$ is \emph{sequentially most valuable} for $h$ in discounted problems, if for all decision problems $(A, u)$, for all history of signals $x^t$ (including the empty history), for all strategy $\tau[x^t]$, there exists strategy $\sigma[x^t]$ such that
\begin{align*}
\mathbb{E}_{\theta,g[x^t],\tau[x^t]}\Big[\sum_{t'\geq t+1} \delta_{t'} u(\boldsymbol{a}_{t'},\theta)\Big] \leq 
\mathbb{E}_{\theta,f[x^t],\sigma[x^t]}\Big[\sum_{t' \geq t+1} \delta_{t'} u(\boldsymbol{a}_{t'},\theta)\Big]
\end{align*}
for all $\theta$ such that $h(x^t|\theta) >0$.
\end{definition}
The definition states that the experiment $f$ is sequentially most valuable if the decision-maker not only prefers observing the $X$-signals at the empty history, but  also at any (on-path) history $x^t$. Several  modifications of this definition are possible. For instance, we may want $f$ to be sequentially most valuable not only for a fixed $h$, but also for all joint processes $h'$ such that the marginal over the $X$-signals (resp., $Y$-signals) is $f$ (resp., $g$).  As another instance, we may want to compare the experiments not only at on-path histories, but also at off-path histories. As yet another, we may want to give the decision-maker the opportunity to freely choose the signal to observe at each history. We view our definition as a natural starting point for the sequential comparison of experiments.\medskip

Given the similarity between the ex-ante comparison of experiments (Definition \ref{def:valuable-discounted}) and their sequential comparison (Definition \ref{def:seq-comp}), it is natural to expect their characterizations to be similar.  As Theorem \ref{th:seq-com} demonstrates, it is indeed the case. To state the theorem, we need to introduce  some notation. For all $x^t$, we let $\Theta[x^t] :=\{\theta \in \Theta: h(x^t|\theta)>0\}$, $\delta[t]: = (\delta_{t+1}/(\sum_{t'\geq t+1}\delta_t),\dots,)$, and say that $f[x^t]$ is $\delta[t]$-sufficient for $g[x^t]$  if $g[x^t]: \Theta[x^t] \rightarrow \Delta(Y_{t+1} \times \dots \times Y_T)$ is a garbled version of $f[x^t] : \Theta[x^t] \rightarrow \Delta(X_{t+1} \times \dots \times X_T)$. In words, $\Theta[x^t]$ is the set of states, which have not been excluded yet, and we compare the experiments $f[x^t]$ and $g[x^t]$ only for these states. The discount factor $\delta[t]$ is the normalization of $\delta$ to the periods $t+1$ to $T$.

\begin{theorem}\label{th:seq-com}
The experiment $f$ is sequentially most valuable for $h$ if, and only if, $f[x^t]$ is $\delta[t]$-sufficient for $g[x^t]$ (with respect to $\Theta[x^t]$) for all $x^t$.
\end{theorem}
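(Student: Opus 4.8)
The plan is to reduce Theorem~\ref{th:seq-com} to Theorem~\ref{th:fixed-discounting-uncontrolled}, applied separately at each history. The starting point is that the defining property in Definition~\ref{def:seq-comp} decouples across histories: because the choice of $\sigma[x^t]$ at a history $x^t$ is unconstrained by the choices at other histories, ``$f$ is sequentially most valuable for $h$'' is equivalent to requiring, for \emph{each} $x^t$ separately, that for all decision problems $(A,u)$ and all $\tau[x^t]$ there exists $\sigma[x^t]$ with the continuation inequality holding for every $\theta\in\Theta[x^t]$. I would first record this reformulation.

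Next I would identify, for a fixed $x^t$, the continuation object as an instance of the discounted problems of Section~\ref{sec:basics}. After relabelling the periods $t+1,\dots,T$ as $1,\dots,T-t$, the state space is the finite set $\Theta[x^t]$; the per-period decision problem is the same $(A,u)$; the two (uncontrolled) experiments are $f[x^t]:\Theta[x^t]\to\Delta(X_{t+1}\times\cdots\times X_T)$ and $g[x^t]:\Theta[x^t]\to\Delta(Y_{t+1}\times\cdots\times Y_T)$, which are well defined precisely because we have restricted to the positive-probability support $\Theta[x^t]$; and the pertinent discount factor is the normalization $\delta[t]$ of $(\delta_{t'})_{t'\geq t+1}$. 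Writing $c_t:=\sum_{t'\geq t+1}\delta_{t'}$ and assuming $c_t>0$ (the case $c_t=0$, in particular $t=T$, being vacuous on both sides), the continuation objective in Definition~\ref{def:seq-comp} equals $c_t$ times the discounted objective with discount factor $\delta[t]$, so, $c_t$ being positive, domination for one is domination for the other. Hence the condition at $x^t$ from the previous paragraph is exactly statement~(1) of Theorem~\ref{th:fixed-discounting-uncontrolled} for the triple $(\Theta[x^t],f[x^t],g[x^t])$ and discount factor $\delta[t]$.

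Finally I would invoke Theorem~\ref{th:fixed-discounting-uncontrolled} with $\Theta$ replaced by $\Theta[x^t]$ and $\delta$ by $\delta[t]$: its statement~(1) is equivalent to its statement~(3), which is precisely ``$f[x^t]$ is $\delta[t]$-sufficient for $g[x^t]$ with respect to $\Theta[x^t]$''. Quantifying over all $x^t$ and combining with the decoupling reformulation gives the claim. Concretely, for ``only if'' one reads statement~(1) off Definition~\ref{def:seq-comp} at each $x^t$ (dividing the inequality by $c_t$) and concludes $\delta[t]$-sufficiency; for ``if'' one runs the same chain backwards, multiplying the dominating inequality supplied by Theorem~\ref{th:fixed-discounting-uncontrolled} back by $c_t$ to recover Definition~\ref{def:seq-comp}.

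I do not anticipate a genuine obstacle: the mathematical content is entirely in Theorem~\ref{th:fixed-discounting-uncontrolled}, and the remaining work is bookkeeping --- verifying that $f[x^t]$ and $g[x^t]$ are bona fide experiments on $\Theta[x^t]$, that continuation strategies coincide verbatim with strategies for the relabelled $(T-t)$-period problem, that the scalar $c_t$ is harmless, and that the degenerate histories ($c_t=0$ or $\Theta[x^t]=\emptyset$) are trivially consistent. The one point that needs a careful reading is the definition of $\delta[t]$-sufficiency stated just before the theorem: it should be understood as statement~(3) of Theorem~\ref{th:fixed-discounting-uncontrolled} for the mixtures $\sum_{t'}\delta[t]_{t'}\,(f[x^t])^{t'}$ and $\sum_{t'}\delta[t]_{t'}\,(g[x^t])^{t'}$, i.e.\ as $\delta[t]$-sufficiency in the sense of Theorem~\ref{th:fixed-discounting-uncontrolled}, not as plain Blackwell sufficiency of the entire continuation experiments.
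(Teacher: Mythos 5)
Your proposal is correct and follows exactly the route the paper intends: the paper states Theorem \ref{th:seq-com} without a separate written proof, treating it as the history-by-history application of Theorem \ref{th:fixed-discounting-uncontrolled} to the continuation experiments $f[x^t]$, $g[x^t]$ on the restricted state space $\Theta[x^t]$ with the renormalized discount factor $\delta[t]$, which is precisely your reduction (the decoupling across histories being immediate from the quantifier order in Definition \ref{def:seq-comp}, and the positive scalar $c_t$ being harmless). Your closing caveat --- that ``$\delta[t]$-sufficient'' must be read as statement (3) of Theorem \ref{th:fixed-discounting-uncontrolled} for the $\delta[t]$-mixtures of the period marginals of $f[x^t]$ and $g[x^t]$, not as plain Blackwell sufficiency of the full continuation experiments --- resolves the paper's loose phrasing in the intended way.
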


We conclude with a discussion of the recent work of \cite{BFK-2023} on the comparison of experiments. These authors study when the decision-maker prefers observing the $X$-signals over the $Y$-signals when he has already observed some $Z$-signals. They adopt a different notion of comparison than ours in that they require the decision-maker to prefer observing the $X$-signals, regardless of the $Z$-signals (and, therefore, of their correlations with the $X$- and $Y$-signals).\footnote{\cite{BFK-2023} model experiments differently, but this is not essential for our discussion.} We can view their model as a two-period problem, where   $h_1: \Theta \rightarrow \Delta(X_1 \times Y_1)$, $X_1=Y_1=Z$, $h_1(x_1,y_1|\theta)>0$ only if $x_1=y_1$, $h_2: \Theta \times X_1 \times Y_1 \rightarrow \Delta(X_2 \times Y_2)$, and $\delta=(0,1)$. In this two-period problem, the experiment $f$ is sequentially most valuable for $h$ if the decision-maker prefers observing the signal $x_2$ to $y_2$ in the second period, regardless of the realized signal $z=x_1=y$ in the first period. This is weaker than the version of \cite{BFK-2023} since we fix $h$.

\section{Related Literature and Discussion}\label{sec:rel-litt}

This paper is closely related to the extensive literature on the comparison of statistical experiments in static problems. We refer the reader to  \citet[Chapter 12]{blackwell-girshick}, \cite{torgersen91}, \cite{lecam96} and    \citet[Chapter 9]{strasser2011}, among others, for extensive reviews. As we argue in Section \ref{sec:main}, there is a tight relationship between the simplest discounted problems and the static problems. Indeed, the condition of $\delta$-sufficiency coincides with the classical sufficiency condition \citep{blackwell51, blackwell53} for the comparison of the mixture experiments $\sum_t \delta_t f^t$ and $\sum_t \delta_t g^t$.  Few results are available for the comparison of mixture experiments. \cite{torgersen1970} proves that if $f^t$ is sufficient for $g^t$ for all $t$, then $\sum_t \delta_t f^t$ is sufficient for $\sum_t \delta_t g^t$.  He also proves a partial converse, that is, if the experiment $f$ is sufficient for $\sum_t \delta_t g^t$, then there exist experiments $f^t$ such that (i) $f = \sum_t \delta_t f^t$, and (ii) $f^t$ is sufficient for $g^t$ for all $t$. A related result is that all binary experiments (a.k.a dichotomies) are the mixtures of fully informative experiments -- see \cite{birnbaum1961}. The relationship with static problems becomes less tight as we move from the simplest discounted problems to more general ones. For instance, static problems do not allow discussing about the sequential comparisons of experiments, evolving states, or past decisions influencing the evolution of the signals.\medskip 

By comparison, the literature on the comparison of statistical experiments in dynamic problems is much more limited. The seminal paper is \cite{greenshtein96}. Greenshtein considers the larger class of sequential problems, where the decision-maker is constrained to choose sequences of decisions in $\mathcal{A} \subseteq A_1 \times \dots \times A_t \times \dots \times A_{T}$, and his payoff is  an arbitrary function of the sequences of decisions made, that is, $u: \bigtimes_t A_t \times \Theta \rightarrow \mathbb{R}$. \citet[Theorem 1.1a]{greenshtein96} proves that the experiment $f$ is more valuable than the experiment $g$ for all sequential problems if, and only if, $g$ is a garbling of $f$, where the garbling is \emph{adapted}.   Informally, a garbling is adapted if, to simulate $g^t$, the decision-maker first draws $y_1$ with probability $\gamma_1(y_1|x_1)$, next draws $y_2$ with probability $\gamma_2(y_2|x^2,y^1)$, and then iteratively draws $y_{t'}$ with probability $\gamma_{t'}(y_{t'}|y^{t'-1},x^{t'})$ for all $t' \leq t$ when his history of simulated and realized signals is $(y^{t-1},x^t)$. In other words, unlike what we did, the decision-maker does not simulate new fictitious histories of signals at each period. Instead, at each period, he completes the history of already simulated signals. This is a measurability condition. With the help of an example, \cite{greenshtein96} shows that the condition is stronger than requiring $f^t$ to be sufficient for $g^t$, for all $t$.\footnote{We have generalized the work of \cite{greenshtein96} to sequential problems, where the decision-maker controls the flow of information. The statement and proof are available upon request.}$^{,}\,$\footnote{\cite{greenshtein96} also provides conditions linking the two notions of sufficiency.}
For completeness, we provide a characterization of the latter condition. A sequential problem is \emph{separable}  if the decision-maker values the sequences of actions $(a_1,\dots,a_T)$ as $\sum_t u_t(a_t,\theta)$. We note that discounted problems are special cases of separable problems.

\begin{theorem}\label{th:separable-uncontrolled} The following statements are equivalent: 
\begin{enumerate}[label=(\arabic*)]
\item The experiment $f$ is more valuable than the experiment $g$ in all separable problems.
\item For all separable decision problems $(A_t,u_t)_t$, for all  
strategy $\tau$, there exists strategy $\sigma$ such that 
\begin{align*}
\marg_{A_t}\mathbb{P}_ {\theta, g, \tau} = \marg_{A_t} \mathbb{P}_{\theta, f,\sigma}, \text{\;for all\;} t, \theta.
\end{align*}
\item  $\Delta$-sufficiency: There exist garblings $\gamma_t: X^t \rightarrow \Delta(Y^t)$ such that 
\begin{align*}
g^t(y^t|\theta)= \sum_{x^t} \gamma_t(y^t|x^t) f^t(x^t|\theta) , \text{\;for all\;} y^t,t,\theta.
\end{align*}
\end{enumerate}
\end{theorem}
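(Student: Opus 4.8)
The plan is to establish the cycle $(3) \Rightarrow (2) \Rightarrow (1) \Rightarrow (3)$, the driving idea being that separability completely decouples the decision problem across periods, so that the comparison reduces to a period-by-period application of Blackwell's theorem (Theorem \ref{th:blackwell}).

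\textbf{$(2) \Rightarrow (1)$.} This is immediate. If $\marg_{A_t}\proba_{\theta,g,\tau} = \marg_{A_t}\proba_{\theta,f,\sigma}$ for all $t$ and $\theta$, then by linearity of expectation
\[
\mathbb{E}_{\theta,f,\sigma}\Big[\sum_t u_t(\mathbf{a}_t,\theta)\Big] = \sum_t \sum_{a_t} u_t(a_t,\theta)\,\marg_{A_t}\proba_{\theta,f,\sigma}(a_t) = \sum_t\sum_{a_t}u_t(a_t,\theta)\,\marg_{A_t}\proba_{\theta,g,\tau}(a_t) = \mathbb{E}_{\theta,g,\tau}\Big[\sum_t u_t(\mathbf{a}_t,\theta)\Big],
\]
so the payoffs coincide state by state, a fortiori $(1)$ holds.

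\textbf{$(3) \Rightarrow (2)$.} Fix a separable problem $(A_t,u_t)_t$ and a strategy $\tau$. Since signals are exogenous and payoffs are separable, past decisions are payoff-irrelevant, so it suffices to reproduce, period by period, the conditional choice probabilities given the observed signal history. Let $\widehat{\tau}_t(a_t|y^t) := \proba_{\theta,g,\tau}(\mathbf{a}_t = a_t \mid \mathbf{y}^t=y^t)$; this is well defined whenever $g^t(y^t|\theta)>0$ and is independent of $\theta$, because conditional on $\mathbf{y}^t=y^t$ the past actions $\mathbf{a}^{t-1}$ are generated solely from $y^{t-1}$ and the internal randomization of $\tau$. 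Using the garbling $\gamma_t$ supplied by $\Delta$-sufficiency, define $\sigma_t(a_t | x^t) := \sum_{y^t}\gamma_t(y^t|x^t)\,\widehat{\tau}_t(a_t|y^t)$ (a bona fide strategy, which simply does not condition on past actions). Then
\[
\marg_{A_t}\proba_{\theta,f,\sigma}(a_t) = \sum_{x^t}f^t(x^t|\theta)\sigma_t(a_t|x^t) = \sum_{y^t}\Big(\sum_{x^t}\gamma_t(y^t|x^t)f^t(x^t|\theta)\Big)\widehat{\tau}_t(a_t|y^t) = \sum_{y^t}g^t(y^t|\theta)\widehat{\tau}_t(a_t|y^t) = \marg_{A_t}\proba_{\theta,g,\tau}(a_t),
\]
for all $t$ and $\theta$, which is $(2)$. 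The point worth stressing is that no coherence across periods is required: a \emph{different} garbling $\gamma_t$ may be invoked at each $t$, since the period-$t$ decision neither constrains nor is constrained by the decisions at other periods.

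\textbf{$(1) \Rightarrow (3)$.} Fix $t$ and consider the subclass of separable problems in which $u_{t'}\equiv 0$ and $A_{t'}$ is a singleton for every $t'\neq t$, while $(A_t,u_t)=(A,u)$ is arbitrary. In such a problem the decision-maker effectively chooses only $a_t$, as a function of the observed history $x^t$, to maximize $\mathbb{E}_\theta\,u(\mathbf{a}_t,\theta)$; as in the $(3)\Rightarrow(2)$ step, strategies (for $g$) correspond, via the $\theta$-independent conditionals $\widehat{\tau}_t(\cdot|y^t)$, to static decision rules on the experiment $g^t:\Theta\to\Delta(Y^t)$, and likewise $\sigma$ on $f^t$. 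Hence statement $(1)$ restricted to this subclass says exactly that $f^t$ is more valuable than $g^t$ in static problems, so Theorem \ref{th:blackwell} yields a garbling $\gamma_t:X^t\to\Delta(Y^t)$ with $g^t(y^t|\theta)=\sum_{x^t}\gamma_t(y^t|x^t)f^t(x^t|\theta)$ for all $(y^t,\theta)$. Ranging over all $t$ gives $\Delta$-sufficiency.

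\textbf{Main obstacle.} The only genuinely delicate part is making the "decoupling" rigorous: that matching action marginals period by period is enough for $(1)$ (here trivial because the two payoffs become \emph{equal}), that the conditional choice probabilities $\widehat{\tau}_t$ are independent of $\theta$, and that the "period-$t$-only" separable problems reproduce \emph{exactly} the static comparison of $f^t$ against $g^t$ (so that Theorem \ref{th:blackwell} applies verbatim). Once these reductions are in place, the theorem is a period-wise invocation of Blackwell's result, and the contrast with discounting (Theorem \ref{th:fixed-discounting-uncontrolled}) is precisely that there the comparison is against the single mixture experiment $\sum_t\delta_t f^t$ rather than against each $f^t$ separately.
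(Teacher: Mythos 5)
Your proof is correct, but it closes the equivalence by a different route than the paper. The paper's proof is obtained by adapting the proof of Theorem \ref{th:fixed-discounting-uncontrolled}: there, $(1)\Rightarrow(2)$ goes through Kuhn's theorem, the convexity of the set of attainable (profiles of) action marginals, and a separating-hyperplane argument, while $(2)\Rightarrow(3)$ is extracted from the special ``report your signal history'' decision problem with $A_t=Y^t$ and $\tau_t(y^t|y^t,a^{t-1})=1$, reading the garbling off the replicating strategy $\sigma$. You instead prove $(1)\Rightarrow(3)$ directly: fixing $t$ and zeroing out all other periods (singleton $A_{t'}$, $u_{t'}\equiv 0$ for $t'\neq t$) reduces statement $(1)$ on that subclass to the static comparison of $f^t$ against $g^t$, so Theorem \ref{th:blackwell} can be invoked as a black box, period by period. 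This is legitimate precisely because separable problems allow $(A_t,u_t)$ to vary across periods --- a freedom unavailable in the discounted class, where $(A_t,u_t)=(A,u)$ and the fixed weights $\delta_t$ force the mixture experiment $\sum_t\delta_t f^t$ to appear; your remark at the end correctly identifies this contrast. What your route buys is economy: no separation argument and no reporting construction, just Blackwell's theorem applied $T$ times plus the (correct) observations that $\widehat{\tau}_t(\cdot|y^t)$ is $\theta$-independent and that exogenous signals make $\marg_{X^t}\proba_{\theta,f,\sigma}=f^t(\cdot|\theta)$ strategy-independent. What it does not deliver is a self-contained proof of $(1)\Rightarrow(2)$ (you get it only through $(3)$), and it does not generalize to settings where single periods cannot be isolated; the paper's machinery is built to carry over to those harder cases. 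Your $(3)\Rightarrow(2)$ and $(2)\Rightarrow(1)$ steps essentially coincide with the paper's.
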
 
\medskip 

As the introductory example demonstrates, the condition of $\Delta$-sufficient is stronger than the condition of $\delta$-sufficiency. However, it follows from the discussion in Section \ref{sec:discount factors} that requiring $\delta$-sufficiency for all discount factors is equivalent to $\Delta$-sufficiency. We record this result in the following corollary. 
\begin{corollary}
The following statements are equivalent
\begin{enumerate}[label=(\arabic*)]
\item The experiment $f$ is $\Delta$-sufficient for the experiment $g$.
\item The experiment $f$ is $\delta$-sufficient for the experiment $g$ for all $\delta$. 
\end{enumerate}
\end{corollary}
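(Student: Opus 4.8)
The plan is to argue directly with the garbling forms of the two notions: statement~(3) of Theorem~\ref{th:fixed-discounting-uncontrolled} (the identity $\delta_t g^t(y^t|\theta)=\sum_{t'}\sum_{x^{t'}}\delta_{t'}f^{t'}(x^{t'}|\theta)\gamma_{t'}(y^t|x^{t'})$ for garblings $\gamma_{t'}:X^{t'}\to\Delta(\cup_{s}Y^s)$) and statement~(3) of Theorem~\ref{th:separable-uncontrolled} (the identity $g^t(y^t|\theta)=\sum_{x^t}\bar\gamma_t(y^t|x^t)f^t(x^t|\theta)$ for garblings $\bar\gamma_t:X^t\to\Delta(Y^t)$). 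First I would fix notation: for $t=1,\dots,T$ let $e_t$ denote the discount factor degenerate at period $t$, and note that signals $x^t$ with $f^t(x^t|\theta)=0$ for every $\theta$ are immaterial to both notions of sufficiency, so garblings may be defined arbitrarily on such unreachable signals.

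For the direction $(1)\Rightarrow(2)$, I would start from the $\Delta$-sufficiency garblings $\bar\gamma_t$, fix an arbitrary $\delta$, and extend each $\bar\gamma_{t'}$ to a garbling $\gamma_{t'}:X^{t'}\to\Delta(\cup_{s}Y^s)$ that places all mass on $Y^{t'}$, i.e. $\gamma_{t'}(y^{t'}|x^{t'}):=\bar\gamma_{t'}(y^{t'}|x^{t'})$ and $\gamma_{t'}(y^s|x^{t'}):=0$ for $s\neq t'$. Substituting into the right-hand side of the $\delta$-sufficiency identity, only the $t'=t$ term survives and equals $\delta_t\sum_{x^t}\bar\gamma_t(y^t|x^t)f^t(x^t|\theta)=\delta_t g^t(y^t|\theta)$, which is exactly what is required; so $f$ is $\delta$-sufficient for $g$ for every $\delta$. (Equivalently, one can observe that $\Delta$-sufficiency is precisely the statement that $f$ is $e_t$-sufficient for $g$ for every $t$, and since every $\delta$ is a convex combination of the $e_t$'s, invoke Proposition~\ref{prop:convexity}.)

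For the direction $(2)\Rightarrow(1)$, I would specialize the hypothesis to $\delta=e_t$ for each fixed $t$. Theorem~\ref{th:fixed-discounting-uncontrolled}(3) then yields garblings $\gamma_{t'}:X^{t'}\to\Delta(\cup_s Y^s)$ with
\[(e_t)_s\, g^s(y^s|\theta)=\sum_{t'}\sum_{x^{t'}}(e_t)_{t'}\, f^{t'}(x^{t'}|\theta)\,\gamma_{t'}(y^s|x^{t'})=\sum_{x^t} f^t(x^t|\theta)\,\gamma_t(y^s|x^t),\]
since only $t'=t$ contributes on the right. For $s\neq t$ the left-hand side vanishes, so $\sum_{x^t}f^t(x^t|\theta)\gamma_t(y^s|x^t)=0$; as all terms are nonnegative and every $x^t$ is reachable, this forces $\gamma_t(y^s|x^t)=0$ for $s\neq t$. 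Hence $\gamma_t(\cdot|x^t)$ is supported on $Y^t$, so it lies in $\Delta(Y^t)$ (redefining it arbitrarily on unreachable signals changes nothing), and the case $s=t$ gives $g^t(y^t|\theta)=\sum_{x^t}f^t(x^t|\theta)\gamma_t(y^t|x^t)$ for all $(y^t,\theta)$. Collecting these over $t=1,\dots,T$ produces the garblings required by Theorem~\ref{th:separable-uncontrolled}(3), establishing $\Delta$-sufficiency.

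The only delicate point is the reachability reduction in $(2)\Rightarrow(1)$: without first discarding signals that $f^t$ never produces, the identity with $\delta=e_t$ would not pin down the auxiliary garbling to concentrate on $Y^t$, because mass could leak onto some $Y^s$ with $s\neq t$ along such an unused signal. Since those signals are irrelevant to both $\delta$- and $\Delta$-sufficiency, this is handled at the outset, and the remainder is routine bookkeeping; I do not anticipate any genuine obstacle.
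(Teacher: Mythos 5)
Your proof is correct and follows essentially the same route as the paper's: the paper derives the corollary from the observation in Section \ref{sec:discount factors} that $e_t$-sufficiency (for the discount factor degenerate at $t$) is equivalent to $f^t$ being sufficient for $g^t$, combined with Proposition \ref{prop:convexity} and the fact that every $\delta$ is a convex combination of the $e_t$'s --- exactly the parenthetical alternative you mention, while your direct garbling construction for $(1)\Rightarrow(2)$ is an equally routine one-line verification. Your explicit support argument in $(2)\Rightarrow(1)$ (that the identity at $\delta=e_t$ forces the garbling to concentrate on $Y^t$ along reachable signals) usefully fills in a step the paper dismisses as ``clearly equivalent.''
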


Also, unlike \cite{greenshtein96}, we have not studied constrained discounted problems. (A problem is constrained if the set of feasible decisions at a period depends on the history of decisions made in the past.) We do so in the appendix and prove an equivalence with the result of  \cite{greenshtein96}, that is, the experiment $f$ is more valuable than the experiment $g$ in all constrained \emph{discounted} problems if, and only if, it is more valuable in all constrained \emph{sequential} problems.  Intuitively, we have so much flexibility in the choice of feasible sequences of decisions that we can force the decision-maker to complete histories of simulated signals, as in \cite{greenshtein96}.\medskip 

As already discussed, in a concurrent paper, \cite{whitmeyer-williams-2024} prove an equivalent version of Theorems \ref{th:fixed-discounting-uncontrolled} and  \ref{th:separable-uncontrolled} in terms of distributions over posterior beliefs. Neither \cite{greenshtein96} nor \cite{whitmeyer-williams-2024} extend their analysis to, among others, the sequential comparison of experiments, time-varying states or controlled flow of information, as we do.

We conclude with a discussion of few other related works. \cite{liang2022} study the optimal allocation of resources across multiple sources of information in a continuous-time stopping problem. (See also \cite{liang2018}.)  More precisely, there is fixed payoff-relevant state given by a linear combination of unknown parameters $(\theta_1,\dots,\theta_n)$, with a known Gaussian distribution. At each instant, the decision-maker either allocates his (unit) budget to the $n$ sources of information or takes a decision -- taking a decision stops the game. The budget allocation influences the diffusion process the decision-maker observes.\footnote{In discrete time, if the decision-maker allocates $q_i$ to source $i$, he observes the noisy signal $\theta_i + \varepsilon_i$, where $\varepsilon_i$ is Gaussian with zero mean and variance $1/q_i$. Their modeling is the continuous-time analogue.} Under some assumptions, the authors show that the optimal strategy is history-independent and consists in first allocating the entire budget to the most informative source and, subsequently, to invest into additional sources until all sources are funded  -- investments are made at fixed times. Moreover, their analysis shows that the strategy is optimal regardless of the decision problem the decision-maker faces when he stops learning.  In other words, their strategy induces an experiment, which is sufficient for all other inducible experiments.  In that sense, their work generalizes Theorem 3.1 of \cite{greenshtein96}.\footnote{It is not a generalization of Theorem 1.1a, which is the one we discussed earlier.} Our work differs from theirs in two important aspects. First, we consider different classes of games and do not restrict attention to Gaussian distributions. Second, there is a natural decoupling between the decision problem the decision-maker faces when he stops learning and the choice of experiments (induced by the budget allocation). Such a decoupling is not always possible in our analysis -- see Section \ref{sec:control-info}. It is also worth noting that since the decision-maker can allocates his budget at each period, their comparison is sequential.\medskip 

 \cite{lehrer2023} also study a stopping problem, where the decision-maker either acquires some information or stops and takes an action. At period $t$, if the decision-maker decides to acquire some information, he pays a fee and obtains a signal from a static experiment $f$. (Signals are drawn independently and identically across time.)  Their main focus is on how an information provider should design the fee structure. They show that upfront fees are optimal. Intuitively, upfront fees do not distort the incentives to acquire information and insure that the decision-maker purchases as often as possible the information for sale. They also show that if $f$ is sufficient for $g$, acquiring information is optimal for the decision-maker under a larger set of beliefs when the information is from $f$.   This does not mean, however, that the decision-maker acquires information for longer as beliefs also get updated faster when the information is from the most informative experiment $f$. This latter observation is reminiscent of Example 2.1 in \cite{greenshtein96}. \medskip
 
 Finally,  \cite{mu2021} study when $T$ independent and identical copies of one static experiment  is (eventually) sufficient for $T$ independent and identical copies of another.\footnote{More precisely, they require the existence of $\overline{T}$ such that for all $T \geq \overline{T}$, $T$ independent and identical copies of one static experiment is sufficient for another.} They provide a complete characterization for dichotomies in terms of the Renyi divergence of the static experiments. For earlier related results, see \cite{torgersen1970},  \cite{torgersen91}, and      \cite{moscarini2002}. \medskip

\newpage
	\appendix
	\section{Proofs}
	
\subsection{Proof of Theorem \ref{th:fixed-discounting-uncontrolled} and \ref{th:evolving-states}} We first prove Theorem \ref{th:evolving-states} and then derive Theorem \ref{th:fixed-discounting-uncontrolled} as a corollary. \medskip 

$(2) \implies (1)$. Immediate. \medskip 

$(1) \implies (2)$.  Consider two behavioral strategies $\sigma$ and $\sigma'$, with induced distributions $\mathbb{P}_{p,f,\sigma}$ and $\mathbb{P}_{p,f,\sigma'}$, respectively. From Kuhn's theorem, there exist two mixed strategies $\widehat{\sigma}$ and $\widehat{\sigma}'$, which induce the same distributions over states, signals and actions.  From the convexity of the set of mixed strategies and the linearity of the measures, 
\[ \mathbb{P}_{p,f,\alpha\widehat{\sigma}+ (1-\alpha)\widehat{\sigma}'} = \alpha \mathbb{P}_{p,f,\widehat{\sigma}} + (1-\alpha) \mathbb{P}_{p,f,\widehat{\sigma}'}\]
for all $\alpha \in [0,1]$. Therefore, the set 
\[\left\{\sum_{t} \delta_t \marg_{A_t \times \Theta_t}\mathbb{P}_{p,f,\widehat{\sigma}}: \widehat{\sigma} \text{\;a mixed strategy}\right\} \subseteq \Delta(A \times \Theta)\]
is convex. (From Kuhn's theorem, we obtain the same set if we consider all behavioral strategies.)

If (2) does not hold, there exists a behavioral strategy $\tau$ and, therefore, a mixed strategy $\widehat{\tau}$ such that 
\[\sum_{t} \delta_t \marg_{A_t \times \Theta_t}\mathbb{P}_{p,g,\widehat{\tau}} \notin  \left\{\sum_{t} \delta_t \marg_{A_t \times \Theta_t}\mathbb{P}_{p,f,\widehat{\sigma}}: \widehat{\sigma} \text{\;a mixed strategy}\right\}.\]
As in \cite{blackwell51}, we can then apply a separation argument to prove the existence of a payoff function $u$ such that $(1)$ is violated, the required contradiction.

\medskip 

$(3) \implies (2)$. Fix a decision problem $(A,u)$ and a strategy $\tau$. Let $(\gamma_{t'})_{t'}$ be the garblings, with $\gamma_{t'}: X^{t'} \rightarrow \Delta(\bigcup_{t}Y^{t})$. Note that summing over all elements in $\bigcup_{t}Y^{t}$ is the same as summing over all pairs $(t, y^{t})$, where the second coordinate is  in $Y^{t}$. Throughout, we therefore indicate summations over $\bigcup_{t}Y^{t}$ by summing over $(t,y^t)$. For all $t'$, for all $x^{t'}$, for all $a$, define $\sigma_{t'}$ as:
\begin{align*}
\sigma_{t'}(a|x^{t'}) := \sum_{(t,y^{t})}\mathbb{P}_{p,g,\tau}(\boldsymbol{a}_{t}=a|y^{t}) \gamma_{t'}(y^{t}|x^{t'}).
\end{align*}
From the above observation, this is well-defined as $\sum_{(t,y^{t})}\gamma_{t'}(y^{t}|x^{t'}) =1$.

Under the strategy $\sigma = (\sigma_{t'})_{t'}$, the discounted probability of choice $a$ is 
\begin{align*}
\sum_{t'}\delta_{t'} \mathbb{P}_{p,f,\sigma}(\boldsymbol{a}_{t'}=a,\boldsymbol{\theta}_{t'}=\theta) = 
\sum_{t',x^{t'}}\delta_{t'} \mathbb{P}_{p,f,\sigma}(\boldsymbol{a}_{t'}=a|x^{t'},\boldsymbol{\theta}_{t'}=\theta)\mathbb{P}_{p,f,\sigma}(x^{t'},\boldsymbol{\theta}_{t'}=\theta) = \\
\sum_{t',x^{t'}}\delta_{t'} \mathbb{P}_{p,f,\sigma}(x^{t'},\boldsymbol{\theta}_{t'}=\theta)\sigma_{t'}(a|x^{t'})= \\
\sum_{t',x^{t'}}\delta_{t'} \mathbb{P}_{p,f,\sigma}(x^{t'},\boldsymbol{\theta}_{t'}=\theta)\Bigl(\sum_{t,y^{t}}\mathbb{P}_{p,g,\tau}(\boldsymbol{a}_{t}=a|y^{t}) \gamma_{t'}(y^{t}|x^{t'})\Bigr) = \\
\sum_{t,y^{t}}\mathbb{P}_{p,g,\tau}(\boldsymbol{a}_{t}=a|y^{t})\Bigl(\sum_{t',x^{t'}}\delta_{t'}\mathbb{P}_{p,f,\sigma}(x^{t'},\boldsymbol{\theta}_{t'}=\theta) \gamma_{t'}(y^{t}|x^{t'})\Bigr) = \\
\sum_{t,y^{t}}\mathbb{P}_{p,g,\tau}(\boldsymbol{a}_{t}=a|y^{t}) \delta_{t}\mathbb{P}_{p,g,\tau}(y^{t},\boldsymbol{\theta}_{t}=\theta) = \\
\sum_{t} \delta_{t}\mathbb{P}_{p,g,\tau}(\boldsymbol{a}_{t}=a, \boldsymbol{\theta}_{t}=\theta),
\end{align*}
where
\begin{align*}
\delta_{t} \mathbb{P}_{p,g,\tau}(y^{t},\boldsymbol{\theta}_{t}=\theta) = \sum_{t',x^{t'}} \delta_{t'} \mathbb{P}_{p,f,\sigma}(x^{t'},\boldsymbol{\theta}_{t'}=\theta)\gamma_{t'}(y^{t}|x^{t'}),
\end{align*}
from $\delta$-sufficiency. This follows from the observation that 
\begin{align*}
\mathbb{P}_{p,f,\sigma}(x^{t'},\boldsymbol{\theta}_{t'}=\theta) & =  \sum_{\theta_1,\dots,\theta_{t'},\dots,\theta_T : \theta_{t'}=\theta}p(\theta_1,\dots,\theta_{t'},\dots,\theta_T)f^{t'}(x^{t'}|\theta_1,\dots,\theta_{t'},\dots,\theta_T),\\
\mathbb{P}_{p,g,\tau}(y^t,\boldsymbol{\theta}_t=\theta) & =  \sum_{\theta_1,\dots,\theta_t,\dots,\theta_T : \theta_t=\theta}p(\theta_1,\dots,\theta_t,\dots,\theta_T)g^t(y^t|\theta_1,\dots,\theta_t,\dots,\theta_T).
\end{align*}
\medskip 
	
$(2) \implies (3)$. Let $A = \bigcup_{t=1}^{T}Y^{t}$ and $\tau_{t}(y^{t} |y^{t}, a^{t-1})=1$ for all $(t,y^{t},a^{t-1})$. By construction, 	
\begin{align*}	
\sum_{t'}\delta_{t'}\mathbb{P}_{p,g,\tau}(\boldsymbol{a}_{t'}=y^{t},\boldsymbol{\theta}_{t'}=\theta) =  \delta_{t}	\sum_{\theta_1,\dots,\theta_{t},\dots,\theta_T : \theta_{t}=\theta}p(\theta_1,\dots,\theta_{t},\dots,\theta_T)g^{t}(y^{t}|\theta_1,\dots,\theta_{t},\dots,\theta_T).
\end{align*}
From (2), there exists $\sigma=(\sigma_{t'})_{t'}$, where $\sigma_{t'}: X^{t'} \rightarrow \Delta( \bigcup_{t}^{T}Y^{t})$. such that 
\[\sum_{t'}\delta_{t'}\mathbb{P}_{p,g,\tau}(\boldsymbol{a}_{t'}=y^{t},\boldsymbol{\theta}_{t'}=\theta) = \sum_{t'}\delta_{t'}\mathbb{P}_{p,f,\sigma}(\boldsymbol{a}_{t'}=y^{t},\boldsymbol{\theta}_{t'}=\theta).\]
Since 
\begin{align*}
\sum_{t'}\delta_{t'}\mathbb{P}_{p,f,\sigma}(\boldsymbol{a}_{t'}=y^{t},\boldsymbol{\theta}_{t'}=\theta) & =  
\sum_{t',x^{t'}}\delta_{t'}\mathbb{P}_{p,f,\sigma}(\boldsymbol{a}_{t'}=y^{t}|x^{t'},\boldsymbol{\theta}_{t'}=\theta)\mathbb{P}_{p,f,\sigma}(x^{t'},\boldsymbol{\theta}_{t'}=\theta)   \\
& = \sum_{t',x^{t'}}\delta_{t'}\mathbb{P}_{p,f,\sigma}(x^{t'},\boldsymbol{\theta}_{t'}=\theta)\sigma_{t'}(y^{t}|x^{t'}),
\end{align*}
 the result follows immediately from the observation that
 \begin{align*}
\mathbb{P}_{p,f,\sigma}(x^{t'},\boldsymbol{\theta}_{t'}=\theta) & =  \sum_{\theta_1,\dots,\theta_{t'},\dots,\theta_T : \theta_{t'}=\theta}p(\theta_1,\dots,\theta_{t'},\dots,\theta_T)f^{t'}(x^{t'}|\theta_1,\dots,\theta_{t'},\dots,\theta_T).
\end{align*}
If $T< \infty$, this completes the proof of Theorem \ref{th:evolving-states}. \medskip

If $T = \infty$, let $A = \bigcup_{t=1}^{T_n} Y^{t}$ with $T_n< +\infty$.  
The previous argument shows that there exist $\gamma_{t'}^n: X^{t'} \rightarrow \Delta(\bigcup_{t=1}^{T_n} Y^{t})$, $t'=1,\dots,+\infty$, such that 
\begin{align}
\delta_t \mathbb{P}_{p,g,\tau}(y^t,\boldsymbol{\theta}_t=\theta)  = \sum_{t'=1}^{+\infty} \delta_{t'}\mathbb{P}_{p,f,\sigma}(x^{t'},\boldsymbol{\theta}_{t'}=\theta)\gamma_{t'}^n(y^{t}|x^{t'}), \label{eq:n-suff}
\end{align}
for all $y^t$, for all $t \leq T_n$, for all $\theta$. We can extend the garbling $\gamma_{t'}^n$ into the garblings $\gamma_{t'}^n: X^{t'} \rightarrow \Delta(\bigcup_{t=1}^{+\infty} Y^{t})$ by letting $\gamma_{t'}^n(y^{t}|x^{t'})=0$ whenever $t >T_n$. 
\medskip
Therefore, the set 
\begin{align*}
\Gamma_n:=\{(\gamma_1,\dots,\gamma_{+\infty}): \text{\,Eq. (\ref{eq:n-suff}) holds for all\;} y^t, \text{\,for all\;} t \leq T_n, \text{\,for all\;} \theta\}
\end{align*}
is non-empty. It is also closed in the product topology. Finally, if $(T_n)_n$ is an increasing sequence in $n$ converging to $+ \infty$, $\Gamma_n$ is  a decreasing sequence in $n$. By Cantor's intersection theorem, $\cap_n \Gamma_n \neq \emptyset$, which completes the proof.\medskip 

Lastly, Theorem \ref{th:fixed-discounting-uncontrolled} follows as an immediate corollary by restricting attention to distributions $p$ such that $p(\theta_1,\dots,\theta_t,\dots,\theta_T)>0$ if and only if $\theta_t=\theta_t'$ for all $(t,t')$, i.e., by restricting attention to fully persistent states. 

\subsection{Proof of Theorem \ref{fixed-discounting}}
\medskip

$(1) \implies (2)$ and $(2) \implies (1)$. Immediate. 

$(3) \implies (2)$. Consider the decision problem $(A,u,\kappa_t)_t$. Fix any strategy $\tau$. Observe that 
\begin{align*}
\proba_{\theta, g, \kappa, \tau}(a_{t}|y^{t},k^{t-1})= \\
\frac{\sum_{a^{t-1}}g_1(y_1|\theta)\tau_1(a_1|y_1)\kappa_1(k_1|a_1)\times \dots \times g_t(y_t|y^{t-1},k^{t-1},\theta)\tau_{t}(a_{t}|a^{t-1},y^{t},k^{t-1})}
{\sum_{a^{t}}g_1(y_1|\theta)\tau_1(a_1|y_1)\kappa_1(k_1|a_1)\times \dots \times g_t(y_t|y^{t-1},k^{t-1},\theta)\tau_{t}(a_{t}|a^{t-1},y^{t},k^{t-1})}=
\\
\frac{\sum_{a^{t-1}}\tau_1(a_1|y_1)\kappa_1(k_1|a_1)\times \dots \times \tau_{t}(a_{t}|a^{t-1},y^{t},k^{t-1})}
{\sum_{a^{t}}\tau_1(a_1|y_1)\kappa_1(k_1|a_1)\times \dots \times \tau_{t}(a_{t}|a^{t-1},y^{t},k^{t-1})},
\end{align*}
whenever the denominator is strictly positive, i.e., $ \proba_{\theta, g, \kappa, \tau}(y^{t},k^{t-1})>0$. Note that the probability $\proba_{\theta, g, \kappa, \tau}(a_{t}|y^{t},k^{t-1})$ is independent of $\theta$. 
 \medskip

Define $\xi_{t'}: \bigcup_{t}(Y^t \times K^{t-1}) \times K^{t'-1} \rightarrow \Delta(K_{t'})$ as follows: 
\begin{align*}
\xi_{t'}(k_{t'}|(y^{t},l^{t-1}),k^{t'-1}) :=  \sum_{a}\overline{\proba}_{\theta, g, \kappa, \tau}(\boldsymbol{a_t}=a|y^{t},l^{t-1})\kappa_{t'}(k_{t'}|a,k^{t'-1}),
\end{align*}
where $\overline{ \proba}_{\theta,g,\kappa,\tau}(\cdot|y^{t},l^{t-1}) =  \proba_{\theta,g,\kappa,\tau}(\cdot|y^{t},l^{t-1})$ whenever $\proba_{\theta,g,\kappa,\tau}(y^{t},l^{t-1})>0$ and is arbitrary, otherwise.   We again stress that the kernel $\xi_t$  is independent of the parameter $\theta$.\medskip

Let $(\gamma_t)_t$ be the kernels satisfying $\delta$-sufficiency, which correspond to the kernels $(\xi_t)_t$ we have constructed. Define $\sigma_{t'}: X^{t'} \times K^{t'-1} \rightarrow \Delta(A)$ as follows: 
\begin{align*}
\sigma_{t'}(a|x^{t'},k^{t'-1}) =  \sum_{t} \sum_{y^t,l^{t-1}}\overline{ \proba}_{\theta,g,\kappa,\tau}(\boldsymbol{a_t}=a|y^{t},l^{t-1})\gamma_{t'}(y^t,l^{t-1}|x^{t'},k^{t'-1}).
\end{align*}
For future reference, notice that: 
\begin{align*}
\sum_{a}\sigma_{t'}(a|x^{t'},k^{t'-1})\kappa_{t'}(k_{t'}|a,k^{t'-1})& = \sum_{a}\sum_{t,y^t,l^{t-1}}\overline{ \proba}_{\theta,g,\kappa,\tau}(\boldsymbol{a_t}=a|y^{t},l^{t-1})\gamma_{t'}(y^t,l^{t-1}|x^{t'},k^{t'-1})\kappa_{t'}(k_{t'}|a,k^{t'-1}),  \\
& = \sum_{t,y^t,l^{t-1}} \xi_{t'}(k_{t'}|(y^{t},l^{t-1}),k^{t'-1})\gamma_{t'}(y^t,l^{t-1}|x^{t'},k^{t'-1}).
\end{align*}

We now show that 
\[\sum_{t}\delta_t\marg_{A_{t}}\proba_{\theta,g,\kappa,\tau} = \sum_{t}\delta_t\marg_{A_{t}}\proba_{\theta,f,\kappa,\sigma}.\]
We have: 
\begin{align*}
\sum_{t', x^{t'},k^{t'-1}}\delta_{t'} \proba_{\theta, f, \kappa, \sigma}(x^{t'},k^{t'-1})  \proba_{\theta, f, \kappa, \sigma}(\boldsymbol{a_{t'}}=a|x^{t'},k^{t'-1})= \\
\sum_{t', x^{t'},k^{t'-1}}\delta_{t'} \proba_{\theta, f, \kappa, \sigma}(x^{t'},k^{t'-1})  \sigma_{t'}(a|x^{t'},k^{t'-1}) = \\
\sum_{t', x^{t'},k^{t'-1}}\delta_{t'} \proba_{\theta, f, \kappa, \sigma}(x^{t'},k^{t'-1})  \left(\sum_{t,y^t,l^{t-1}}\overline{ \proba}_{\theta,g,\kappa,\tau}(\boldsymbol{a_t}=a|y^{t},l^{t-1})\gamma_{t'}(y^t,l^{t-1}|x^{t'},k^{t'-1})\right) = \\
\sum_{t,y^t,l^{t-1}} \overline{\proba}_{\theta, g, \kappa, \tau}(\boldsymbol{a_t}=a|y^{t},l^{t-1}) \left(\sum_{t', x^{t'},k^{t'-1}}\delta_{t'} \proba_{\theta, f, \kappa, \sigma}(x^{t'},k^{t'-1})\gamma_{t'}(y^t,l^{t-1}|x^{t'},k^{t'-1}) \right).
\end{align*}

We argue that $\delta$-sufficiency implies that the term in parentheses equals $\delta_t\proba_{\theta,g,\kappa,\tau}(y^{t},l^{t-1})$. By construction, we have:  
\begin{flalign*}
&  \proba_{\theta,f,\kappa,\sigma}(x^{t'},k^{{t'}-1})  =  & \\
& f_1(x_1|\theta) \times \dots \times f_{t'}(x_{t'}|x^{t'-1},k^{t'-1},\theta)\times   & \\ 
& \left(\sum_{a_1}\sigma_1(a_1|x^1)\kappa_1(k_1|a_1)\right) \times \dots \times 
 \left(\sum_{a_{t'-1}}\sigma_{t'-1}(a_{t'-1}|x^{t'-1},k^{t'-2})\kappa_{t'-1}(k_{t'-1}|a_{t'-1},k^{t'-2})\right) = & \\
& f_1(x_1|\theta) \times \dots \times f_{t'}(x_{t'}|x^{t'-1},k^{t'-1},\theta)\times & \\ 
& \left(\sum_{t,y^t,l^{t-1}} \xi_1(k_1|(y^t,l^{t-1}),k^0)\gamma_1(y^t,l^{t-1}|x^1,k^0) \right)\times \dots \times & \\
& \left(\sum_{t,y^t,l^{t-1}}\xi_{t'-1}(k_{t'-1}|(y^t,l^{t-1}),k^{t'-2})\gamma_{t'-1}(y^t,l^{t-1}|x^{t'-1},k^{t'-2})\right) = & \\ 
 & = \proba_{\theta,f, \;\xi \circ \gamma}(x^{t'},k^{{t'}-1}).&
\end{flalign*}

The condition of $\delta$-sufficiency then implies that: 
\begin{align*}
\sum_{t', x^{t'},k^{t'-1}}\delta_{t'}\proba_{\theta,f,\kappa,\sigma}(x^{t'},k^{t'-1}) \gamma_{t'}(y^t,l^{t-1}|x^{t'},k^{t'-1}) = \delta_t \proba_{\theta,g, \xi}(y^{t},l^{t-1})
\end{align*}
which is equal to $\delta_t \proba_{\theta,g,\kappa,\tau}(y^{t},l^{t-1})$ whenever $\proba_{\theta,g,\kappa,\tau}(y^{t},l^{t-1})>0$. To see this, note that 
\begin{align*}
\proba_{\theta,g,\kappa,\tau}(y^{t},l^{t-1})  = \sum_{a^{t-1}}\proba_{\theta,g,\kappa,\tau}(a^{t-1},y^{t},l^{t-1}) \\
 = \sum_{a^{t-1}}\proba_{\theta,g,\kappa,\tau}(y_{t}|a^{t-1},y^{t-1},l^{t-1})\proba_{\theta,g,\kappa,\tau}(l_{t-1}|a^{t-1},y^{t-1},l^{t-2}) \proba_{\theta,g,\kappa,\tau}(a^{t-1}|y^{t-1},l^{t-2})\proba_{\theta,g,\kappa,\tau}(y^{t-1},l^{t-2})\\
 = \sum_{a^{t-1}}g_{t}(y_{t}|y^{t-1},l^{t-1},\theta) \kappa_{t-1}(l_{t-1}|a_{t-1},l^{t-2})\proba_{\theta,g,\kappa,\tau}(a^{t-1}|y^{t-1},l^{t-2})\proba_{\theta,g,\kappa,\tau}(y^{t-1},l^{t-2})  \\
 = g_{t}(y_{t}|y^{t-1},l^{t-1},\theta) \left(\sum_{a_{t-1}}\kappa_{t-1}(l_{t-1}|a_{t-1},l^{t-2})\proba_{\theta,g,\kappa,\tau}(a_{t-1}|y^{t-1},l^{t-2})\right)\proba_{\theta,g,\kappa,\tau}(y^{t-1},l^{t-2}) \\
 =g_{t}(y_{t}|y^{t-1},l^{t-1},\theta)\xi_{t-1}(l_{t-1}|(y^{t-1},l^{t-2}), l^{t-2})\proba_{\theta,g,\kappa,\tau}(y^{t-1},l^{t-2}),
\end{align*}
so that the result follows by induction.

\medskip 

$(2) \implies (3)$. Fix any collection $(\xi_t)_t$. Let $A= \bigcup_{t}(Y^t \times K^{t-1})$ and $\kappa_t=\xi_t$ for all $t$. Consider the strategy $\tau$ such that $\tau_t(y^t,l^{t-1}|y^t,l^{t-1}, a^{t-1})=1$ for all $(a^{t-1},y^t,l^{t-1})$. According to $\proba_{\theta, g, \kappa, \tau}$, the discounted probability  of $(y^{t+1},l^{t})$  is  
\begin{align*}
\proba_{\theta, g, \kappa, \tau}(y^{t+1},l^{t}) & = 
\delta_t \left[g_1(y_1|\theta)\xi_1(l_1|y_1) \times \dots \times \xi_{t}(l_t|(y^t,l^{t-1}),l^{t-1})g_{t+1}(y_{t+1}|y^{t},l^t,\theta)\right] \\
& = \delta_t\proba_{\theta, g, \xi}(y^{t+1},l^{t})
\end{align*}
\medskip 
For $f$ to be more valuable than $g$, there must exist a strategy $\sigma$ such that 
\begin{align*}
\proba_{\theta, g, \kappa, \tau}(y^{t+1},l^{t})=\sum_{t'}\delta_{t'}\proba_{\theta, f, \kappa, \sigma}(\boldsymbol{a_{t'}}=(y^{t+1},l^{t})),
\end{align*}
for all $(y^{t+1},l^{t})$, for all $t$.

To ease notation, let $h^{t'}$ be a history of past actions of length $t'$, that is, an element of 
\[\underbrace{\bigcup_{t}(Y^t \times K^{t-1}) \times \dots \times \bigcup_{t}(Y^t \times K^{t-1})}_{t' \text{\;times}}.\] 
From the law of total probability, we have
\begin{align*}
\proba_{\theta, f, \kappa, \sigma}(y^{t+1},l^{t})= \\
\sum_{(x^{t'+1},k^{t'})}\sum_{h^{t'}} \proba_{\theta, f, \kappa, \sigma}\left(y^{t+1},l^{t}| x^{t'+1},k^{t'}, h^{t'}\right)\proba_{\theta, f, \kappa, \sigma}\left(h^{t'} |x^{t'+1},k^{t'}\right)\proba_{\theta, f, \kappa, \sigma}(x^{t'+1},k^{t'}) = \\
\sum_{(x^{t'+1},k^{t'})}\sum_{h^{t'}} \sigma_{t+1}\left(y^{t+1},l^{t}| x^{t'+1},k^{t'}, h^{t'}\right)\proba_{\theta, f, \kappa, \sigma}\left(h^{t'} |x^{t'+1},k^{t'}\right)\proba_{\theta, f, \kappa, \sigma}(x^{t'+1},k^{t'}) =\\
\sum_{(x^{t'+1},k^{t'})}\gamma_{t'+1}\left(y^{t+1},l^{t}|x^{t'+1},k^{t'}\right) \proba_{\theta, f, \kappa, \sigma}(x^{t'+1},k^{t'}) ,
\end{align*}
where we define 
\begin{align*}
\gamma_{t'+1}\left(y^{t+1},l^{t}|x^{t'+1},k^{t'}\right) := \sum_{h^{t'}} \sigma_{t+1}\left(y^{t+1},l^{t}| x^{t'+1},k^{t'}, h^{t'}\right)\proba_{\theta, f, \kappa, \sigma}\left(h^{t'} |x^{t'+1},k^{t'}\right).
\end{align*}
Note that $\gamma_{t+1}: X^{t+1} \times K^{t} \rightarrow \Delta(\bigcup_{t}(Y^t \times K^{t-1}))$ is a well-defined kernel and independent of $\theta$, since $\proba_{\theta, f, \kappa, \sigma}\left(h^{t'} |x^{t'+1},k^{t'}\right)$ is independent of $\theta$. We now compute $\proba_{\theta, f, \kappa, \sigma}(x^{t'+1},k^{t'})$ as follows: 
\begin{align*}
\proba_{\theta, f, \kappa, \sigma}(x^{t'+1},k^{t'})& = \proba_{\theta, f, \kappa, \sigma}(x_{t'+1}|x^{t'},k^{t'})\proba_{\theta, f, \kappa, \sigma}(x^{t'},k^{t'})\\ & = f_{t'}(x_{t'+1}|x^{t'},k^{t'},\theta)\proba_{\theta, f, \kappa, \sigma}(x^{t'},k^{t'}),
\end{align*}
and 
\begin{align*}
\proba_{\theta, f, \kappa, \sigma}(x^{t'},k^{t'}) = \\
\sum_{h^{t'}, \tilde{y}^{t},\tilde{l}^{t-1}}\biggl(\proba_{\theta, f, \kappa, \sigma}(k_{t'}|\tilde{y}^{t},\tilde{l}^{t-1},x^{t'},k^{t'-1}, h^{t'})\proba_{\theta, f, \kappa, \sigma}(\tilde{y}^{t},\tilde{l}^{t-1}|x^{t'},k^{t'-1}, h^{t'})\times \\
\proba_{\theta, f, \kappa, \sigma}(h^{t'}|x^{t'},k^{t'-1})\proba_{\theta, f, \kappa, \sigma}(x^{t'},k^{t'-1}) \biggl)=\\
\left(\sum_{\tilde{y}^{t},\tilde{l}^{t-1}}\xi_{t'}(k_{t'}|(\tilde{y}^{t},\tilde{l}^{t-1}),k^{t'-1})\gamma_{t'}(\tilde{y}^t,\tilde{l}^{t-1}|x^{t'},k^{t'-1})\right)\proba_{\theta, f, \kappa, \sigma}(x^{t'},k^{t'-1}),
\end{align*}
where we use the identities: 
\begin{align*}
 \proba_{\theta, f, \kappa, \sigma}(k_{t'}|\tilde{y}^{t},\tilde{l}^{t-1},x^{t'},k^{t'-1}, h^{t'}) = \xi_t(k_{t'}|(\tilde{y}^{t},\tilde{l}^{t-1}), k^{t'-1}) \\
\gamma_{t'}(\tilde{y}^t,\tilde{l}^{t-1}|x^{t'},k^{t'-1})  = \sum_{h^{t'}} \proba_{\theta, f, \kappa, \sigma}(\tilde{y}^{t},\tilde{l}^{t-1}|x^{t},k^{t-1}, h^{t'})\proba_{\theta, f, \kappa, \sigma}(h^{t'}|x^{t'},k^{t'-1}).
\end{align*}
Therefore, 
\begin{align*}
\proba_{\theta, f, \kappa, \sigma}(x^{t'+1},k^{t'}) = \\
 f_t(x_{t'+1}|x^{t'},k^{t'},\theta)\left(\sum_{\tilde{y}^{t},\tilde{l}^{t-1}}\xi_{t'}(k_{t'}|(\tilde{y}^{t},\tilde{l}^{t-1}),k^{t'-1})\gamma_{t'}(\tilde{y}^{t},\tilde{l}^{t-1}|x^{t'},k^{t'-1})\right)\proba_{\theta, f, \kappa, \sigma}(x^{t'},k^{t'-1}).
\end{align*}
Iterating the argument,  it follows that  $\proba_{\theta, f, \kappa, \sigma}(x^{t'+1},k^{t'}) =  \proba_{\theta, f, \xi \circ \gamma}(x^{t'+1},k^{t'})$, which completes the proof if $T < + \infty$.

Finally, if $T=+\infty$, an argument similar to the one of the proof of Theorem \ref{th:fixed-discounting-uncontrolled} completes the proof.

\subsection{Proof of Theorem \ref{th:separable-uncontrolled}}
The proof of the equivalence between (1), (2) and (3) is nearly identical to the proof of Theorem \ref{th:fixed-discounting-uncontrolled} and left to the reader. 

\section{Constrained Decisions}

 Suppose that the decision-maker is constrained with $\mathcal{A} \subseteq A_1 \times \dots \times A_t \times \dots \times A_{T}$ the feasible sequences of decisions. Let $\mathcal{A}^t$ be the projection of $\mathcal{A}$ on its first $t$ coordinates. The strategy $\tau$ is feasible if $\tau_t(a_t|a^{t-1},y^t) >0$ only if $(a^{t-1},a_t) \in \mathcal{A}^t$. Similarly, for $\sigma$. We have the following theorem.

\begin{theorem}\label{th:constrained choices}
The experiment $f$ is more valuable than the experiment $g$ in all constrained discounted problems  if, and only if, $f$ is more valuable than $g$ in all constrained sequential problems.
\end{theorem}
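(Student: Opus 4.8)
The plan is to prove the two implications separately. The ``if'' direction is immediate: every constrained discounted problem $(A,u,\mathcal{A},\delta)$ is a particular constrained sequential problem, namely the one with the same feasible set $\mathcal{A}$, the same action sets, and the (separable, indeed discounted) payoff $U(a_1,\dots,a_T,\theta):=\sum_t\delta_t u(a_t,\theta)$. Hence if $f$ is more valuable than $g$ in all constrained sequential problems, it is a fortiori more valuable in all constrained discounted problems. The content is the ``only if'' direction, which I would prove by embedding an arbitrary constrained sequential problem into a suitably constrained discounted one, exploiting the fact that the feasibility set can be used to \emph{force the decision-maker to complete histories}, exactly as in \cite{greenshtein96}.

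Fix a constrained sequential problem — feasible sequences $\mathcal{A}\subseteq A_1\times\dots\times A_T$ and payoff $U:A_1\times\dots\times A_T\times\Theta\to\mathbb{R}$ — and a feasible strategy $\tau$ adapted to $g$. I would build the following constrained discounted problem. Let the action set at every period be $B:=\bigcup_{t=1}^{T}(A_1\times\dots\times A_t)$, the set of finite prefixes of decision sequences. Let $\widetilde{\mathcal{A}}\subseteq B^T$ consist of all $(b_1,\dots,b_T)$ such that $b_t\in A_1\times\dots\times A_t$ for each $t$, $b_{t-1}$ is the length-$(t-1)$ prefix of $b_t$, and $b_T\in\mathcal{A}$. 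This ``nesting'' constraint is precisely what forces adaptedness, and it makes the feasible strategies of this problem isomorphic to those of the original problem (here one uses that the projection $\mathcal{A}^t$ is exactly the set of feasible partial histories, so a feasible strategy never gets stuck). Let the instantaneous payoff be $v(b,\theta):=U(b,\theta)$ if $b$ has full length $T$ and $v(b,\theta):=0$ otherwise, and let the discount factor be any $\delta$ with $\delta_T>0$ (permissible, since we range over all constrained discounted problems). Along any feasible play the discounted payoff then equals $\delta_T U(b_T,\theta)$ with $b_T\in\mathcal{A}$.

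Now translate: $\tau$ induces a feasible strategy $\widetilde{\tau}$ adapted to $g$ in the constructed problem — at period $t$, having observed $y^t$ and necessarily having played the prefixes $(a^1,\dots,a^{t-1})$, draw $a_t$ according to $\tau_t(\cdot\mid a^{t-1},y^t)$ and play the length-$t$ prefix $a^t=(a^{t-1},a_t)$ — and under $\widetilde{\tau}$ the terminal decision $\boldsymbol{b}_T$ has the same law (under $g$, in state $\theta$) as the decision sequence $(\boldsymbol{a}_1,\dots,\boldsymbol{a}_T)$ induced by $\tau$. Since $f$ is more valuable than $g$ in this constrained discounted problem, there exists a feasible strategy $\widetilde{\sigma}$ adapted to $f$ whose discounted payoff is at least that of $\widetilde{\tau}$ in every state; dividing by $\delta_T>0$ and using $\boldsymbol{b}_T\in\mathcal{A}$ yields $\mathbb{E}_{\theta,f,\widetilde{\sigma}}[U(\boldsymbol{b}_T,\theta)]\ge\mathbb{E}_{\theta,g,\tau}[U(\boldsymbol{a}_1,\dots,\boldsymbol{a}_T,\theta)]$. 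Reading off from each length-$t$ prefix that $\widetilde{\sigma}$ selects gives a feasible strategy $\sigma$ adapted to $f$ in the original problem, inducing the same law of the decision sequence as $\boldsymbol{b}_T$, so $\mathbb{E}_{\theta,f,\sigma}[U(\boldsymbol{a},\theta)]\ge\mathbb{E}_{\theta,g,\tau}[U(\boldsymbol{a},\theta)]$ for all $\theta$. As the constrained sequential problem and $\tau$ were arbitrary, $f$ is more valuable than $g$ in all constrained sequential problems.

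The main obstacle is the bookkeeping in the two translations $\tau\leftrightarrow\widetilde{\tau}$ and $\widetilde{\sigma}\leftrightarrow\sigma$: one must verify carefully that the length/nesting constraints make the feasible-strategy sets of the two problems genuinely isomorphic and that the induced distributions over terminal decision sequences coincide in every state. A secondary point is the case $T=+\infty$, where there is no terminal period on which to place the payoff; this can be handled by approximating with finite-horizon problems and passing to the limit via the truncation/compactness argument already used in the proof of Theorem~\ref{th:fixed-discounting-uncontrolled}. (Alternatively, one may route the ``only if'' direction through Greenshtein's adapted-garbling characterization, showing directly that being more valuable in all constrained discounted problems already forces $g$ to be an adapted garbling of $f$.)
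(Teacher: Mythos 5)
Your proof is correct for finite $T$, but it takes the substantive (``only if'') direction along a genuinely different route from the paper. The paper constructs a single canonical constrained discounted problem whose decisions are \emph{histories of simulated $Y$-signals} ($A=\bigcup_t Y^t$, with the nesting constraint and the truth-telling strategy $\tau$), extracts from the replication requirement an \emph{adapted garbling} of $f$ into $g$, and then concludes by invoking Greenshtein's Theorem~1.1a, which characterizes ``more valuable in all constrained sequential problems'' precisely by the existence of such an adapted garbling. You instead embed an \emph{arbitrary} constrained sequential problem directly into a constrained discounted one whose decisions are \emph{prefixes of decision sequences}, with the nesting constraint enforcing adaptedness, the payoff concentrated on the terminal period, and a direct payoff comparison at the end. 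Both arguments turn on the same key trick — using the feasibility set to force the decision-maker to complete histories — but yours is more self-contained (it never needs the garbling characterization or \cite{greenshtein96} as a black box), whereas the paper's yields the adapted-sufficiency condition as an explicit byproduct, which is the more informative statement and also connects the result to the rest of the literature. The one place where your construction is genuinely weaker is $T=\infty$: with no terminal period there is nowhere to place the payoff $U$, and your proposed finite-horizon truncation of an arbitrary payoff on infinite sequences requires additional hypotheses (e.g., continuity of $U$ in the product topology) to pass to the limit; the paper's garbling route handles this case more cleanly by deriving the adapted garbling on each finite truncation and applying Cantor's intersection theorem, as in the proof of Theorem~\ref{th:fixed-discounting-uncontrolled}. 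You flag this issue and the fallback through the adapted-garbling characterization, which is exactly the right repair.
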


\begin{proof}[ of Theorem \ref{th:constrained choices}] $(\Leftarrow)$. Immediate. \medskip 

$(\Rightarrow)$. Assume $T < \infty$. (Again, if $T=\infty$, we can replicate the argument made in the proof of Theorem \ref{th:fixed-discounting-uncontrolled}.) Let $A =  Y^1 \cup \dots \cup Y^t \cup \dots \cup Y^T$ be the set of decisions. The set of feasible sequences is defined as follows:   $(a_1,\dots,a_t,\dots, a_T) \in \mathcal{A}$ if, and only if, 
$a_t \in Y^t$ and the projection of $a_t$ on its first $t' \leq t$ coordinates is $a_{t'}$, for all $(t,t')$. In words, a sequence of decisions is feasible if each new decision completes the previous one. \medskip 

Let $\tau$ be such that $\tau_t(a_t|y^t,a^{t-1})=1$ whenever $a_t=y^t$ for all ``on-path'' histories $(y^t,a^{t-1})$. (A history $(y^t,a^{t-1})$ is on-path if the projection of $y^t$ on its first $(t-1)$ coordinates is $a^{t-1}$.) The strategy is unspecified, otherwise.  Hence, $\tau_1(y_1|y_1) =1$ for all $y_1$, $\tau_2((y_1,y_2) |(y_1,y_2),y_1) =1$ for all $(y_1,y_2)$, etc. \medskip 

Under $\tau$, the discounted probability of $y^t$ is $\delta_t g^t(y^t|\theta)$ when the state is $\theta$. \medskip 

By construction, the action $y^t$ can only be chosen at period $t$. Hence, to replicate the discounted probability of $y^t$, there must exist a feasible strategy $\sigma$ such that  
\begin{align*}
\delta_t g^t(y^t|\theta) = \delta_t \sum_{x^t,\tilde{y}^1,\dots,\tilde{y}^{t-1}} f^t(x^t|\theta) \sigma_1(\tilde{y}^1|x^1)\sigma_2(\tilde{y}^2|x^2,\tilde{y}_1) \times \dots \times \sigma_t(y^t|x^t,(\tilde{y}^1,\dots,\tilde{y}^{t-1})). 
\end{align*}
Since feasibility requires that $\sigma_{t'}(\hat{y}^{t'}|x^{t'}, (\tilde{y}^1,\dots, \tilde{y}^{t'-1}))>0$  only if the projection of $\hat{y}^{t'}$ on any of its first $t^{''} \leq t'$ coordinates is $\tilde{y}^{t^{''}}$, the above is equivalent to:
\begin{align*}
\delta_t g^t(y^t|\theta) = \delta_t \sum_{x^t} f^t(x^t|\theta) \sigma_1(y^1|x^1)\sigma_2(y_2|x^2,y^1) \times \dots \times \sigma_t(y_t|x^t,(y^1,\dots,y^{t-1})). 
\end{align*}
Letting $\gamma_{t}(y_t|x^t,y^{t-1}) := \sigma_t(y_t|x^t,(y^1,\dots,y^{t-1}))$, we can rewrite the above expression as:
\begin{align*}
g^t(y^t|\theta) = \sum_{x^t} f^t(x^t|\theta) \gamma_1(y^1|x^1)\gamma_2(y_2|x^2,y^1) \times \dots \times \gamma_t(y_t|x^t,y^{t-1}), 
\end{align*}
which is the sufficiency condition for sequential problems --  the garbling is adapted. 
\end{proof}

Remark: In the simpler case, where $\mathcal{A} = A_1 \times \dots \times A_t \times \dots \times A_T$, with $A_t \subseteq A$ for all $t$, the same construction (i.e., using $A = \cup_t Y^t$ and $A_t=Y^t$) gives the equivalence with separable problems. In this case, we define the garbling $\gamma_t$ as: 
\begin{align*}
\gamma_t(y^t|x^t) := \sum_{\tilde{y}^1,\dots,\tilde{y}^{t-1}}\sigma_1(\tilde{y}^1|x^1)\sigma_2(\tilde{y}^2|x^2,y_1) \times \dots \times \sigma_t(y^t|x^t,(\tilde{y}^1,\dots,\tilde{y}^{t-1})),
\end{align*}
to obtain 
\begin{align*}
g^t(y^t|\theta) = \sum_{x^t}f^t(x^t|\theta)\gamma_t(y^t|x^t).
\end{align*}

\section{Online Appendix: Theorem \ref{th:comparison-bernouilli} and Additional Materials} 
This section contains the proof  of Theorem \ref{th:comparison-bernouilli} along with some additional materials on the comparison of Bernouilli distributions. We start with the proof of Theorem \ref{th:comparison-bernouilli}. \medskip

Recall that the experiment $f$ induces the following posteriors (the table also includes the probability of each posterior):  

\begin{table}[h!]
\resizebox{\columnwidth}{!}{%
\begin{tabular}{|c|c|c|c|c|c|c|}\hline 
signal: & $0$ & $1$ & $00$ & $01$ & $10$ & $11$ \\ \hline
$[\theta=0]$ & $\delta_1 p$ & $\delta_1(1-p)$ & $\delta_2p q$ & $\delta_2p(1-q)$ & $\delta_2(1-p)q$ & $\delta_2(1-p)(1-q)$  \\ \hline
$[\theta=1]$ & $\delta_1(1-p)$ & $\delta_1p$ &  $\delta_2(1-p)(1-q)$ & $\delta_2(1-p)q$ & $\delta_2 p(1-q)$ & $\delta_2 p q$ \\ \hline 
$2 \times $ proba. signal: & $\delta_1$ & $\delta_1$ & $\delta_2(p q + (1-p)(1-q))$ & $\delta_2(p(1-q) + (1-p)q)$ & $\delta_2((1-p)q+p(1-q))$ & $\delta_2 ((1-p)(1-q) + pq)$ \\ \hline
posterior of $[\theta=0]$: & $p$ & $1-p $ & $\frac{pq}{p q + (1-p)(1-q)} $ & $\frac{p(1-q)}{p(1-q) + (1-p)q}$ & $\frac{(1-p)q}{p(1-q) + (1-p)q}$   & $\frac{(1-p)(1-q)}{p q + (1-p)(1-q)}$ \\ \hline
\end{tabular}}
\end{table} 

To ease the exposition, we adopt the following notation: 
\begin{align*}
\pi_{11} & := \frac{(1-p)(1-q)}{p q + (1-p)(1-q)} \leq 1/2, \\
\pi_{1}  &:= 1-p  \leq  1/2,\\
\pi_{10} & := \frac{(1-p)q}{p(1-q) + (1-p)q}\leq 1/2, \\ 
\lambda  & := p q + (1-p)(1-q).
\end{align*} 
Note that $\pi_{11} \leq \pi_{1} \leq \pi_{10} \leq  1/2 \leq 1-\pi_{10} = \pi_{01} \leq  1-\pi_{1}= \pi_{0} \leq 1-\pi_{11}= \pi_{00}$, with strict inequalities if $p > q >1/2$. ($p \geq q$ is used to rank $\pi_{10}$ and $1-\pi_{10}$.) For future reference, observe that $ \lambda \pi_{11} + (1-\lambda)\pi_{10}=\pi_{1}$. 

Following Theorem 12.4.1 of \cite{blackwell-girshick}, the mixture experiment $\delta_1 f^1 + \delta_2 f^2$ is sufficient for the mixture experiment $\delta_1 g^1 + \delta_2g^2$ if, and only if, the cumulative distribution $F$ over posteriors induced by $f$ is a mean-preserving spread of the cumulative distribution $G$ over posteriors induced by $g$.\medskip 

The distribution  $F$ is  the increasing step function:  
\begin{align*}
2 \times F(\pi) = 
\begin{cases}
 \delta_2 \lambda  & \pi = \pi_{11}\\
\delta_2 \lambda + \delta_1  & \pi =\pi_{1} \\
1 & \pi = \pi_{10} \\
  1 + \delta_2(1-\lambda) & \pi = 1-\pi_{10} \\
1 + \delta_2(1-\lambda) + \delta_1 & \pi =1-\pi_{1} \\
 2 & \pi=1-\pi_{11}
\end{cases}
\end{align*}
We now compute $2\int_0^t F(\pi)d\pi$ for all $t \in [0,1]$: 
\begin{align*}
2\int_0^t F(\pi)d\pi = 
\begin{cases}
0 & t <\pi_{11} \\
t \delta_2 \lambda - \pi_{11} \delta_2\lambda & \pi_{11} \leq t <\pi_{1} \\
t (\delta_2 \lambda +\delta_1) - \pi_{11}\delta_2 \lambda -\pi_{1} \delta_1& \pi_{1} \leq t <\pi_{10} \\
t -  \pi_{1}       & \pi_{10} \leq t <1-\pi_{10} \\ 
t (2-\delta_2\lambda-\delta_1) + (1-\pi_{11})\delta_2 \lambda + (1-\pi_{1})\delta_1-1 &  1-\pi_{10} \leq t <1-\pi_{1}\\
t (2-\delta_2\lambda) + (1-\pi_{11})\delta_2 \lambda -1 & 1-\pi_{1} \leq t < 1-\pi_{11} \\ 
2t-1 &  1-\pi_{11} \leq t \leq 1 \\
\end{cases}
\end{align*}
(To compute the above, we have repeatedly used the observation that $\delta_2 \lambda +\delta_2(1-\lambda) + \delta_1 =1$.) We call the later function $H : [0,1] \rightarrow \mathbb{R}_+$. It is easy to check that the function is piece-wise linear, increasing and convex, thus it is the maximum of all the linear functions defining it, i.e., for all $t \in [0,1]$,
\begin{align*}
H(t) = \max \Big(0,&\; t \delta_2 \lambda - \pi_{11} \delta_2\lambda,  t (\delta_2 \lambda +\delta_1) - \pi_{11}\delta_2 \lambda -\pi_{1} \delta_1, t -  \pi_{1}, \\ 
& t (2-\delta_2\lambda-\delta_1) + (1-\pi_{11})\delta_2 \lambda + (1-\pi_{1})\delta_1-1, t (2-\delta_2\lambda) + (1-\pi_{11})\delta_2 \lambda -1, 2t-1\Big).
\end{align*}
\medskip

Similarly, the mixture experiment $\delta_1 g^1 + \delta_2g^2$, parameterized by  $(p',q')$, induces the function $H'$, with $\pi_{11}',\pi_{1}',\pi_{10}',\lambda'$ instead of $\pi_{11},\pi_{1},\pi_{10},\lambda$.\medskip 

To summarize, the experiment $f$ is $\delta$-sufficient for $g$  if, and only if, $H(t) \geq H'(t)$, for all $t$.\medskip

A necessary and sufficient condition for $H \geq H'$ is that $H(t) \geq H'(t)$ at $t=\pi_{11},\pi_{1}, \pi_{10},1-\pi_{10},1-\pi_{1},1-\pi_{11}$. To see this, observe that if $H(\pi_{11}) \geq H'(\pi_{11})$ and $H'(\pi_{1}) \geq H'(\pi_{1})$, then 
\begin{align*}
H(\beta \pi_{11} + (1-\beta) \pi_{1}) & = \beta H(\pi_{11})+ (1-\beta)H(\pi_{1}) \\
& \geq \beta H'(\pi_{11})+ (1-\beta)H'(\pi_{1}) 
& \geq H'(\beta \pi_{11} + (1-\beta) \pi_{1}),
\end{align*}
for all $\beta \in [0,1]$, where the first equality follows from the linearity of $H$ in the segment $[\pi_{11},\pi_{1}]$ and the last inequality follows from the convexity of $H'$. Repeating the argument segment by segment proves the statement. We now compute these six inequalities.\medskip 

\underline{\textsc{(a): $H(\pi_{11}) \geq H'(\pi_{11})$.}}

\begin{align*}
H(\pi_{11}) =0 \notag \\
 \geq 0  \tag{1}\\
\geq (\pi_{11}-\pi_{11}')\delta_2 \lambda'  \tag{2}\\
\geq \pi_{11} (\delta_2 \lambda' +\delta_1) -\pi_{11}'\delta_2 \lambda' -\pi_{1}' \delta_1 \tag{3}\\
\geq \pi_{11}-\pi_{1}' \tag{4}\\
\geq \pi_{11}(2-\delta_2 \lambda' -\delta_1) + (1-\pi_{11}')\delta_2\lambda' + (1-\pi_{1}')\delta_1 -1 \tag{5}\\
\geq \pi_{11}(2-\delta_2 \lambda' ) + (1-\pi_{11}')\delta_2\lambda'  -1\tag{6}\\
\geq 2 \pi_{11}-1 \tag{7}
\end{align*}

We can verify that if (2) is satisfied, so is (6). It suffices to show that 
\begin{align*}
(\pi_{11}-\pi_{11}')\delta_2 \lambda'  \geq \pi_{11}(2-\delta_2 \lambda' ) + (1-\pi_{11}')\delta_2x'  -1 \Leftrightarrow 2 \pi_{11}(\delta_2 \lambda' -1) \geq \delta_2x'-1,
\end{align*}
which is true because $\pi_{11} \leq 1/2$ and $\delta_2 \lambda' \leq 1$. The same argument shows that if (3) is satisfied, so is (5). (We repeatedly use such a pattern, which follows from the symmetry of the experiments.) Also, (7) is satisfied because $\pi_{11} \leq 1/2$. Finally, rewriting (3) as 
\begin{align*}
(\pi_{11} -\pi_{11}') \delta_2x' + (\pi_{11}-\pi_{1}') \delta_1,
\end{align*} 
it follows that if (2) and (4) are satisfied, so is (3). Therefore, we have the following. 

\begin{lemma}
$H(\pi_{11}) \geq H'(\pi_{11})$ if, only if,  
\begin{align*}
0 &\geq (\pi_{11}-\pi_{11}')\delta_2 \lambda' \\ 
0 & \geq \pi_{11}-\pi_{1}'
\end{align*} 
\end{lemma}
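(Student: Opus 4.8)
The plan is to read the claim off directly from the piecewise-linear descriptions of $H$ and $H'$ obtained above. Since the left-most kink of $H$ is at $\pi_{11}$ and $H$ vanishes on $[0,\pi_{11}]$, continuity gives $H(\pi_{11})=0$. On the other hand, $H'$ is the pointwise maximum of the seven affine functions listed in its defining formula (now with the primed data $\pi_{11}',\pi_1',\pi_{10}',\lambda'$), which I will denote $L_1',\dots,L_7'$ in the order displayed. Hence $H(\pi_{11})\geq H'(\pi_{11})$ holds \emph{if and only if} $L_i'(\pi_{11})\leq 0$ for every $i$, i.e.\ if and only if the seven inequalities (1)--(7) above all hold. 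The whole task is then to show that among these, only (2) and (4) have content --- the other five are implied --- and that (2) and (4) are exactly the two inequalities in the statement, whence the equivalence.

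First I would dispose of the automatic cases: (1) is $0\geq 0$, and (7) is $0\geq 2\pi_{11}-1$, which holds since $\pi_{11}\leq 1/2$. Next I would record the two ``symmetry'' reductions by direct computation:
\[
L_2'(\pi_{11})-L_6'(\pi_{11})=(\delta_2\lambda'-1)(2\pi_{11}-1),\qquad
L_3'(\pi_{11})-L_5'(\pi_{11})=(\delta_2\lambda'+\delta_1-1)(2\pi_{11}-1).
\]
Both right-hand sides are $\geq 0$ because $\pi_{11}\leq 1/2$, $\lambda'\leq 1$ and $\delta_1+\delta_2=1$ (so $\delta_2\lambda'\leq 1$ and $\delta_2\lambda'+\delta_1\leq 1$). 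Hence (2) $\Rightarrow$ (6) and (3) $\Rightarrow$ (5). Finally, the identity
\[
L_3'(\pi_{11})=(\pi_{11}-\pi_{11}')\,\delta_2\lambda'+(\pi_{11}-\pi_1')\,\delta_1 ,
\]
together with $\delta_1\geq 0$, shows that (2) and (4) imply (3). Chaining these implications, (2) and (4) force all of (1)--(7), hence $H(\pi_{11})\geq H'(\pi_{11})$; conversely that inequality trivially gives (2) and (4), since they are among (1)--(7). Reading (2) as $0\geq(\pi_{11}-\pi_{11}')\delta_2\lambda'$ and (4) as $0\geq\pi_{11}-\pi_1'$ is precisely the stated equivalence.

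I do not anticipate any genuine obstacle here: the argument is bookkeeping on seven affine functions. The only point requiring care is verifying the two factorizations above and checking that their scalar factors are nonpositive, which rests entirely on the already-available facts $\pi_{11}\leq 1/2$, $\lambda'\leq 1$ and the normalization $\delta_1+\delta_2=1$; once those are in hand the implications among (1)--(7) are forced.
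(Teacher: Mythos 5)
Your proposal is correct and follows essentially the same route as the paper: both use that $H(\pi_{11})=0$ and that $H'$ is the maximum of its seven defining affine pieces, then show (1) and (7) are automatic, (2) implies (6), (3) implies (5), and (2) together with (4) implies (3) via the decomposition $L_3'(\pi_{11})=(\pi_{11}-\pi_{11}')\delta_2\lambda'+(\pi_{11}-\pi_1')\delta_1$. Your explicit factorizations $(2\pi_{11}-1)(\delta_2\lambda'-1)$ and $(2\pi_{11}-1)(\delta_2\lambda'+\delta_1-1)$ are just a cleaner writing of the paper's own verifications.
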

\medskip

\underline{\textsc{(b): $H(\pi_{1}) \geq H'(\pi_{1})$.}}

\begin{align}
H(\pi_{11}) =(\pi_{1}-\pi_{11})\delta_2 \lambda \notag \\
 \geq 0 \tag{1}\\
\geq (\pi_{1}-\pi_{11}')\delta_2 \lambda' \tag{2}\\
\geq \pi_{1} (\delta_2 \lambda' +\delta_1) -\pi_{11}'\delta_2 \lambda' -\pi_{1}' \delta_1 \tag{3}\\
\geq \pi_{1}-\pi_{1}' \tag{4}\\
\geq \pi_{1}(2-\delta_2 \lambda' -\delta_1) + (1-\pi_{11}')\delta_2\lambda' + (1-\pi_{1}')\delta_1 -1 \tag{5}\\
\geq \pi_{1}(2-\delta_2 \lambda' ) + (1-\pi_{11}')\delta_2\lambda'  -1\tag{6}\\
\geq 2 \pi_1-1 \tag{7}
\end{align}
We show that if (2) holds, so does (6). It suffices to show that 
\begin{align*}
 (\pi_{1}-\pi_{11}')\delta_2 \lambda' \geq  \pi_{1}(2-\delta_2 \lambda' ) + (1-\pi_{11}')\delta_2\lambda'  -1 \Leftrightarrow 2 \pi_{1}(\delta_2\lambda'-1) \geq \delta_2\lambda'-1,
\end{align*}
which holds since $\pi_{1} \leq 1/2$ and $\delta_2\lambda'\leq 1$. Similarly, (3) implies (5). (1) also clearly holds. 

\begin{lemma}
$H(\pi_{1}) \geq H'(\pi_{1})$ if, and only if, 
\begin{align*}
(\pi_{1}-\pi_{11})\delta_2 \lambda & \geq (\pi_{1}-\pi_{11}')\delta_2 \lambda' \\
(\pi_{1}-\pi_{11})\delta_2\lambda & \geq (\pi_{1}-\pi_{11}')\delta_2\lambda' + (\pi_{1}-\pi_{1}')\delta_1 \\
(\pi_{1}-\pi_{11}) \delta_2\lambda  & \geq \pi_{1}-\pi_{1}'
\end{align*}
\end{lemma}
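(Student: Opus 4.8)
The plan is to use the representation of $H'$ already obtained, namely that $H'$ is the pointwise maximum of the seven affine functions appearing in its defining display, but with the primed parameters $\pi_{11}',\pi_1',\lambda'$ in place of $\pi_{11},\pi_1,\lambda$. Hence $H(\pi_1)\ge H'(\pi_1)$ holds if, and only if, $H(\pi_1)$ lies weakly above each of those seven affine branches evaluated at $t=\pi_1$. First I would compute the left-hand side: since $\pi_{11}\le\pi_1$, the point $\pi_1$ is the junction of the second and third pieces of $H$, and both pieces yield $H(\pi_1)=(\pi_1-\pi_{11})\delta_2\lambda$ (using, as in the derivation of $H$, that $\delta_2\lambda+\delta_2(1-\lambda)+\delta_1=1$). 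This produces the seven candidate inequalities $(1)$--$(7)$ written out in computation (b).

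Next I would prune the list. Inequality $(1)$, $H(\pi_1)\ge 0$, is immediate from $\pi_1\ge\pi_{11}$ and $\delta_2,\lambda\ge 0$; inequality $(7)$, $H(\pi_1)\ge 2\pi_1-1$, then follows since $\pi_1\le 1/2$. The substantive step is to show $(2)\Rightarrow(6)$ and $(3)\Rightarrow(5)$, exactly as in computation (a). For $(2)\Rightarrow(6)$ it suffices to check that the right-hand side of $(2)$ dominates the right-hand side of $(6)$; subtracting and simplifying, this reduces to $2\pi_1(\delta_2\lambda'-1)\ge\delta_2\lambda'-1$, which holds because $\delta_2\lambda'\le 1$ makes $\delta_2\lambda'-1$ non-positive while $2\pi_1\le 1$. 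The implication $(3)\Rightarrow(5)$ is the same computation with the affine term $\delta_1$ added, reducing to $(2\pi_1-1)(\delta_2\lambda'+\delta_1-1)\ge 0$, again a product of two non-positive factors (note $\delta_2\lambda'+\delta_1-1=-\delta_2(1-\lambda')\le 0$). Thus $(1),(5),(6),(7)$ are all consequences of $(2),(3),(4)$, and the seven inequalities collapse to precisely $(2),(3),(4)$.

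Finally I would rewrite the three survivors. Inequality $(2)$ reads $(\pi_1-\pi_{11})\delta_2\lambda\ge(\pi_1-\pi_{11}')\delta_2\lambda'$ and inequality $(4)$ reads $(\pi_1-\pi_{11})\delta_2\lambda\ge\pi_1-\pi_1'$; regrouping the $\delta_2\lambda'$ and $\delta_1$ terms in $(3)$ turns it into $(\pi_1-\pi_{11})\delta_2\lambda\ge(\pi_1-\pi_{11}')\delta_2\lambda'+(\pi_1-\pi_1')\delta_1$, which is exactly the three-line condition in the lemma. Both directions then follow at once: necessity because $H(\pi_1)\ge H'(\pi_1)$ forces $H(\pi_1)$ above the $(2)$-, $(3)$- and $(4)$-branches, and sufficiency because $(2),(3),(4)$ reinstate the discarded branches. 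The only point requiring care is the sign bookkeeping in $(2)\Rightarrow(6)$ and $(3)\Rightarrow(5)$ — multiplying the non-positive quantities $\delta_2\lambda'-1$ and $\delta_2\lambda'+\delta_1-1$ by a factor in $[0,1]$ — but there is no real obstacle; the argument is a finite linear case check strictly parallel to computation (a).
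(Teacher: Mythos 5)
Your proof is correct and follows essentially the same route as the paper's computation (b): evaluate $H(\pi_1)=(\pi_1-\pi_{11})\delta_2\lambda$, compare it against the seven affine branches of $H'$, discard (1) and (7) as trivial, and show that (2) implies (6) and (3) implies (5) via the sign checks $2\pi_1(\delta_2\lambda'-1)\ge\delta_2\lambda'-1$ and $(2\pi_1-1)(\delta_2\lambda'+\delta_1-1)\ge 0$, leaving exactly (2), (3), (4) as the lemma's three conditions. You in fact spell out the $(3)\Rightarrow(5)$ computation and the disposal of (7) slightly more explicitly than the paper does.
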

\medskip

\underline{\textsc{(c): $H(\pi_{10}) \geq H'(\pi_{10})$.}}

\begin{align}
H(\pi_{10}) =(\pi_{10}-\pi_{1}) \notag \\
 \geq 0 \tag{1}\\
\geq (\pi_{10}-\pi_{11}')\delta_2 \lambda' \tag{2}\\
\geq \pi_{10} (\delta_2 \lambda' +\delta_1) -\pi_{11}'\delta_2 \lambda' -\pi_{1}' \delta_1 \tag{3}\\
\geq \pi_{10}-\pi_{1}' \tag{4}\\
\geq \pi_{10}(2-\delta_2 \lambda' -\delta_1) + (1-\pi_{11}')\delta_2\lambda' + (1-\pi_{1}')\delta_1 -1 \tag{5}\\
\geq \pi_{10}(2-\delta_2 \lambda' ) + (1-\pi_{11}')\delta_2\lambda'  -1\tag{6}\\
\geq 2 \pi_{10}-1 \tag{7}
\end{align}
We show that if (2) holds, so does (6). It suffices to show that 
\begin{align*}
(\pi_{10}-\pi_{11}')\delta_2 \lambda' \geq \pi_{10}(2-\delta_2 \lambda' ) + (1-\pi_{11}')\delta_2\lambda'  -1 \Leftrightarrow 2 \pi_{10}(\delta_2 \lambda' -1) \geq (\delta_2 \lambda'-1),
\end{align*}
which holds since $\pi_{1} \leq 1/2$ and $\delta_2\lambda'\leq 1$. Similarly, (3) implies (5). (1) also clearly holds. 

\begin{lemma}
$H(\pi_{10}) \geq H'(\pi_{10})$ if, only if, 
\begin{align*}
(\pi_{10}-\pi_{1})  & \geq (\pi_{10}-\pi_{11}')\delta_2 \lambda' \\
(\pi_{10}-\pi_{1}) & \geq (\pi_{10}-\pi_{11}')\delta_2\lambda' + (\pi_{10}-\pi_{1}')\delta_1 \\
(\pi_{10}-\pi_{1}) & \geq (\pi_{10}-\pi_{1}')
\end{align*}
\end{lemma}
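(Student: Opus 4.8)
The plan is to follow the template already used at the breakpoints $\pi_{11}$ and $\pi_1$, now applied at $t=\pi_{10}$. First I would record that $H(\pi_{10})=\pi_{10}-\pi_1$: this is read off directly from the explicit piecewise formula for $H$, and the two affine pieces of $H$ that meet at $t=\pi_{10}$ are consistent there precisely because of the barycentric identity $\lambda\pi_{11}+(1-\lambda)\pi_{10}=\pi_1$ noted at the outset. Since $H'$ is the pointwise maximum of the seven affine functions in its max-representation, the inequality $H(\pi_{10})\ge H'(\pi_{10})$ is equivalent to the seven scalar inequalities requiring $\pi_{10}-\pi_1$ to dominate each of those affine functions evaluated at $\pi_{10}$; these are exactly the inequalities labelled (1)--(7). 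The converse implication in the Lemma is then immediate: if $H(\pi_{10})\ge H'(\pi_{10})$, then $\pi_{10}-\pi_1$ dominates every affine function in the max-representation, in particular the three yielding (2), (3), (4).

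Next I would discharge the four redundant inequalities. Inequalities (1) and (7) hold unconditionally: (1) is $\pi_{10}-\pi_1\ge 0$, which follows from $\pi_1\le\pi_{10}$ (again via $\lambda\pi_{11}+(1-\lambda)\pi_{10}=\pi_1$ and $\pi_{11}\le\pi_{10}$), and (7) is $\pi_{10}-\pi_1\ge 2\pi_{10}-1$, i.e. $1-\pi_1\ge\pi_{10}$, which holds since $\pi_1\le 1/2$ and $\pi_{10}\le 1/2$. For the remaining two I would show $(2)\Rightarrow(6)$ and $(3)\Rightarrow(5)$ by the same elementary rearrangement used in the $\pi_{11}$ and $\pi_1$ cases: the right-hand side of (2) minus the right-hand side of (6) equals $(2\pi_{10}-1)(\delta_2\lambda'-1)$, which is nonnegative because $\pi_{10}\le 1/2$ and $\delta_2\lambda'\le 1$; and the right-hand side of (3) minus the right-hand side of (5) equals $(2\pi_{10}-1)(\delta_2\lambda'+\delta_1-1)$, again nonnegative because $\delta_2\lambda'+\delta_1\le\delta_2+\delta_1=1$. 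Hence whenever (2) and (3) hold, so do (6) and (5), and the system of seven inequalities collapses to (2), (3), (4) — which is the Lemma.

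I do not expect a genuine obstacle: the argument is a routine sign check. The one point worth flagging, because it differs from the $\pi_{11}$ case, is that there $H(\pi_{11})=0$ forced each summand of (3) to be nonpositive, so (3) was implied by (2) and (4) and could be dropped; here $H(\pi_{10})=\pi_{10}-\pi_1$ is in general strictly positive, that collapsing no longer occurs, and all three of (2), (3), (4) must be retained. The only care needed is to keep the primed quantities $\pi_{11}',\pi_1',\lambda'$ (attached to $g$) separate from the unprimed $\pi_{10},\pi_1,\lambda$ (attached to $f$), and to invoke the bounds $\pi_{10}\le 1/2$, $\delta_2\lambda'\le 1$, $\delta_2\lambda'+\delta_1\le 1$ at the right moments.
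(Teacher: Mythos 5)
Your proof is correct and follows essentially the same route as the paper: read off $H(\pi_{10})=\pi_{10}-\pi_1$, reduce $H(\pi_{10})\ge H'(\pi_{10})$ to the seven affine inequalities via the max-representation of $H'$, discard (1) and (7) as automatic, and show $(2)\Rightarrow(6)$ and $(3)\Rightarrow(5)$ by the factorizations $(2\pi_{10}-1)(\delta_2\lambda'-1)\ge 0$ and $(2\pi_{10}-1)(\delta_2\lambda'+\delta_1-1)\ge 0$, which is exactly the paper's rearrangement. Your explicit treatment of (7) and the closing remark on why (3) can no longer be absorbed into (2) and (4) (unlike at $\pi_{11}$) are small additions but fully consistent with the paper's argument.
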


\medskip

\underline{\textsc{(d): $H(1-\pi_{10}) \geq H'(1-\pi_{10})$.}}

\begin{align}
H(1-\pi_{10}) =1-\pi_{10}-\pi_{1} \notag \\
 \geq 0 \tag{1}\\
\geq (1-\pi_{10}-\pi_{11}')\delta_2 \lambda' \tag{2}\\
\geq (1-\pi_{10}) (\delta_2 \lambda' +\delta_1) -\pi_{11}'\delta_2 \lambda' -\pi_{1}' \delta_1 \tag{3}\\
\geq 1-\pi_{10}-\pi_{1}' \tag{4}\\
\geq (1-\pi_{10})(2-\delta_2 \lambda' -\delta_1) + (1-\pi_{11}')\delta_2\lambda' + (1-\pi_{1}')\delta_1 -1 \tag{5}\\
\geq (1-\pi_{10})(2-\delta_2 \lambda' ) + (1-\pi_{11}')\delta_2 \lambda'  -1\tag{6}\\
\geq 2(1-\pi_{10})-1 \tag{7}
\end{align}
We show that if (6) holds, so does (2). It suffices to show that
\begin{align*}
(1-\pi_{10})(2-\delta_2 \lambda' ) + (1-\pi_{11}')\delta_2\lambda'  -1 \geq (1-\pi_{10}-\pi_{11}')\delta_2 \lambda' \Leftrightarrow 2(1-\pi_{10})(1-\delta_2\lambda') \geq 1-\delta_2\lambda',
\end{align*}
which holds because $1-\pi_{10} \geq 1/2$ and $1 \geq \delta_2\lambda'$. Similarly, (5) implies (3). Clearly, (1) and (7) also hold.  
\medskip 

Consequently, 
\begin{lemma}
$H(1-\pi_{10}) \geq H'(1-\pi_{10})$ if, only if, 
\begin{align*}
(\pi_{10}-\pi_{1})  & \geq (\pi_{10}-\pi_{11}')\delta_2 \lambda' \\
(\pi_{10}-\pi_{1}) & \geq (\pi_{10}-\pi_{11}')\delta_2\lambda' + (\pi_{10}-\pi_{1}')\delta_1 \\
(\pi_{10}-\pi_{1}) & \geq (\pi_{10}-\pi_{1}')
\end{align*}
\end{lemma}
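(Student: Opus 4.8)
Most of the work is already carried out in the computation preceding the statement; what remains is to assemble it. Since $1-\pi_{10}$ is the right endpoint of the fourth linearity interval of $H$, continuity of $H$ gives $H(1-\pi_{10}) = (1-\pi_{10})-\pi_1$; and because $H'$ is by construction the upper envelope of the seven affine functions in its defining display, $H'(1-\pi_{10})$ is the maximum of those seven functions evaluated at $t=1-\pi_{10}$. Hence $H(1-\pi_{10}) \geq H'(1-\pi_{10})$ holds if and only if the seven inequalities (1)--(7) above all hold, inequality $(j)$ being the assertion that $(1-\pi_{10})-\pi_1$ is at least the value at $t=1-\pi_{10}$ of the $j$-th affine piece of $H'$.

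The next step is to discard the redundant inequalities, exactly as in cases (a)--(c). Inequalities (1) and (7) hold unconditionally: (1) is $(1-\pi_{10})-\pi_1 \geq 0$, which follows from $\pi_1 \leq \pi_{10} \leq 1/2$, and (7) is $(1-\pi_{10})-\pi_1 \geq 2(1-\pi_{10})-1$, i.e. $\pi_1\leq \pi_{10}$, again part of the posterior ordering. The computation in the text shows that inequality (6) subsumes inequality (2) and that inequality (5) subsumes inequality (3): in each case one only has to check that one affine piece of $H'$ dominates another at $t = 1-\pi_{10}$, which after cancellation becomes $2(1-\pi_{10})(1-\delta_2\lambda')\geq 1-\delta_2\lambda'$ (respectively the same with $\delta_2\lambda'$ replaced by $\delta_2\lambda'+\delta_1$), and this is true because $1-\pi_{10}\geq 1/2$ while $\delta_2\lambda'+\delta_1\leq\delta_2+\delta_1=1$. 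Therefore only inequalities (4), (5), (6) survive.

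The last step is a short algebraic simplification: expand the affine pieces appearing in (4), (5), (6) at $t=1-\pi_{10}$ and subtract from $(1-\pi_{10})-\pi_1$. One finds that (6) is equivalent to $(\pi_{10}-\pi_1)\geq(\pi_{10}-\pi_{11}')\delta_2\lambda'$, that (5) is equivalent to $(\pi_{10}-\pi_1)\geq(\pi_{10}-\pi_{11}')\delta_2\lambda'+(\pi_{10}-\pi_1')\delta_1$, and that (4) is equivalent to $(\pi_{10}-\pi_1)\geq(\pi_{10}-\pi_1')$; these are precisely the three conditions in the statement.

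A remark worth including is that this list of surviving conditions is identical to the one in Lemma (c). That is not an accident but the reflection symmetry of both experiments about the belief $1/2$: since the posterior distributions induced by $f$ and by $g$ are each symmetric about $1/2$, the primitive identity $H(1-t)=H(t)-2t+1$ (and the same for $H'$) gives $H(1-\pi_{10})-H'(1-\pi_{10})=H(\pi_{10})-H'(\pi_{10})$, so Lemma (d) reduces to Lemma (c). The only point requiring care --- and the main (minor) obstacle --- is the direction of the subsumptions in the pruning step: here (6) subsumes (2) and (5) subsumes (3), the opposite of case (c), which is exactly where that reflection manifests itself in the affine algebra, so one must keep the signs straight.
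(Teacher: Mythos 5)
Your proof is correct and follows essentially the same route as the paper's: evaluate the seven affine pieces of $H'$ at $t=1-\pi_{10}$, discard (1) and (7) as trivial and (2), (3) as implied by (6), (5) via the $2(1-\pi_{10})(1-\delta_2\lambda')\geq 1-\delta_2\lambda'$ cancellation, then simplify the surviving (4)--(6). Your closing remark that the identity $H(1-t)=H(t)-2t+1$ reduces case (d) to case (c) is a nice sharpening of the paper's own observation that the coincidence of conditions ``follows from the symmetry of the problem.''
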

It is worth noting that the conditions are the same as for $H(\pi_{10}) > H'(\pi_{10})$. Again, this follows from the symmetry of the problem. \medskip

\underline{\textsc{(e): $H(1-\pi_{1}) \geq H'(1-\pi_{1})$.}}

\begin{align}
H(1-\pi_{1}) =(1-\pi_{1})(2-\delta_2 \lambda) + (1-\pi_{11})\delta_2\lambda -1 \notag \\
 \geq 0 \tag{1}\\
\geq (1-\pi_{1}-\pi_{11}')\delta_2 \lambda' \tag{2}\\
\geq (1-\pi_{1}) (\delta_2 \lambda' +\delta_1) -\pi_{11}'\delta_2 \lambda' -\pi_{1}' \delta_1 \tag{3}\\
\geq 1-\pi_{1}-\pi_{1}' \tag{4}\\
\geq (1-\pi_{1})(2-\delta_2 \lambda' -\delta_1) + (1-\pi_{11}')\delta_2\lambda' + (1-\pi_{1}')\delta_1 -1 \tag{5}\\
\geq (1-\pi_{1})(2-\delta_2 \lambda' ) + (1-\pi_{11}')\delta_2\lambda'  -1\tag{6}\\
\geq 2(1-\pi_{1})-1 \tag{7}
\end{align}
We show that if (6) holds, so does (2). It suffices to show that
\begin{align*}
(1-\pi_{1})(2-\delta_2 \lambda' ) + (1-\pi_{11}')\delta_2 \lambda'  -1 \geq (1-\pi_{1}-\pi_{11}')\delta_2 \lambda' \Leftrightarrow 2(1-\pi_{1})(1-\delta_2\lambda') \geq 1-\delta_2\lambda',
\end{align*}
which holds because $1-\pi_{1} \geq 1/2$ and $1 \geq \delta_2 \lambda'$. Similarly, (5) implies (3). Clearly, (1) and (7) also hold.  Consequently, 

\begin{lemma}
$H(1-\pi_{1}) \geq H'(1-\pi_{1})$ if, only if, 
\begin{align*}
(\pi_{1}-\pi_{11})\delta_2 \lambda & \geq (\pi_{1}-\pi_{11}')\delta_2 \lambda' \\
(\pi_{1}-\pi_{11})\delta_2 \lambda & \geq (\pi_{1}-\pi_{11}')\delta_2\lambda' + (\pi_{1}-\pi_{1}')\delta_1 \\
(\pi_{1}-\pi_{11}) \delta_2 \lambda & \geq \pi_{1}-\pi_{1}'
\end{align*}
\end{lemma}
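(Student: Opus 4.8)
The plan is to sidestep a fresh six-way case analysis and instead exploit the reflection symmetry of the two dichotomies. Both $f$ and $g$ are built from symmetric Bernoulli draws, so for each of them the distribution of the posterior $\Pi$ of $[\theta=0]$ is invariant under $\pi\mapsto 1-\pi$: every signal realization with posterior $\pi$ and probability $\mu$ is paired with its bit-flipped counterpart, which has posterior $1-\pi$ and the same probability $\mu$. First I would record the consequence of this for $H$ (and, identically, for $H'$), namely
\[
H(1-t)=1-2t+H(t)\qquad\text{for all }t\in[0,1].
\]
This can be read off directly from the explicit piecewise formula for $H$ displayed above — at $t=1-\pi_1$, for instance, $H(1-\pi_1)=(1-\pi_1)(2-\delta_2\lambda)+(1-\pi_{11})\delta_2\lambda-1=1-2\pi_1+(\pi_1-\pi_{11})\delta_2\lambda=1-2\pi_1+H(\pi_1)$ — or, more conceptually, from $\tfrac12 H(t)=\mathbb{E}[\max(t-\Pi,0)]$ together with $\mathbb{E}[\Pi]=\tfrac12$ and $\Pi\overset{d}{=}1-\Pi$.

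Granting the identity, the lemma is immediate: subtracting the same identity for $H'$ gives $H(1-t)-H'(1-t)=H(t)-H'(t)$ for every $t$, hence
\[
H(1-\pi_1)\ \ge\ H'(1-\pi_1)\quad\Longleftrightarrow\quad H(\pi_1)\ \ge\ H'(\pi_1).
\]
By the characterization obtained above in case (b), the right-hand condition is equivalent to $(\pi_1-\pi_{11})\delta_2\lambda\ge(\pi_1-\pi_{11}')\delta_2\lambda'$, to $(\pi_1-\pi_{11})\delta_2\lambda\ge(\pi_1-\pi_{11}')\delta_2\lambda'+(\pi_1-\pi_1')\delta_1$, and to $(\pi_1-\pi_{11})\delta_2\lambda\ge\pi_1-\pi_1'$, which are exactly the three inequalities claimed. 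This is the same mechanism underlying the remark in case (d) that the condition at $1-\pi_{10}$ coincides with the one at $\pi_{10}$.

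If one prefers to keep the section uniform and argue directly as in (a)--(d), I would evaluate $H(1-\pi_1)=1-2\pi_1+(\pi_1-\pi_{11})\delta_2\lambda$ from the piecewise formula and compare it to the seven affine pieces of $H'$ at $t=1-\pi_1$. Pieces (1) and (7) hold trivially because $\pi_1\le\tfrac12$ and $(\pi_1-\pi_{11})\delta_2\lambda\ge0$; since piece $(6)$ minus piece $(2)$ equals $(1-\delta_2\lambda')(1-2\pi_1)\ge0$ and piece $(5)$ minus piece $(3)$ equals $\delta_2(1-\lambda')(1-2\pi_1)\ge0$, the comparisons against $(2)$ and $(3)$ are subsumed by those against $(6)$ and $(5)$. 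Only the comparisons against pieces $(4)$, $(5)$, $(6)$ then survive, and these simplify — by the same reflection cancellation, each primed piece evaluated at $1-\pi_1$ being $1-2\pi_1$ plus its unprimed analogue at $\pi_1$ — to the three displayed inequalities.

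The only genuine obstacle is bookkeeping: one must state the reflection identity $H(1-t)=1-2t+H(t)$ with a little care around the jump points of $F$ (it holds as written, since integration absorbs the jumps), and, in the direct route, one must check that the surviving comparisons collapse to the stated form rather than to a superficially different symmetric variant. Neither point is conceptually hard, and the symmetry route disposes of the second entirely.
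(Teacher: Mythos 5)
Your proof is correct, and it takes a genuinely different route from the paper's. The paper proves this lemma the same way it handles cases (a)--(d): it evaluates $H(1-\pi_1)$ from the piecewise formula and compares it against each of the seven affine pieces of $H'$, showing that the comparison against piece (2) is implied by the one against piece (6) (their difference at $t=1-\pi_1$ is $(1-\delta_2\lambda')(1-2\pi_1)\ge 0$), that (5) implies (3), and that (1) and (7) are automatic, leaving exactly the three displayed inequalities. You instead formalize the symmetry that the paper only invokes informally (``this follows from the symmetry of the problem''): since the posterior distribution of each dichotomy is invariant under $\pi\mapsto 1-\pi$ and has mean $1/2$ under the uniform prior, the identity $H(1-t)=1-2t+H(t)$ holds, the affine correction cancels in $H-H'$, and the condition at $1-\pi_1$ is \emph{literally} the condition at $\pi_1$, which case (b) already characterized. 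I verified the identity against the explicit piecewise formula (e.g.\ $H(1-\pi_1)=1-2\pi_1+(\pi_1-\pi_{11})\delta_2\lambda=1-2\pi_1+H(\pi_1)$) and your reduction of the direct route to pieces (4), (5), (6); both are right. What your approach buys is economy and explanation: it disposes of cases (d), (e), (f) in one stroke as mirror images of (c), (b), (a), and it makes transparent \emph{why} the paper's conditions at $1-\pi_{10}$ and $\pi_{10}$ coincide. What the paper's route buys is uniformity (every one of the six checkpoints is handled by the same mechanical template) and independence from the prior-mean and reflection facts, though both are immediate here. The one point to state carefully if you adopt the symmetry route is the identity $\tfrac12 H(t)=\mathbb{E}\left[(t-\Pi)^+\right]$ justifying $H(1-t)=1-2t+H(t)$; as you note, integration of the step function absorbs the jump points, so this is not a real obstacle.
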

\medskip

\underline{\textsc{(f): $H(1-\pi_{11}) \geq H'(1-\pi_{11})$.}}

\begin{align}
H(1-\pi_{11}) =2(1-\pi_{11})-1 \notag \\
 \geq 0 \tag{1}\\
\geq (1-\pi_{11}-\pi_{11}')\delta_2 \lambda' \tag{2}\\
\geq (1-\pi_{11}) (\delta_2 \lambda' +\delta_1) -\pi_{11}'\delta_2 \lambda' -\pi_{1}' \delta_1 \tag{3}\\
\geq 1-\pi_{11}-\pi_{1}' \tag{4}\\
\geq (1-\pi_{11})(2-\delta_2 \lambda' -\delta_1) + (1-\pi_{11}')\delta_2\lambda' + (1-\pi_{1}')\delta_1 -1 \tag{5}\\
\geq (1-\pi_{11})(2-\delta_2 \lambda' ) + (1-\pi_{11}')\delta_2\lambda'  -1\tag{6}\\
\geq 2(1-\pi_{11})-1 \tag{7}
\end{align}
We show that if (6) holds, so does (2). It suffices to show that
\begin{align*}
 (1-\pi_{11})(2-\delta_2 \lambda' ) + (1-\pi_{11}')\delta_2\lambda'  -1 \geq  (1-\pi_{11}-\pi_{11}')\delta_2 \lambda' \Leftrightarrow 2(1-\pi_{11})(1-\delta_2\lambda') \geq 1-\delta_2\lambda',
\end{align*}
which holds because $1-\pi_{11} \geq 1/2$ and $1 \geq \delta_2\lambda'$. Similarly, (5) implies (3). Clearly, (1) and (7) also hold. Consequently, 
\begin{lemma}
$H(1-\pi_{11}) \geq H'(1-\pi_{11})$ if, only if,   
\begin{align*}
0 & \geq (\pi_{11}-\pi_{11}')\delta_2 \lambda' \\ 
0 & \geq \pi_{11}-\pi_{1}'
\end{align*} 
\end{lemma}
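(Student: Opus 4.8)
The plan is to assemble the facts established just above the statement. Since $H'(\cdot)$ is by construction the pointwise maximum of the seven affine functions $(1)$--$(7)$ listed above, the inequality $H(1-\pi_{11})\ge H'(1-\pi_{11})$ holds if and only if $H(1-\pi_{11})=2(1-\pi_{11})-1$ is at least as large as each of those seven affine functions evaluated at $t=1-\pi_{11}$. First I would clear away the automatic ones: $(7)$ holds with equality, because $2(1-\pi_{11})-1$ is exactly the seventh affine piece at $t=1-\pi_{11}$, and $(1)$ holds because $\pi_{11}\le 1/2$. Then, as in the two lines following the display, I would check that the sixth affine piece dominates the second, and the fifth dominates the third, at $t=1-\pi_{11}$: in each case the difference of the two affine expressions factors as $(1-2\pi_{11})$ times a nonnegative quantity (namely $1-\delta_2\lambda'$, respectively $1-\delta_1-\delta_2\lambda'=\delta_2(1-\lambda')$), which is $\ge 0$ since $\pi_{11}\le 1/2$, $\delta_2\lambda'\le 1$ and $\lambda'\le 1$. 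Hence the inequalities against $(2)$ and $(3)$ are subsumed by those against $(6)$ and $(5)$, and only the inequalities against the affine pieces $(4)$, $(5)$, $(6)$ can bind.

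Next I would simplify those three. Cancelling common terms, the inequality against $(4)$, namely $2(1-\pi_{11})-1\ge 1-\pi_{11}-\pi_1'$, collapses to $\pi_1'\ge\pi_{11}$, i.e. $0\ge\pi_{11}-\pi_1'$; the inequality against $(6)$, after cancelling $(1-\pi_{11})(2-\delta_2\lambda')$, collapses to $(1-\pi_{11})\delta_2\lambda'\ge(1-\pi_{11}')\delta_2\lambda'$, i.e. $0\ge(\pi_{11}-\pi_{11}')\delta_2\lambda'$; and the inequality against $(5)$ collapses to $(\pi_{11}'-\pi_{11})\delta_2\lambda'+\delta_1(\pi_1'-\pi_{11})\ge 0$. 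This last quantity is a nonnegative combination ($\delta_1\ge 0$) of $(\pi_{11}'-\pi_{11})\delta_2\lambda'$ and $\pi_1'-\pi_{11}$, both of which are $\ge 0$ precisely by the two displayed conditions, so the $(5)$-inequality is redundant once the other two hold. Therefore the seven-inequality system is equivalent to the two displayed conditions, which proves the lemma.

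I do not expect a genuine obstacle; the only things needing care are the bookkeeping of which affine piece dominates which (the higher-numbered pieces dominate near $t=1-\pi_{11}\ge 1/2$, mirroring the lower-numbered pieces dominating near $t=\pi_{11}$ in case $(a)$), getting the common factor $(1-2\pi_{11})$ right in the two domination steps, and the degenerate case $\delta_2\lambda'=0$, in which the $(6)$-inequality and the first displayed condition are both vacuous so the equivalence stays exact. A cleaner route, worth recording, is that the posterior distributions induced by $f$ and by $g$ are each symmetric about $1/2$, which forces the identities $H(1-t)=H(t)+(1-2t)$ and $H'(1-t)=H'(t)+(1-2t)$; subtracting gives $H(1-\pi_{11})-H'(1-\pi_{11})=H(\pi_{11})-H'(\pi_{11})$, so this lemma is literally the same statement as the one for case $(a)$ and could simply be deduced from it.
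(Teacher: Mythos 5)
Your proof is correct and follows essentially the same route as the paper's: evaluate the seven affine pieces of $H'$ at $t=1-\pi_{11}$, show that piece (6) dominates (2) and piece (5) dominates (3) via the factor $(1-2\pi_{11})$ (the paper writes this as $2(1-\pi_{11})(1-\delta_2\lambda')\geq 1-\delta_2\lambda'$), dismiss (1) and (7), and reduce the remaining inequalities to the two displayed conditions, with the (5)-inequality redundant as a nonnegative combination of them (a step the paper performs explicitly only in the analogous case (a)). Your closing symmetry identity $H(1-t)=H(t)+(1-2t)$, which makes this lemma literally equivalent to case (a), matches the paper's own passing remark that the pattern "follows from the symmetry of the problem."
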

\medskip 

This completes the computation of the six relevant inequalities. Eliminating redundant inequalities (e.g., $\pi_1' \geq \pi_1$ implies that $\pi_1' \geq \pi_{11}$), we obtain the following proposition:

\begin{proposition}\label{prop:sufficiency}
The experiment $f$ is $\delta$-sufficient for $g$ if, and only if,
\begin{align*}
0 & \geq (\pi_{11}-\pi_{11}')\delta_2 \lambda'  \tag{1}\\
(\pi_{1}-\pi_{11})\delta_2\lambda & \geq (\pi_{1}-\pi_{11}')\delta_2 \lambda' \tag{2}\\
(\pi_{10}-\pi_{1})  & \geq (\pi_{10}-\pi_{11}')\delta_2 \lambda' \Leftrightarrow (\pi_{10}-\pi_{1})(1-\delta_2 \lambda')   \geq (\pi_{1}-\pi_{11}')\delta_2 \lambda' \tag{3} \\
(\pi_{10}-\pi_{1}) & \geq (\pi_{10}-\pi_{11}')\delta_2\lambda' + (\pi_{10}-\pi_{1}')\delta_1 \Leftrightarrow  (\pi_{10}-\pi_{1}) \delta_2(1- \lambda')   \geq (\pi_{1}-\pi_{11}')\delta_2 \lambda' + (\pi_{1}-\pi_{1}')\delta_1 \tag{4}\\
- \pi_{1}  & \geq - \pi_{1}' \tag{5}
\end{align*} 
\end{proposition}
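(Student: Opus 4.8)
The plan is to read Proposition~\ref{prop:sufficiency} off the six lemmas established in parts (a)--(f) above, the only genuine work being to discard the redundant inequalities. First I would recall the chain of reductions already in place: by Theorem~12.4.1 of \cite{blackwell-girshick}, $f$ is $\delta$-sufficient for $g$ if and only if the (re-scaled) double integral $H$ of the posterior CDF induced by $f$ dominates pointwise the analogous function $H'$ induced by $g$; and, since $H$ is affine on each interval between consecutive kinks $\pi_{11},\pi_1,\pi_{10},1-\pi_{10},1-\pi_1,1-\pi_{11}$ (together with the endpoints $0,1$, where $H=H'$ automatically) while $H'$ is convex, this is equivalent to $H(t)\ge H'(t)$ at exactly those six points. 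Parts (a)--(f) translate these six pointwise inequalities into algebraic ones, and by the symmetry $\pi_{ij}\mapsto 1-\pi_{ij}$ of the two dichotomies (noted there) parts (d), (e), (f) reproduce verbatim the inequalities of (c), (b), (a). So $\delta$-sufficiency holds if and only if the two inequalities of the lemma in part (a), the three in part (b), and the three in part (c) all hold.

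The second and main step is to show that three of these eight inequalities follow from the other five, and that the surviving five are precisely (1)--(5). This uses only the ordering $\pi_{11}\le\pi_1\le\pi_{10}\le 1/2$, the nonnegativity of $\delta_1,\delta_2,\lambda,\lambda'$, and $\delta_1+\delta_2=1$ (valid since $T=2$). Concretely: (i) the second inequality of part (a), that is $\pi_{11}\le\pi_1'$, follows from (5) ($\pi_1\le\pi_1'$) together with $\pi_{11}\le\pi_1$; (ii) the third inequality of part (b), that is $(\pi_1-\pi_{11})\delta_2\lambda\ge\pi_1-\pi_1'$, holds trivially because its left side is $\ge 0$ while its right side is $\le 0$ under (5); (iii) the second inequality of part (b) follows by combining (2) with the fact that $(\pi_1-\pi_1')\delta_1\le 0$, again a consequence of (5), since adding a nonpositive quantity only weakens the bound. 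What remains is exactly the first inequality of (a) $=$ (1), the first inequality of (b) $=$ (2), and the three inequalities of (c) $=$ (3), (4), (5). The equivalent ``$\Leftrightarrow$'' forms of (3) and (4) displayed in the statement are pure algebra: for (3), add and subtract $(\pi_{10}-\pi_1)\delta_2\lambda'$; for (4), write $\pi_{10}-\pi_1=(\pi_{10}-\pi_1)(\delta_1+\delta_2)$ and regroup. Since each of (1)--(5) appears literally among the inequalities of (a)--(c), they are also necessary, which closes the argument.

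The proof is essentially bookkeeping, so I do not expect a serious obstacle; the part I would check most carefully is step (iii), and more generally the degenerate parameter configurations (for instance $\delta_2=0$, or $\lambda'=0$, or a prior making $\pi_{11}=\pi_1=\pi_{10}$), to be sure that none of the implications among the inequalities, nor the rewritings of (3) and (4), secretly divide by a quantity that may vanish. Because every manipulation here is additive rather than multiplicative, these cases go through without modification, but it is worth confirming this explicitly before declaring the list of five conditions complete. (As a side remark, Proposition~\ref{prop:sufficiency} is the workhorse behind Theorem~\ref{th:comparison-bernouilli}: the trichotomy (i)--(iii) there is obtained by further tracking which of (2)--(5) bind, but that is a separate step and not part of the present proof.)
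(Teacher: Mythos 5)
Your proposal is correct and is essentially the paper's own argument: the paper reduces $\delta$-sufficiency to $H\ge H'$ at the six kink points via the same linearity/convexity argument, records the resulting inequalities in the lemmas of parts (a)--(f) (with (d),(e),(f) duplicating (c),(b),(a) by symmetry), and then states Proposition~\ref{prop:sufficiency} after ``eliminating redundant inequalities.'' Your steps (i)--(iii) correctly supply the elimination the paper leaves implicit, and the algebraic rewritings of (3) and (4) check out (the latter using $\delta_1+\delta_2=1$).
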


It follows from Proposition \ref{prop:sufficiency} that $\pi_1 \leq \pi_1'$ is necessary, regardless of $\delta_2$, and also sufficient when $\delta_2=0$. In what follows, we assume that $\delta_2>0$, so that $\pi_{11} \leq \pi'_{11}$ (by construction, $\lambda' >0$). \medskip  

We start with the following claim:  $\pi_{10} \geq \pi_{1'}$. To the contrary, assume that $\pi_{10} < \pi_{1}'$. This is equivalent to $\frac{\pi_{1}-\lambda \pi_{11}}{1-\lambda} < \pi_{1}'$, i.e., $\pi_{1} -\pi_{1}' < \lambda(\pi_{11}-\pi_{1}')$. This is impossible because 
 $\pi_{1}-\pi_{1}' \geq \pi_{11} -\pi_{1}'$ and $\lambda  \leq 1$.

 Since $\pi_{10} \geq \pi_{1}'$, it follows that (4) implies (3). To see this, note that if (4) holds, i.e., 
 \begin{align*}
 (\pi_{10}-\pi_{1}) \delta_2(1- \lambda')  & \geq (\pi_{1}-\pi_{11}')\delta_2 \lambda' + (\pi_{1}-\pi_{1}')\delta_1,
 \end{align*}
 then 
 \begin{align*}
 (\pi_{10}-\pi_{1})(1-\delta_2 \lambda') & =(\pi_{10}-\pi_{1}) \delta_2(1- \lambda')  + (\pi_{10}-\pi_{1})\delta_1 \\ 
 & \geq (\pi_{1}-\pi_{11}')\delta_2 \lambda' + (\pi_{1}-\pi_{1}')\delta_1+  (\pi_{10}-\pi_{1})\delta_1\\
  & = (\pi_{1}-\pi_{11}')\delta_2 \lambda' + (\pi_{10}-\pi_{1}')\delta_1 \\
  & \geq (\pi_{1}-\pi_{11}')\delta_2 \lambda',
 \end{align*}
 that is, (3) holds.

 \underline{\textsc{Case 1}.} If $\pi_{1} \leq \pi_{11}'$, then all inequalities (1)-(5) are trivially satisfied, since the LHS are negative, while the RHS positive.\medskip

 \underline{\textsc{Case 2.}} If $\pi_{1}' \geq  \pi_{1} > \pi_{11}' \geq \pi_{11}$. We only need to satisfy (2) and (4), i.e., 
  \begin{align*}
(\pi_{1}-\pi_{11})\delta_2 \lambda & \geq (\pi_{1}-\pi_{11}')\delta_2 \lambda' \tag{2}\\
\frac{\lambda}{1-\lambda} (\pi_{1}-\pi_{11}) \delta_2(1- \lambda')   & \geq (\pi_{1}-\pi_{11}')\delta_2 \lambda' + (\pi_{1}-\pi_{1}')\delta_1 \tag{4}  
  \end{align*}
 
 Suppose that 
 \[\frac{\lambda}{1-\lambda}(\pi_{1} - \pi_{11}) (1-\lambda') \geq (\pi_1-\pi_{11}')\lambda'.\]
 Clearly, (4) is satisfied since $\pi_1 \leq \pi_{1}'$. Moreover, (3) is also satisfied if $\lambda \geq \lambda'$ since $\pi_1 - \pi_{11} \geq \pi_1 -\pi_{11'}$. If $\lambda' > \lambda$, we have 
\begin{align*}
 (\pi_{1}-\pi_{11})\lambda \geq \underbrace{\frac{1-\lambda}{1-\lambda'}}_{>1}(\pi_{1}-\pi_{11}') \lambda' \geq (\pi_{1}-\pi_{11}') \lambda',
 \end{align*}
i.e., (3) is satisfied. Thus, the only remaining case is when 
\[\frac{\lambda}{1-\lambda}(\pi_{1} - \pi_{11}) (1-\lambda') < (\pi_1-\pi_{11}')\lambda',\]
 in which case we must impose (3) and (4). 
 
 Finally, it is immediate to verify that the condition  \[\frac{\lambda}{1-\lambda}(\pi_{1} - \pi_{11}) (1-\lambda') \geq (\pi_1-\pi_{11}')\lambda'\] is identical to the one in the main text. This completes the proof.

 \subsection{Comparison of The Second-period Experiment (Viewed as A Static Experiment)}
 Suppose that we want to compare the static experiment $f^2$ and $g^2$, i.e., two independent draws from Bernouilli distributions with probabilities $(p,q)$ and $(p',q')$, respectively. (Equivalently, $\delta_2=1$.) Repeating the same arguments as above (with no $\pi_{1}$), we obtain that 
 
 \begin{lemma}The experiment $f^2$ is sufficient for $g^2$ if, and only if, 
 \begin{align*}
0 & \geq (\pi_{11}-\pi_{11}')\lambda', \\
0 & \geq (\pi_{11}-\pi_{11}') \lambda' - (\pi_{11}-\pi_{10}')(1-\lambda'), \\ 
(\pi_{10}-\pi_{11})\lambda & \geq   (\pi_{10}-\pi_{11}') \lambda' - (\pi_{10}-\pi_{10}')(1-\lambda'), \\ 
(\pi_{10}-\pi_{11})\lambda & \geq   (\pi_{10}-\pi_{11}') \lambda'.
 \end{align*}
\end{lemma}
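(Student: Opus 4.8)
The plan is to rerun the proof of Theorem~\ref{th:comparison-bernouilli} with the degenerate discount factor $\delta=(0,1)$, so that the mixture $\delta_1 f^1+\delta_2 f^2$ collapses to $f^2$ and every quantity attached to the first-period posterior $\pi_1$ disappears (formally, $\pi_1=1-p$ still equals $\lambda\pi_{11}+(1-\lambda)\pi_{10}$, but it is no longer an atom of the induced distribution of posteriors and hence no longer a breakpoint of the relevant integrated c.d.f.). Since $f^2$ and $g^2$ are dichotomies, I would invoke Theorem~12.4.1 of \cite{blackwell-girshick}: under the uniform prior, $f^2$ is sufficient for $g^2$ if and only if the distribution of posteriors induced by $f^2$ is a mean-preserving spread of the one induced by $g^2$. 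Writing $F$ and $G$ for the two c.d.f.'s on $[0,1]$, supported on $\{\pi_{11},\pi_{10},1-\pi_{10},1-\pi_{11}\}$ with masses $\tfrac{\lambda}{2},\tfrac{1-\lambda}{2},\tfrac{1-\lambda}{2},\tfrac{\lambda}{2}$ and similarly with primes, the spread condition is equivalent to $H(t):=2\int_0^t F \ge H'(t):=2\int_0^t G$ for every $t\in[0,1]$, where $H$ and $H'$ are convex, increasing, piecewise linear, with $H(0)=H'(0)=0$ and $H(1)=H'(1)=1$ (so the equal-means requirement of a mean-preserving spread is automatic).

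Next I would write $H$ and $H'$ explicitly as upper envelopes of their five affine pieces, with breakpoints at $\pi_{11},\pi_{10},1-\pi_{10},1-\pi_{11}$ and at the corresponding primed values, and then reduce the inequality $H\ge H'$ on $[0,1]$ to the four inequalities $H(t^*)\ge H'(t^*)$ at the breakpoints $t^*$ of $H$, exactly as in the proof of Theorem~\ref{th:comparison-bernouilli}: on each segment between consecutive breakpoints $H$ is affine and so dominates the chord of the convex function $H'$, while on $[0,\pi_{11}]$ and $[1-\pi_{11},1]$ one uses in addition $H'\ge 0$, $H'(0)=0$ and $H'(1)=1$. I would also exploit the reflection symmetry of the problem: since the distribution of posteriors is symmetric about $1/2$, one has $H(1-t)=H(t)+1-2t$ and likewise for $H'$, so the inequality at $t^*$ coincides with the one at $1-t^*$; hence only the two breakpoints $t^*=\pi_{11}$ and $t^*=\pi_{10}$ need to be examined.

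For each of these two breakpoints, $H(t^*)\ge H'(t^*)$ unfolds into ``$H(t^*)\ge$ (each of the five affine pieces of $H'$ evaluated at $t^*$)''. I would discard the comparisons that hold trivially, namely against the piece $0$ (because $H\ge 0$) and against the piece $2t-1$ (because $t^*\le 1/2$ and $H$ is convex with $H(1)=1$), and then show that the comparison against the fourth, higher-slope piece of $H'$ is implied by the comparison against its third piece, since the difference of these two affine pieces evaluated at $t^*$ has the sign of $\pi_{11}+\pi_{10}'-1$ (resp.\ $\pi_{10}+\pi_{10}'-1$), which is $\le 0$ because all the posteriors involved lie in $[0,1/2]$. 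Collecting the surviving inequalities at $t^*=\pi_{11}$ and at $t^*=\pi_{10}$, using $H(\pi_{11})=0$ and $H(\pi_{10})=\pi_{10}-\mu=\lambda(\pi_{10}-\pi_{11})$ with $\mu:=\lambda\pi_{11}+(1-\lambda)\pi_{10}$, yields, after collecting terms, the four inequalities in the statement. The main obstacle is precisely this last step: tracking which affine piece of $H'$ is which at each breakpoint, carrying out the redundancy elimination in the right order, and normalizing the surviving inequalities into the displayed form. Everything else is a direct transcription of the proof of Theorem~\ref{th:comparison-bernouilli}, arithmetic-heavy but routine.
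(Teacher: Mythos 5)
Your proposal is correct and follows essentially the same route as the paper, which proves this lemma simply by rerunning the proof of Theorem \ref{th:comparison-bernouilli} with $\delta=(0,1)$ so that the breakpoint $\pi_1$ disappears and only $H(\pi_{11})\geq H'(\pi_{11})$ and $H(\pi_{10})\geq H'(\pi_{10})$ remain to be unfolded against the affine pieces of $H'$. The only (immaterial) differences are that you invoke the reflection symmetry $H(1-t)=H(t)+1-2t$ upfront to dispose of the breakpoints $1-\pi_{10},1-\pi_{11}$ rather than checking them and observing the resulting inequalities coincide, and you eliminate the fourth affine piece of $H'$ via its third piece rather than its second; both steps are valid and yield the same four surviving inequalities.
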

  Since $\pi_{11} \lambda + \pi_{10} (1-\lambda) =\pi_{1}$ and $\pi_{11}' \lambda' + \pi_{10} (1-\lambda') =\pi_{1}'$, the inequalities are equivalent to 
   \begin{align*}
0 & \geq (\pi_{11}-\pi_{11}')\lambda', \\
0 & \geq \pi_{11}-a'_2, \\ 
\pi_{10}-\pi_{1} & \geq   \pi_{10}-\pi_{1}', \\ 
\pi_{10}-\pi_{1} & \geq   (\pi_{10}-\pi_{11}') \lambda'.
 \end{align*}
  (This corresponds to $H(\pi_{11}) \geq H'(\pi_{11})$ and $H(\pi_{10}) \geq H'(\pi_{10})$ since $H$ is linear between $\pi_{11}$ and $\pi_{10}$, and, $1-\pi_{10}$ and $1-\pi_{11}$ when $\delta_2=1$.) 
  
Since $\lambda' > 0$, this is equivalent to:
  \begin{align*}
  \pi_{11} \leq \pi_{11}', \\
  \pi_{1} \leq \pi_{1}', \\ 
  (\pi_{10}-\pi_{1})(1-\lambda') = \frac{\lambda}{1-\lambda}(\pi_{1}-\pi_{11})(1-\lambda') \geq (\pi_{1}-\pi_{11}')\lambda'.
  \end{align*}
  
 The last inequality is equivalent to: 
 \begin{align*}
 \frac{1-\lambda'}{1-\lambda} \geq \frac{(1-p)\lambda'-(1-p')(1-q')}{(1-p)\lambda - (1-p)(1-q)}. 
 \end{align*} 
 Now, if $\lambda \geq \lambda'$, $(1-\lambda')/(1-\lambda) \geq 1$, it suffices that 
 \begin{align*}
 (1-p)\lambda - (1-p)(1-q) \geq (1-p)\lambda'-(1-p')(1-q') \Leftrightarrow
 \lambda-\lambda' \geq (1-q)-\frac{1-p'}{1-p}(1-q'),
 \end{align*}
 which is satisfied if $p \geq p'$ and $q \geq q'$.

Conversely, if $p \geq p'$ and $q \geq q'$, then 
 \begin{align*}
 \lambda-\lambda' & = 2(pq -p'q') - (p-p') - (q-q') \\
 &\geq 2(pq -p'q') - (p-p')q - (q-q')p \\
 & \geq p'(q -q') + q'(p-p') \geq 0,
 \end{align*}
 hence 
 \[\frac{\lambda}{1-\lambda}(\pi_{1}-\pi_{11})(1-\lambda') \geq (\pi_{1}-\pi_{11}')\lambda'. \]
 We therefore recover the known result that if both Bernouilli are more informative, i.e., $p \geq p'$ and $q \geq q'$, so is their combinations.\medskip 
 
 We have the following proposition. 
 
 \begin{proposition} The experiment $f^2$ is sufficient for $g^2$ if, and only if, $\pi_{11} \leq \pi_{11}' < \pi_{1} \leq \pi_{1}'$, and 
 \[\frac{\lambda}{1-\lambda}(\pi_{1}-\pi_{11})(1-\lambda') \geq (\pi_{1}-\pi_{11}')\lambda'. \]
 \end{proposition}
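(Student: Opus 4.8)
The plan is to use that $f^2$ and $g^2$ are dichotomies and to compare them through the Blackwell--Girshick criterion (Theorem 12.4.1 of \cite{blackwell-girshick}): $f^2$ is sufficient for $g^2$ if and only if the distribution $F$ over posteriors induced by $f^2$ (under the uniform prior, which is without loss) is a mean-preserving spread of the distribution $G$ induced by $g^2$. Writing $H(t):=\int_0^tF(\pi)\,d\pi$ and $H'(t):=\int_0^tG(\pi)\,d\pi$, this is equivalent to $H(t)\geq H'(t)$ for all $t\in[0,1]$, the endpoint equalities $H(0)=H'(0)$ and $H(1)=H'(1)$ being automatic since $f^2$ and $g^2$ share the prior mean $1/2$. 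This is exactly the $\delta=(0,1)$ instance of the construction already carried out for Theorem \ref{th:comparison-bernouilli} and Proposition \ref{prop:sufficiency}, so the idea is to rerun that argument with $\delta_1=0$, $\delta_2=1$, where the structure simplifies substantially.

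Indeed, with $\delta_1=0$ only the four posteriors $\pi_{11}\leq\pi_{10}\leq 1-\pi_{10}\leq 1-\pi_{11}$ carry mass, so $F$ is a four-step function, $H$ is piecewise linear with kinks precisely at $\pi_{11},\pi_{10},1-\pi_{10},1-\pi_{11}$, and likewise $H'$ has kinks at the primed values. Since $H$ is affine between consecutive kinks and $H'$ is convex, $H\geq H'$ holds globally if and only if it holds at each kink of $H$; and because both experiments are symmetric under $\pi\mapsto 1-\pi$, one checks that $H(1-t)-H'(1-t)=H(t)-H'(t)$, so the inequalities at $1-\pi_{10}$ and $1-\pi_{11}$ coincide with those at $\pi_{10}$ and $\pi_{11}$. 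Thus everything reduces to the two inequalities $H(\pi_{11})\geq H'(\pi_{11})$ and $H(\pi_{10})\geq H'(\pi_{10})$.

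I would then evaluate both sides explicitly — $2H(\pi_{11})=0$ and $2H(\pi_{10})=\lambda(\pi_{10}-\pi_{11})$, with the analogous primed quantities plus the relevant affine pieces of $H'$ — and rewrite the resulting linear inequalities using the martingale identities $\lambda\pi_{11}+(1-\lambda)\pi_{10}=\pi_1$ and $\lambda'\pi_{11}'+(1-\lambda')\pi_{10}'=\pi_1'$, exactly as in the proof of Theorem \ref{th:comparison-bernouilli}. Several of the bounds obtained are dominated by others because each posterior lies in $[0,1/2]$ or $[1/2,1]$ and $\lambda'\leq 1$; discarding those, one is left with $\pi_{11}\leq\pi_{11}'$, $\pi_1\leq\pi_1'$, and the single genuine inequality $\tfrac{\lambda}{1-\lambda}(\pi_1-\pi_{11})(1-\lambda')\geq(\pi_1-\pi_{11}')\lambda'$ (the hypothesis $\pi_{11}'<\pi_1$ being the non-degeneracy condition under which sufficiency is not vacuous). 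A final short split on $\lambda\geq\lambda'$ versus $\lambda<\lambda'$ confirms that this last inequality also absorbs the remaining middle bound, and that it holds automatically when $p\geq p'$ and $q\geq q'$, recovering the classical statement that a coordinatewise more informative pair of Bernoulli draws is more informative.

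The main obstacle is the bookkeeping in the last step: listing the handful of affine bounds produced by $H(\pi_{11})\geq H'(\pi_{11})$ and $H(\pi_{10})\geq H'(\pi_{10})$, identifying which are redundant (here the sign conditions $\pi_\bullet\leq 1/2\leq 1-\pi_\bullet$ and $\lambda'\leq 1$ do the work), and checking that the surviving bounds are exactly equivalent to the clean condition in the statement. The one nontrivial auxiliary fact needed is $\pi_{10}\geq\pi_1'$, an easy consequence of $\pi_1\geq\pi_{11}$ and $\lambda\leq 1$, which is what lets the inequality coming from $H(\pi_{10})\geq H'(\pi_{10})$ subsume the intermediate one.
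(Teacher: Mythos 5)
Your proposal is correct and follows essentially the same route as the paper: the paper likewise invokes Theorem 12.4.1 of Blackwell--Girshick, specializes the piecewise-linear function $H$ from the proof of Theorem \ref{th:comparison-bernouilli} to $\delta_1=0$ (so that $\pi_1$ is no longer a kink), checks the mean-preserving-spread inequality only at $\pi_{11}$ and $\pi_{10}$ by linearity, convexity, and symmetry, and then uses the martingale identities $\lambda\pi_{11}+(1-\lambda)\pi_{10}=\pi_1$ to eliminate redundant bounds. The auxiliary fact $\pi_{10}\geq\pi_1'$ and the final case split on $\lambda$ versus $\lambda'$ that you flag are exactly the steps the paper carries out.
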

 
 \subsection{A Remark}
 So far, we have assumed that $p\geq q$. If, instead, we assume that $q>p$, then $\pi_{01} \leq  1/2 \leq \pi_{10}$, $\lambda \pi_{11} + \lambda \pi_{01}=1-q$, and the function $H_{q>p}$ becomes:
 \begin{align*}
 2* H_{q>p}(t) =
 \begin{cases}
 0 & t < \pi_{11},\\
 \lambda t - \lambda \pi_{11} & \pi_{11} \leq t < \pi_{01},\\
 t -\lambda \pi_{11} - (1-\lambda) \pi_{01} & \pi_{01} \leq t < \pi_{10}, \\ 
 t(2-\lambda) + (1-\pi_{11})\lambda- 1 & \pi_{10} \leq t <\pi_{00}, \\ 
 2 t -1 & \pi_{00} \leq t.
 \end{cases} 
 \end{align*} 
 Note that we write $H_{q>p}$ to highlight the fact that $q>p$. For completeness, recall that $H_{p >q}$ is given by: 
 \begin{align*}
 2* H_{p>q}(t) =
 \begin{cases}
 0 & t < \pi_{11},\\
 \lambda t - \lambda \pi_{11} & \pi_{11} \leq t < \pi_{10},\\
 t -\lambda \pi_{11} - (1-\lambda) \pi_{10} & \pi_{10} \leq t < \pi_{01}, \\ 
 t(2-\lambda) + (1-\pi_{11})\lambda- 1 & \pi_{01} \leq t <\pi_{00}, \\ 
 2 t -1 & \pi_{00} \leq t.
 \end{cases} 
 \end{align*} 
 In particular, for the inequality $H_{q>p}(\pi_{01}) \geq H'_{p'>q'}(\pi_{01})$ to be satisfied, we need 
 \begin{align*}
 \lambda(\pi_{01} - \pi_{11})  \geq \pi_{01} - \lambda' \pi_{11}' - (1-\lambda')\pi_{10}'.
 \end{align*}
 The inequality is equivalent to: 
 \begin{align*}
1 -p' = \lambda' \pi_{11}' - (1-\lambda')\pi_{10}' \geq \lambda \pi_{01} + (1-\lambda) \pi_{11} = 1-q,
 \end{align*}
 hence we need $q \geq p'$.

 Similarly, for the inequality $H_{p>q}(\pi_{10}) \geq H'_{q'>p'}(\pi_{10})$, we need 
 \begin{align*}
  \lambda(\pi_{10} - \pi_{11})  \geq \pi_{10} - \lambda' \pi_{11}' - (1-\lambda')\pi_{01}'. 
 \end{align*} 
 The inequality is equivalent to: 
  \begin{align*}
1 -q' = \lambda' \pi_{11}' - (1-\lambda')\pi_{01}' \geq \lambda \pi_{10} + (1-\lambda) \pi_{11} = 1-p,
 \end{align*}
 hence we need $p \geq q'$. 
 
 From our comparison of the cases $p \geq q$ and $p' \geq q'$ (and the symmetric case $q \geq p$ and $q' \geq p'$), the following proposition immediately follows: 
 
 \begin{proposition} Let $f^2$ (resp., $g^2$) be an experiment consisting of two independent Bernouilli experiments with parameters $(p,q)$ (resp., $(p',q')$). 
 The experiment $f^2$ is sufficient for the experiment $g^2$ only if $\max(p,q) \geq \max(p',q')$. 
 \end{proposition}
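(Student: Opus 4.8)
The plan is to reduce the sufficiency of the dichotomy $f^2$ for the dichotomy $g^2$ to the pointwise inequality $H(t)\ge H'(t)$ for all $t\in[0,1]$, where $H$ and $H'$ are the (twice-)integrated cumulative posterior distributions introduced above and the equivalence with sufficiency is Theorem 12.4.1 of \cite{blackwell-girshick}. I keep the maintained assumption $p,q,p',q'\in[1/2,1]$, under which $\max(p,q)=p$ when $p\ge q$ and $\max(p,q)=q$ when $q>p$, and similarly for the primed parameters. Since $f^2$ (resp. $g^2$) consists of two \emph{independent} Bernoulli draws, the order of the two draws is informationally irrelevant, so it suffices to consider the four cases obtained by fixing whether $p\ge q$ or $q>p$ and whether $p'\ge q'$ or $q'>p'$.

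In the two ``matched'' cases there is essentially nothing new: when $p\ge q$ and $p'\ge q'$, the characterization of the sufficiency of $f^2$ for $g^2$ established earlier in this section already contains $\pi_{1}\le\pi_{1}'$ as a necessary condition, and since $\pi_{1}=1-p$ and $\pi_{1}'=1-p'$ this forces $p\ge p'$, i.e. $\max(p,q)\ge\max(p',q')$; the case $q>p$, $q'>p'$ follows by exchanging the two independent components of each experiment.

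The content lies in the two ``mismatched'' cases, where $f^2$ and $g^2$ have different posterior orderings and one must use the appropriate form of the integrated c.d.f.: when $q>p$ the breakpoints are $\pi_{11}<\pi_{01}<\pi_{10}<\pi_{00}$ rather than $\pi_{11}<\pi_{10}<\pi_{01}<\pi_{00}$, with $H_{q>p}$ as computed above (and analogously $H_{p>q}$). I would then evaluate the necessary inequality $H(t)\ge H'(t)$ at a single suitably chosen breakpoint of $f^2$. For $f^2$ with $p\ge q$ against $g^2$ with $q'>p'$, take $t=\pi_{10}$: the left side is $H_{p>q}(\pi_{10})$, equal to $\lambda(\pi_{10}-\pi_{11})$, while the right side $H'_{q'>p'}(\pi_{10})$ dominates the affine piece $\pi_{10}-\lambda'\pi_{11}'-(1-\lambda')\pi_{01}'$ because $H'$ is the upper envelope of its affine pieces. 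Using $\lambda\pi_{11}+(1-\lambda)\pi_{10}=\pi_{1}=1-p$ and the $q>p$-analogue $\lambda'\pi_{11}'+(1-\lambda')\pi_{01}'=1-q'$, this collapses to $1-p\le 1-q'$, that is $p\ge q'$, which is precisely $\max(p,q)\ge\max(p',q')$. The symmetric mismatched case ($q>p$, $p'\ge q'$) is treated by evaluating at $t=\pi_{01}$ and yields $q\ge p'$. Combining the four cases gives the proposition.

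The main obstacle is bookkeeping rather than conceptual: getting the breakpoint structure of $H$ and the accompanying identities ($\lambda\pi_{11}+(1-\lambda)\pi_{01}=1-q$, and so on) right in the $q>p$ regime, and choosing the breakpoint so that the necessary inequality simplifies to the clean statement about $\max$ rather than to one of the messier mixed conditions. One minor point to verify carefully is that the affine function used to lower-bound $H'(\pi_{10})$ (resp. $H'(\pi_{01})$) is genuinely one of the pieces of $H'$, so that the envelope property applies — but this is immediate once $H'$ has been written as a maximum of affine functions.
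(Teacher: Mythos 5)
Your proposal is correct and follows essentially the same route as the paper: it splits into matched and mismatched orderings of $(p,q)$ versus $(p',q')$, extracts $p\ge p'$ from the condition $\pi_1\le\pi_1'$ in the matched case, and in the mismatched cases evaluates the necessary inequality $H\ge H'$ at the breakpoint $\pi_{10}$ (resp.\ $\pi_{01}$) using the identities $\lambda\pi_{11}+(1-\lambda)\pi_{10}=1-p$ and $\lambda'\pi_{11}'+(1-\lambda')\pi_{01}'=1-q'$ to obtain $p\ge q'$ (resp.\ $q\ge p'$). Your explicit appeal to the upper-envelope property of $H'$ to justify lower-bounding it by a single affine piece is a slightly more careful phrasing of the same step the paper performs implicitly.
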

 
 This explains the necessity of $p \geq p'$ in our main characterization since we either need $f^1$ to be sufficient for $g^1$ or $f^2$ to be sufficient for $g^2$ for $f$ to be $\delta$-sufficient for $g$. The two conditions require that $p \geq p'$. 

\subsection{Proof of Proposition \ref{prop:ex-controlled}}\label{app:ex-controlled}
The proof is constructive. For all $z$, for all $k$, for all $\hat{k}$,  define 
\begin{align*}
\gamma(z|\vec{z}) &= \delta_1\frac{\beta_1}{\alpha_1+ \delta_2\sum_{k,\hat{z}}\alpha^k_2\xi(k|\hat{z})\gamma(\hat{z}|\emptyset)},\\
\gamma(k, (z,\emptyset)|\vec{z}) & = \delta_2\frac{\beta_1\xi(k|z)}{\alpha_1+ \delta_2\sum_{k,\hat{z}}\alpha^k_2\xi(k|\hat{z})\gamma(\hat{z}|\emptyset)},\\
\gamma(k,(\emptyset, z)|\vec{z}) & = \delta_2 \frac{\beta^k_2\xi(k|\emptyset)}{\alpha_1+ \delta_2\sum_{k,\hat{z}}\alpha^k_2\xi(k|\hat{z})\gamma(\hat{z}|\emptyset)}, \\
\gamma(\emptyset|\vec{z}) & = \frac{\delta_1(1-\beta_1)}{\delta_1(1-\beta_1) + \delta_2 \sum_{k}\beta_{\emptyset}^k\xi(k|\emptyset)}\left(1-\frac{\beta_1+ \delta_2\sum_{k}\beta^k_2\xi(k|\emptyset)}{\alpha_1+ \delta_2\sum_{k,\hat{z}}\alpha^k_2\xi(k|\hat{z})\gamma(\hat{z}|\emptyset)}\right),\\
\gamma(k, (\emptyset,\emptyset)|\vec{z}) & = \frac{\delta_2 \sum_{k}\beta_{\emptyset}^k\xi(k|\emptyset)}{\delta_1(1-\beta_1) + \delta_2 \sum_{k}\beta_{\emptyset}^k\xi(k|\emptyset)} \left(1-\frac{\beta_1+ \delta_2\sum_{k}\beta^k_2\xi(k|\emptyset)}{\alpha_1+ \delta_2\sum_{k,\hat{z}}\alpha^k_2\xi(k|\hat{z})\gamma(\hat{z}|\emptyset)}\right), \\
\end{align*}
where $ \vec{z}$ is either $(z), (\hat{k}, (z,\emptyset))$, or $(\hat{k},(\emptyset, z))$. Similarly, let
\begin{align*}
\gamma(\emptyset|\emptyset) &= \frac{\delta_1(1-\beta_1)}{\delta_1(1-\beta_1) + \delta_2 \sum_{k}\beta_{\emptyset}^k\xi(k|\emptyset)}, \\
\gamma(k, (\emptyset,\emptyset)|\emptyset) & = \frac{\delta_2 \beta_{\emptyset}^k \xi(k|\emptyset)}{\delta_1(1-\beta_1) + \delta_2 \sum_{k}\beta_{\emptyset}^k\xi(k|\emptyset)}, \\
\gamma(\emptyset|\hat{k}, (\emptyset,\emptyset)) & = \frac{\delta_1(1-\beta_1)}{\delta_1(1-\beta_1) + \delta_2 \sum_{k}\beta_{\emptyset}^k\xi(k|\emptyset)}, \\
\gamma(k, (\emptyset,\emptyset)|\hat{k}, (\emptyset,\emptyset)) & = \frac{\delta_2 \beta_{\emptyset}^k \xi(k|\emptyset)}{\delta_1(1-\beta_1) + \delta_2 \sum_{k}\beta_{\emptyset}^k\xi(k|\emptyset)},
\end{align*}
for all $k$, for all $\hat{k}$. The key observation to make is that the garbling is independent of the realized control $\hat{k}$ and, moreover, the signal $z$ is never garbled (i.e., if $z$ is observed, the support of $\gamma$ does not include  $z' \neq z$). 
\medskip

Clearly, the constructed garbling is well-defined if
\begin{align*}
\beta_1+ \delta_2\sum_{k}\beta^k_2\xi(k|\emptyset) & \leq \alpha_1+ \delta_2\sum_{k,\hat{z}}\alpha^k_2\xi(k|\hat{z})\gamma(\hat{z}|\emptyset)\\
& =  \alpha_1+ \delta_2\sum_{k}\alpha^k_2\xi(k|\emptyset) \\
& + \delta_2 \sum_{k,\hat{k}}\left[\xi(k|\hat{k},(\emptyset,\emptyset))-\xi(k|\emptyset)\right]\frac{\delta_2 \beta_0^{\hat{k}} \xi(\hat{k}|\emptyset)}{\delta_1(1-\beta_1) + \delta_2 \sum_{k}\beta_0^k\xi(k|\emptyset)} \\
&  = \alpha_1+ \delta_2\sum_{k}\alpha^k_2\xi(k|\emptyset).
\end{align*}
Since $\alpha_1 + \delta_2 \alpha_2^k \geq \beta_1 + \delta_2 \beta_2^k$ for all $k$, and $\xi(k|\emptyset)\geq 0$, we have
\begin{align*}
(\alpha_1 + \delta_2 \alpha_2^k)\xi(k|\emptyset) \geq (\beta_1 + \delta_2 \beta_2^k)\xi(k|\theta), 
\end{align*}
for all $k$. Summing over $k$ gives the desired result. 
\medskip

Lastly, because
\begin{align*}
1- \beta_1 - \delta_2 \sum_{k}\beta_2^k \xi(k|\theta) = 1- \beta_1 - \sum_{k}(1-\beta_1 -\beta_{\emptyset}^k) \xi(k|\theta) = \delta_1 (1-\delta_1) + \delta_2 \sum_{k}\beta_{\emptyset}^k\xi(k|\emptyset),
\end{align*}
it is routine to verify that the constructed garbling indeed satisfies the condition for $\delta$-sufficiency. This completes the proof.

\bibliographystyle{ecta}
\bibliography{references-renou-venel.bib}
\end{document}